\newcolumntype{P}[1]{>{\centering\arraybackslash}p{#1}}
\definecolor{Gray}{gray}{0.9}
\newcommand{\argmax}{\operatornamewithlimits{arg\,max}}
\newcommand{\abs}[1]{\left\lvert{#1}\right\rvert}
\newcommand{\norm}[1]{\left\lVert{#1}\right\rVert}
\newcommand{\nn}{\nonumber}
\newtheorem{theorem}{Theorem}
\newtheorem{prop}{Proposition}
\newtheorem{lemma}{Lemma}
\newtheorem{example}{Example}
\newtheorem{cor}{Corollary}
\newtheorem{remark}{Remark}
\newcommand\blfootnote[1]{%
  \begingroup
  \renewcommand\thefootnote{}\footnote{#1}%
  \addtocounter{footnote}{-1}%
  \endgroup
}
\begin{document}
%
\title{Fundamental Limits of Covert Communication over MIMO AWGN Channel}
%
%
%
%
\author{\IEEEauthorblockN{Amr Abdelaziz and C. Emre Koksal} \\
 \IEEEauthorblockA{Department of Electrical and Computer Engineering \\
The Ohio State University\\
Columbus, Ohio 43201 
}}
\maketitle

\begin{abstract}
Fundamental limits of covert communication have been studied for different models of scalar channels. It was shown that, over $n$ independent channel uses, $\mathcal{O}(\sqrt{n})$ bits can be transmitted reliably over a public channel while achieving an arbitrarily low probability of detection (LPD) by other stations. This result is well known as the square-root law and even to achieve this diminishing rate of covert communication, all existing studies utilized some form of secret shared between the transmitter and the receiver. In this paper, we establish the limits of LPD communication over the MIMO AWGN channel. In particular, using relative entropy as our LPD metric, we study the maximum codebook size for which the transmitter can guarantee reliability and LPD conditions are met. We first show that, the optimal codebook generating input distribution under $\delta$-PD constraint is the zero-mean Gaussian distribution. Then, assuming channel state information (CSI) on only the main channel at the transmitter, we derive the optimal input covariance matrix, hence, establishing scaling laws of the codebook size. We evaluate the codebook scaling rates in the limiting regimes for the number of channel uses (asymptotic block length) and the number of antennas (massive MIMO). We show that, in the asymptotic block-length regime, square-root law still holds for the MIMO AWGN. Meanwhile, in massive MIMO limit, the codebook size, while it scales linearly with $\sqrt{n}$, it scales exponentially with the number of transmitting antennas. Further, we derive equivalent results when \textit{no shared secret} is present. For that scenario, in the massive MIMO limit, higher covert rate up to the non-LPD constrained capacity still can be achieved, yet, with much slower scaling compared to the scenario with shared secret. The practical implication of our result is that, MIMO has the potential to provide a substantial increase in the file sizes that can be covertly communicated subject to a reasonably low delay.
\end{abstract}

\begin{IEEEkeywords}
LPD communication, Covert MIMO Communication, MIMO physical layer security, LPD Capacity.
\end{IEEEkeywords}

\blfootnote{This work was submitted in part to IEEE CNS-2017.}
\blfootnote{This work was in part supported by the National Science Foundation under Grants NSF NeTs 1618566 and 1514260 and Office of Naval Research under Grant N00014-16-1-2253.}

%
\IEEEpeerreviewmaketitle

\section{Introduction}
Conditions for secure communication under a passive eavesdropping attack fall in two broad categories: 1) \textit{low probability of intercept (LPI)}. 2) \textit{low probability of detection (LPD)}. Communication with LPI requires the message exchanged by two legitimate parties to be kept secret from an illegitimate adversary. Meanwhile, LPD constrained communication is more restrictive as it requires the adversary to be unable to decide whether communication between legitimate parties has taken place. 
Fundamental limits of LPD constrained communication over scalar AWGN has been established in \cite{bash2013limits} where the square-root law for LPD communication was established. Assuming  a shared secret of sufficient length between transmitter and receiver, square-root law states that, over $n$ independent channel uses of an AWGN channel, transmitter can send $\mathcal{O}(\sqrt{n})$ bits \textit{reliably} to the receiver while keeping arbitrary low probability of detection at the adversary. In this paper, we study the fundamental limits of communication with LPD over MIMO AWGN channels. 

Consider the scenario in which a transmitter (Alice) wishes to communicate to a receiver (Bob) while being undetected by a passive adversary (Willie) when all nodes are equipped with multiple antennas. To that end, Alice wish to generate a codebook that satisfies both reliability, in terms of low error probability $\epsilon$, over her channel to Bob and, in the same time, ensures, a certain maximum PD, namely $\delta$, at Willie. Denote the maximum possible size of such codebook by $K_n(\delta,\epsilon)$. In this paper, we are interested in establishing the fundamental limits of $K_n(\delta,\epsilon)$ in the asymptotic length length regime and in the limit of large number of transmitting antenna.  First we show that, the maximum codebook size is attained when the codebook is generated according to zero mean circular symmetric complex Gaussian distribution. We establish this result building upon the the Principle Minimum Relative Entropy \cite{woodbury2011minimum} and Information Projection \cite{csiszar2003information}.

Some of our findings can be summarized as follows. For an isotropic Willie channel, we show that Alice can transmit $\mathcal{O}(N \sqrt{n/M})$ bits reliably in $n$ independent channel uses, where $N$ and $M$ are the number of active eigenmodes of Bob and Willie channels, respectively. Further, we evaluate $\delta$-PD rates in the limiting regimes for the number of channel uses (asymptotic block length) and the number of antennas (massive MIMO). We show that, while the square-root law still holds for the MIMO AWGN, the number of bits that can be transmitted covertly scales exponentially with the number of transmitting antennas. More precisely, for a unit rank MIMO channel, we show that $K_n(\delta,\epsilon)$ scales as $\sqrt{\dfrac{n}{K^2 N_a}}(1+\dfrac{c}{\sqrt{n}})^{(N_a-2)/2}$ where $N_a$ is the number of transmitting antennas, $K$ is a universal constant and $c$ is constant independent on $n$ and $N_a$. Further, we derive the scaling of $K_n(\delta,\epsilon)$ \textit{with no shared secret} between Alice and Bob. In particular, we show that achieving better covert rate is a resource arm race between Alice, Bob and Willie. Alice can transmit $\mathcal{O}(N/M)$ bits reliably in $n$ independent channel uses, i.e., the covert rate is in the order of the ratio between active eigenmodes of both channels.  The practical implication of our findings is that, MIMO has the potential to provide a substantial increase in the file sizes that can be covertly communicated subject to a reasonably low delay. The results obtained in this paper are summarized in Table \ref{table:summary}\footnote{$\theta$ is the angle between right singular vectors of main and adversary channels in the unit rank channel model.}. 
\begin{table*}
\def\arraystretch{2}\tabcolsep=10pt
\centering
\caption{Summary of Results}
\centering
 \begin{tabular}{|| p{2.2cm} | p{3cm} | p{3cm} | p{1cm} | p{4cm} ||} 
 \rowcolor{Gray}
 \hline\hline
 \textbf{Result}  & \textbf{Main Channel}  & \textbf{Adversary Channel} & \textbf{Shared Secret} & $K_n(\delta,\epsilon)$ \textbf{Scales Like} \\ [0.5ex] 
 \hline\hline
\textit{Theorems \ref{thm:main_general}\&\ref{thm:srl_mimo}}     & Deterministic         & Bounded Spectral norm    & Yes  &  $N\sqrt{n/M}$             \\ 
\hline
\textit{Theorem \ref{thm:rank1}}     &   Deterministic and of Unit Rank       &    Deterministic and of Unit Rank   &       Yes     & $\sqrt{n}/ \cos^2(\theta)$  \\
 \hline
 \textit{Theorems \ref{thm:main_nosk}\&\ref{thm:srl_mimo_nosk}}    &     Deterministic     &    Bounded Spectral norm   &       No         & $N/M$ \\
 \hline 
 \textit{Theorem \ref{thm:rank1_nosk}}  &      Deterministic and of Unit Rank        &     Deterministic and of Unit Rank        &         No       &     $1/ cos^2(\theta)$         \\
 \hline
 \textit{Theorem \ref{thm:limit_main}}    &    Deterministic and of Unit Rank   & Unit Rank chosen uniformly at random             &    Yes    &  $\sqrt{\dfrac{n}{K^2 N_a}}(1+\dfrac{c}{\sqrt{n}})^{(N_a-2)/2}$ where $c$ is constant independent on $n$ and $N_a$. \\ 
  \hline
 \textit{Theorem \ref{thm:limit_main_nosk}} & Deterministic and of Unit Rank &  Unit Rank chosen uniformly at random & No &  $\dfrac{1}{K \sqrt{ N_a}}(1+\dfrac{c}{n})^{(N_a-2)/2}$ where $c$ is constant independent on $n$ and $N_a$.  \\ [0.5ex]
 \hline
\end{tabular}
\label{table:summary}
\end{table*}

The contributions of this work can be summarized as follows:
\begin{itemize}
\item  Using the Principle Minimum Relative Entropy \cite{woodbury2011minimum} and Information Projection \cite{csiszar2003information}, we show that the $K_n(\delta ,\epsilon)$ is achievable when the codebook is generated according to zero mean complex Gaussian distribution in MIMO AWGN channels.
\item  With the availability of only the main CSI to Alice, we evaluate the optimal input covariance matrix under the assumption that Willie channel satisfies a bounded spectral norm constraint \cite{schaefer2015secrecy,Abde1706_Compound}. Singular value decomposition (SVD) precoding is shown to be the optimal signaling strategy and the optimal water-filling strategy is also provided.
\item We evaluate the block-length and massive MIMO asymptotics for $K_n(\delta ,\epsilon)$. We show that, while the square-root law cannot be avoided, $K_n(\delta ,\epsilon)$ scales exponentially with the number of antennas. Thus, MIMO has the potential to provide a substantial increase in the file sizes that can be covertly communicated subject to a reasonably low delay.
\item We evaluate scaling laws of $K_n(\delta ,\epsilon)$ when there is no shared secret between Alice and Bob in both limits of large block length and massive MIMO. 
\end{itemize}

\textbf{Related Work.} Fundamental limits of covert communication have been studied in literature for different models of scalar channels. In \cite{reliable_deniable}, LPD communication over the binary symmetric channel was considered. It was shown that, square-root law holds for the binary symmetric channel, yet, without requiring a shared secret between Alice and Bob when Willie channel is significantly noisier. Further, it was shown that Alice achieves a non-diminishing LPD rate, exploiting Willie's uncertainty about his own channel transition probabilities. Recently in \cite{bloch2016covert}, LPD communication was studied from a resolvability prespective for the discrete memoryless channel (DMC). Therein, a trade-off between the secret length and asymmetries between Bob and Willie channels has been studied. Later in \cite{wang2016fundamental}, the exact capacity (using relative entropy instead of total variation distance as LPD measure) of DMC and AWGN have been characterized. For a detailed summary of the recent results for different channel models on the relationship between secret key length, LPD security metric and achievable LPD rate, readers may refer to Table II in \cite{reliable_deniable}. LPD communication over MIMO fading channel was first studied in \cite{hero2003secure}. Under different assumption of CSI availability, the author derived the average power that satisfies the LPD requirement. However, the authors did not obtain the square-root law, since the focus was not on the achievable rates of reliable LPD communication. Recently in \cite{lee2014achieving}, LPD communication with multiple antennas at both Alice and Bob is considered when Willie has only a single antenna over Rayleigh fading channel. An approximation to the LPD constrained rate when Willie employs a radiometer detector and has uncertainty about his noise variance was presented. However, a full characterization of the capacity of MIMO channel with LPD constraint was not established.

\textit{ Despite not explicitly stated, the assumption of keeping the codebook generated by Alice secret from Willie (or at least a secret of sufficient length \cite{bash2013limits,bloch2016covert}) is common in all aforementioned studies of covert communication}. Without this assumption, LPD condition cannot be met along with arbitrarily low probability of error at Bob. This is because, when Willie is informed about the codebook, he can decode the message using the same decoding strategy as that of Bob \cite{bash2013limits}. Only in \cite{reliable_deniable}, square-root law was obtained over binary symmetric channel without this assumption when Willie channel is significantly noisier than that of Bob, i.e., when there is a positive secrecy rate over the underlying wiretap channel. Despite the availability of the codebook at Willie, \cite{reliable_deniable} uses the total variation distance as the LPD metric.

In short, the square root law is shown to be a fundamental upper limitation that cannot be overcome unless the attack model is relaxed to cases such as the lack of CSI or the lack of the knowledge of when the session starts at Willie. Here, we do not make such assumptions on Willie and solely take advantage of increasing spatial dimension via the use of MIMO.

\section{System Model and Problem Statement}
\label{sec:model}
In the rest of this paper we use boldface uppercase letters for vectors/matrices. Meanwhile, $(.)^{*}$ denotes conjugate of complex number, $(.)^{\dagger}$ denotes conjugate transpose, $\mathbf{I}_N$ denotes identity matrix of size $N$, $\mathbf{tr}(.)$ denotes matrix trace operator, $\abs{\mathbf{A}}$ denotes the determinant of matrix $A$ and $\mathbf{1}_{m \times n}$ denotes a $m \times n$ matrix of all 1's. We say $\mathbf{A} \succeq \mathbf{B}$ when the difference $\mathbf{A} - \mathbf{B}$ is positive semi-definite. The mutual information between two random variables $x$ and $y$ denoted by $\mathcal{I}(x;y)$ while $\varliminf$ denotes the limit inferior. We use the standard order notation $f(n)=\mathcal{O}(g(n))$ to denote an upper bound on $f(n)$ that is asymptotically tight, i.e., there exist a constant $m$ and $n_0>0$ such that $0 \leq f(n) \leq m g(n)$ for all $n>n_0$. 

\subsection{Communication Model}
We consider the MIMO channel scenario in which a transmitter, Alice, with $N_a \geq 1$ antennas aims to communicate with a receiver, Bob, having $N_b \geq 1$ antennas without being detected by a passive adversary, Willie, equipped with $N_w \geq 1$ antennas. The discrete baseband equivalent channel for the signal $\mathbf{y}$ and $\mathbf{z}$, received by Bob and Willie, respectively, can be written as:
\begin{align}
\label{eq:inout_security}
&\mathbf{y} = \mathbf{H}_b   \mathbf{x} + \mathbf{e}_b, \nonumber \\
&\mathbf{z} = \mathbf{H}_w   \mathbf{x} + \mathbf{e}_w,
\end{align} 
where  $\mathbf{x} \in \mathbb{C}^{N_a\times 1}$ is the transmitted signal vector constrained by an average power constraint $\mathbb{E}[\mathbf{tr}(\mathbf{x}\mathbf{x}^{\dagger})] \leq P$. Also, ${\mathbf{H}_b} \in \mathbb{C}^{N_b \times N_a}$ and $\mathbf{H}_w  \in \mathbb{C}^{N_w \times N_a}$ are the channel coefficient matrices for Alice-Bob and Alice-Willie channels respectively. Throughout this paper, unless otherwise noted, $\mathbf{H}_b$ and $\mathbf{H}_w$ are assumed deterministic, also, we assume that $\mathbf{H}_b$ is known to all parties, meanwhile, $\mathbf{H}_w$ is known only to Willie. We define $N \triangleq \min\{N_a,N_b\}$ and $M\triangleq \min\{N_a,N_w\}$. Finally, $\mathbf{e}_b \in \mathbb{C}^{N_b\times 1}$ and  $\mathbf{e}_w \in \mathbb{C}^{N_w\times 1}$ are an independent zero mean circular symmetric complex Gaussian random vectors for both destination and adversary channels respectively, where, $\mathbf{e}_b\sim \mathcal{CN}(0,\sigma_b^2 \mathbf{I}_{N_b})$ and $\mathbf{e}_w\sim \mathcal{CN}(0,\sigma_e^2 \mathbf{I}_{N_w})$. 

We further assume $\mathbf{H}_w$ lies in the set of matrices with bounded spectral norms:
\begin{align}
\label{eq:willie_compound}
\mathcal{S}_w &= \left\{\mathbf{H}_w : \norm{\mathbf{H}_w}_{op} \leq \sqrt{\gamma_w}\right\} \nn \\
&=\left\{\mathbf{W}_w \triangleq \mathbf{H}_w^{\dagger}\mathbf{H}_w: \norm{\mathbf{W}_w}_{op} \leq \gamma_w\right\},
\end{align} 
where $\norm{A}_{op}$ is the operator (spectral) norm of $\mathbf{A}$, i.e., the maximum eigenvalue of $\mathbf{A}$. The set $\mathcal{S}_w$ incorporates all possible $\mathbf{W}_w$ that is less than or equal to $\gamma_w \hat{\mathbf{I}}$ (in positive semi-definite sense) with no restriction on its eigenvectors, where $\hat{\mathbf{I}}$ is diagonal matrix with the first $M$ diagonal elements equal to $1$ while the rest $N_a - M$ elements of the diagonal are zeros. Observe that, $\norm{\mathbf{W}_w}_{op}$ represents the largest possible power gain of Willie channel. Unless otherwise noted, throughout this paper we will assume that $\mathbf{H}_w \in \mathcal{S}_w$.

\subsection{Problem Statement}
\label{sec:problem}

Our objective is to establish the fundamental limits of reliable transmission over Alice to Bob MIMO channel, constrained by the low detection probability at Willie. Scalar AWGN channel channel have been studied in \cite{wang2016fundamental}, we use a formulation that follows closely the one used therein while taking into consideration the vector nature of the MIMO channel. Alice employs a stochastic encoder with blocklength\footnote{Note that, when Alice has $nN_a$ bits to transmit, two alternative options are available for her. Either she splits the incoming stream into $N_a$ streams of $n$ bits each and use each stream to select one from $2^{n}$ messages for each single antenna, or, use the entire $n N_a$ bits to choose from $2^{n N_a}$ message. The latter of these alternatives provides a gain factor of $N_a$ in the error exponent, of course, in the expense of much greater complexity \cite{gallager1968information,telatar1999capacity}. However, in the restrictive LPD scenario, Alice would choose the latter alternative as to achieve the best decoding performance at Bob.} $n N_a$, where $n$ is the number of channel uses, for message set $\mathcal{M}$ consists of:
\begin{enumerate}
\item An encoder $\mathcal{M} \mapsto \mathbb{C}^{nN_a}$, $m \mapsto \mathbf{x}^{n}$ where $\mathbf{x}\in \mathbb{C}^{N_a}$.
\item A decoder $\mathbb{C}^{nN_b} \mapsto \mathcal{M}$, $\mathbf{y}^{n} \mapsto \hat{m}$ where $\mathbf{y}\in \mathbb{C}^{N_b}$.
\end{enumerate} 
Alice chooses a message $M$ from $\mathcal{M}$ uniformly at random to transmit to Bob. Let us denote by $\mathcal{H}_0$ the null hypothesis under which Alice is not transmitting and denote by $\mathbb{P}_0$ the probability distribution of Willie's observation under the null hypothesis. Conversely, let $\mathcal{H}_1$ be the true hypothesis under which Alice is transmitting her chosen message $M$ and let $\mathbb{P}_1$ be the probability distribution of Willie's observation under the true hypothesis. Further, define type \textbf{I} error $\alpha$ to be the probability of mistakenly accepting $\mathcal{H}_1$ and type \textbf{II} error $\beta$ to be the probability of mistakenly accepting $\mathcal{H}_0$. For the optimal hypothesis test generated by Willie we have \cite{lehmann2006testing}
\begin{align}
\label{eq:sum_err}
\alpha + \beta = 1 - \mathcal{V}(\mathbb{P}_0,\mathbb{P}_1),
\end{align} 
where $\mathcal{V}(\mathbb{P}_0,\mathbb{P}_1)$ the total variation distance between $\mathbb{P}_0$ and $\mathbb{P}_1$ and is given by
\begin{align}
\label{eq:VT}
\mathcal{V}(\mathbb{P}_0,\mathbb{P}_1) = \dfrac{1}{2}\norm{p_0(x)-p_1(x)}_1,
\end{align}
where $p_0(x)$ and $p_1(x)$ are, respectively, the densities of $\mathbb{P}_0$ and $\mathbb{P}_1$ and $\norm{.}_1$ is the $\mathcal{L}_1$ norm. The variation distance between $\mathbb{P}_0$ and $\mathbb{P}_1$ is related to the Kullback–Leibler Divergence (relative entropy) by the well known Pinsker's inequality \cite{cover2012elements}:
\begin{align}
\label{eq:Pins_Ineq}
\mathcal{V}(\mathbb{P}_0,\mathbb{P}_1) \leq \sqrt{\dfrac{1}{2} \mathcal{D}(\mathbb{P}_0 \parallel \mathbb{P}_1)}
\end{align} 
where
\begin{align}
\label{eq:rel_ent}
 \mathcal{D}(\mathbb{P}_0 \parallel \mathbb{P}_1) = \mathbb{E}_{\mathbb{P}_0} \left[\log \mathbb{P}_0 - \log \mathbb{P}_1\right].
\end{align} 
Note that, since the channel is memoryless, across $n$ independent channel uses, we have
\begin{align}
\mathcal{D} \left(\mathbb{P}_0^n \parallel \mathbb{P}_1^n \right) = n \mathcal{D} \left(\mathbb{P}_0 \parallel \mathbb{P}_1 \right)
\end{align}
by the chain rule of relative entropy. Accordingly, for Alice to guarantee a low detection probability at Willie's optimal detector, she needs to bound $\mathcal{V}(\mathbb{P}_0^n,\mathbb{P}_1^n)$ above by some $\delta$ chosen according to the desired probability of detection. Consequently, she ensure that the sum of error probabilities at Willie is bounded as $\alpha + \beta \geq 1-\delta $. Using (\ref{eq:Pins_Ineq}), Alice can achieve her goal by designing her signaling strategy (based on the amount of information available) subject to
\begin{align}
\label{eq:metric}
\mathcal{D}(\mathbb{P}_0 \parallel \mathbb{P}_1) \leq \dfrac{2 \delta^2}{n}.
\end{align} 
Throughout this paper, we adopt \eqref{eq:metric} as our LPD metric. Thus, the input distribution used by Alice to generate the codebook has to satisfy \eqref{eq:metric}. As in \cite{wang2016fundamental}, our goal is to find the maximum value of $\log \abs{\mathcal{M}}$ for which a random codebook of length $nN_a$ exists and satisfies \eqref{eq:metric} and whose average probability of error is at most $\epsilon$. We denote this maximum by $K_n(\delta,\epsilon)$ and we define
\begin{align}
\label{eq:L}
L \triangleq \lim_{\epsilon \downarrow 0} \varliminf_{n \rightarrow \infty} \dfrac{K_n(\delta,\epsilon)}{\sqrt{2n \delta^2}}.
\end{align}
Note that $L$ has unit $\sqrt{nats}$. We are interested in the characterization of $L$ under different conditions of Bob and Willie channels in order to derive scaling laws for the number of covert bits over MIMO AWGN channel. We first give the following Proposition which provides a general expression for $L$ by extending Theorem 1 in \cite{wang2016fundamental} to the MIMO AWGN channel with infinite input and output alphabet.
\begin{prop}
\label{prop:l_mimo}
For the considered MIMO AWGN channel, 
\begin{align}
\label{eq:l_mimo_prop}
L = &\max_{\substack{\left\{f_n(\mathbf{x})\right\} \\ \mathbf{tr}\left(\mathbb{E}_n[\mathbf{x}\mathbf{x}^{\dagger}]\right) \leq P}} \varliminf_{n \rightarrow \infty} \sqrt{\dfrac{n}{2\delta^2}} \;\;\; \mathcal{I}(f_n(\mathbf{x}),f_n(\mathbf{y}))\nn \\
&\text{Subject to:  } \mathcal{D}(\mathbb{P}_0^n \parallel \mathbb{P}_1^n)- 2 \delta^2 \leq 0
\end{align}
where $\left\{f_n(\mathbf{x})\right\}$ is a sequence of input distributions over $\mathbb{C}^{N_a}$ and $\mathbb{E}_n[\cdot]$ denotes the expectation with respect to $f_n(\mathbf{x})$.
\end{prop}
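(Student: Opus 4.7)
The plan is to establish matching achievability and converse bounds on $L$ that together prove Proposition \ref{prop:l_mimo}. Both bounds adapt the standard random-coding and Fano-converse machinery to (i) continuous-alphabet vector channels with an average power constraint, and (ii) sequences of input distributions that must obey the LPD relative-entropy constraint in addition to the power constraint. The result is the MIMO-AWGN analog of Theorem 1 of \cite{wang2016fundamental}, so the skeleton will follow that proof closely; the genuinely new element is handling the continuous alphabets and the vector input structure in the LPD regime.

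For the lower bound $L \geq$ RHS (achievability), I would fix an arbitrary feasible sequence $\{f_n(\mathbf{x})\}$, i.e., one satisfying $\mathbf{tr}(\mathbb{E}_n[\mathbf{x}\mathbf{x}^{\dagger}]) \leq P$ and $\mathcal{D}(\mathbb{P}_0^n \parallel \mathbb{P}_1^n) \leq 2\delta^2$, and invoke Shannon's random-coding argument for vector Gaussian channels with codewords drawn i.i.d.\ from $f_n$. This yields, for $n$ large enough, a codebook of size $\log|\mathcal{M}_n| \geq n\, \mathcal{I}(f_n(\mathbf{x}); f_n(\mathbf{y})) - \eta_n$ with $\eta_n = o(\sqrt{n})$ and average error probability at most $\epsilon$. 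Because the $n$-letter output distribution at Willie induced by the random codebook coincides with $\mathbb{P}_1^n$, the LPD constraint in (\ref{eq:metric}) is inherited automatically. Dividing by $\sqrt{2n\delta^2}$, taking $\varliminf_{n\to\infty}$, then $\epsilon \downarrow 0$, and finally maximizing over all feasible sequences $\{f_n\}$ yields the claimed lower bound.

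For the upper bound $L \leq$ RHS (converse), I would start from any sequence of codebooks $\{\mathcal{C}_n\}$ of size $|\mathcal{M}_n| = e^{K_n(\delta,\epsilon)}$ that satisfies both reliability and LPD, assign the uniform prior on messages, and let $\hat{f}_n(\mathbf{x})$ denote the induced empirical input distribution on $\mathbb{C}^{N_a}$. By construction $\hat{f}_n$ is feasible for the optimization in (\ref{eq:l_mimo_prop}). Fano's inequality combined with the data-processing inequality then gives
\begin{align}
K_n(\delta,\epsilon) \leq \dfrac{n\, \mathcal{I}(\hat{f}_n(\mathbf{x}); \hat{f}_n(\mathbf{y})) + 1}{1-\epsilon}.
\end{align}
Dividing by $\sqrt{2n\delta^2}$, taking $\varliminf_{n\to\infty}$, then $\epsilon \downarrow 0$, and upper-bounding by the maximum over all feasible sequences closes the matching inequality.

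The main obstacle I anticipate is that the LPD constraint $\mathcal{D}(\mathbb{P}_0^n \parallel \mathbb{P}_1^n) \leq 2\delta^2$ forces the per-symbol signal energy to vanish like $1/\sqrt{n}$, so both the achievable rate and the Fano slack live on the $\sqrt{n}$ scale rather than the coarser $n$ scale of classical capacity. The achievability step must therefore rely on a non-asymptotic finite-blocklength bound (Feinstein- or Polyanskiy-Poor-Verdu-type) whose remainder is genuinely $o(\sqrt{n})$, uniformly over the vanishing-SNR regime, and the converse must match at the same scale. A secondary, milder technical point is to verify that the supremum over feasible $\{f_n\}$ is attained so that the ``$\max$'' in (\ref{eq:l_mimo_prop}) is meaningful; a standard tightness argument for Borel probability measures on $\mathbb{C}^{N_a}$ with bounded second moment should suffice.
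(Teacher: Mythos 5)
Your high-level architecture matches the paper's: both treat this as the MIMO-AWGN analog of Theorem~1 of \cite{wang2016fundamental}, splitting into a random-coding achievability step and a Fano-type converse, with the LPD constraint inherited because i.i.d.\ drawing from a feasible $f_n$ makes Willie's output distribution exactly $\mathbb{P}_1^n$. The paper likewise imports the converse directly from \cite{wang2016fundamental} (noting that its converse does not depend on alphabet finiteness) and supplies a fresh achievability argument adapted from Theorem~5 there.

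The genuine gap is that the one step you flag as ``the main obstacle'' is precisely the step the paper actually has to prove, and you leave it unproved. You correctly observe that in the LPD regime the per-letter mutual information scales like $1/\sqrt{n}$, so $n\,\mathcal{I}(f_n)$ is $\Theta(\sqrt{n})$ and the codebook-size slack must be $o(\sqrt{n})$ for the achievability bound to bite; you then say one should ``rely on a non-asymptotic finite-blocklength bound \ldots whose remainder is genuinely $o(\sqrt{n})$,'' but that is an assertion, not an argument. The paper's Appendix~\ref{app:l_mimo_prop} closes exactly this gap: starting from the information-spectrum criterion
$\varlimsup_{n\to\infty} K_n/\sqrt{n} \leq P\text{-}\liminf_{n\to\infty}\tfrac{1}{\sqrt{n}}\log\tfrac{f^{\times n}(\mathbf{z}^n\mid\mathbf{x}^n)}{f^{\times n}(\mathbf{z}^n)}$,
it shows that the normalized information density concentrates around $\sqrt{n}\,\mathcal{I}(f_n)$ by computing its mean exactly and its variance explicitly, then invoking Chebyshev's inequality; the variance computation in \eqref{eq:covar} uses that $\mathbf{Q}_n\to 0$ as $n\to\infty$ (forced by the LPD constraint) to show that the per-symbol variance of $\mathbf{z}_i^{\dagger}\Sigma_1^{-1}\mathbf{z}_i - \mathbf{e}_i^{\dagger}\Sigma_0^{-1}\mathbf{e}_i$ vanishes, which is what makes the remainder $o(\sqrt{n})$ rather than the $\Theta(\sqrt{n})$ you would get from a naive dispersion bound with fixed SNR. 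Without this concentration/variance estimate, your achievability claim $\log|\mathcal{M}_n|\ge n\,\mathcal{I}(f_n)-\eta_n$ with $\eta_n=o(\sqrt{n})$ is not justified. A secondary loose end in your converse: to get a single-letter bound with the induced marginal $\hat{f}_n$ from Fano's inequality you still need to (a) use concavity of mutual information in the input to pass from $\tfrac{1}{n}\sum_i I(\mathbf{X}_i;\mathbf{Y}_i)$ to $\mathcal{I}(\hat{f}_n)$, and (b) verify by convexity of $\mathcal{D}(\mathbb{P}_0\parallel\cdot)$ that $\hat{f}_n$ is itself feasible for the per-letter LPD constraint given feasibility of the $n$-letter codebook; both are routine but need to be stated.
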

Before we give the proof of Proposition \ref{prop:l_mimo}, we would like to highlight why the second moment constraint on the input signal is meaningful in our formulation. It was explicitly stated in \cite{wang2016fundamental} that, an average power constraint on the input signal is to be superseded by the LPD constraint. The reason is that, the LPD constraint requires the average power to tend to zero as the block length tends to $\infty$. However, unlike the single antenna setting, over a MIMO channel, there exist scenarios in which the LPD constraint can be met without requiring the Alice to reduce her power. In the sequel, we will discuss such scenarios in which the power constraint remains active.

\begin{proof}
First, using the encoder/decoder structure described above, we see that the converse part of Theorem 1 in \cite{wang2016fundamental} can be directly applied here. Meanwhile, the achievability  part there was derived based on the finiteness of input and output alphabet. It was not generalized to the continuous alphabet input over scalar AWGN channel. Rather, the achievability over AWGN channel was shown for Gaussian distributed input in Theorem 5. Here, we argue that, showing achievability for Gaussian distributed input is sufficient and, hence, we give achievability proof in Appendix \ref{app:l_mimo_prop} that follow closely the proof of Theorem 5 in \cite{wang2016fundamental}. Unlike the non LPD constrained capacity which attains its maximum when the underlying input distribution is zero mean complex Gaussian, it is not straightforward to infer what input distribution is optimal. However, using the Principle Minimum Relative Entropy \cite{woodbury2011minimum} and Information Projection \cite{csiszar2003information}, we verify that, the distribution $\mathbb{P}_1$ that minimizes $\mathcal{D}(\mathbb{P}_0 \parallel \mathbb{P}_1)$ is the zero mean circularly symmetric complex Gaussian distribution.
\end{proof}
Further, we provide a more convenient expression for $L$ in the following Theorem which provides an extension of Corollary 1 in \cite{wang2016fundamental} to the MIMO AWGN channel.
\begin{theorem}
\label{thm:l_mimo}
For the considered MIMO AWGN channel, 
\begin{align}
\label{eq:l_mimo}
L = & \lim_{n \rightarrow \infty} \sqrt{\dfrac{n}{2\delta^2}}  \max_{\substack{f_n(\mathbf{x}) \\ \mathbf{tr}\left(\mathbb{E}_n[\mathbf{x}\mathbf{x}^{\dagger}]\right) \leq P}} \mathcal{I}(f_n(\mathbf{x}),f_n(\mathbf{y}))\nn \\
&\text{Subject to:  } \mathcal{D}(\mathbb{P}_0^n \parallel \mathbb{P}_1^n)- 2 \delta^2 \leq 0
\end{align}
where $f_n(\mathbf{x})$ is the input distribution over $\mathbb{C}^{N_a}$ and $\mathbb{E}_n[\cdot]$ denotes the expectation with respect to $f_n(\mathbf{x})$.
\end{theorem}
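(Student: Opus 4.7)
The plan is to upgrade the sequence-level characterization of $L$ given by Proposition \ref{prop:l_mimo} into the per-$n$ form in \eqref{eq:l_mimo}. Two reductions are needed: swap the outer maximum over sequences with the $\varliminf$, and then show that the resulting $\varliminf$ is actually a $\lim$.

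For the interchange, the easy direction is pointwise: for every admissible sequence $\{f_n\}$ and every $n$,
\begin{equation*}
\sqrt{\dfrac{n}{2\delta^2}}\,\mathcal{I}(f_n(\mathbf{x}),f_n(\mathbf{y})) \;\le\; \sqrt{\dfrac{n}{2\delta^2}}\,\max_{g_n}\mathcal{I}(g_n(\mathbf{x}),g_n(\mathbf{y}))
\end{equation*}
with the inner max running over the single-letter LPD-feasible set. Taking $\varliminf_n$ of both sides and then the supremum over sequences on the left gives $\max_{\{f_n\}}\varliminf_n \le \varliminf_n \max_{f_n}$. For the reverse direction, I would, for each $n$, pick a maximizer $f_n^\star$ of the inner single-letter problem. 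Existence of such a maximizer follows from Proposition \ref{prop:l_mimo}: the zero-mean circularly symmetric complex Gaussian reduction turns the search into a finite-dimensional optimization over covariance matrices in the compact intersection of $\{\mathbf{Q}\succeq 0:\mathbf{tr}(\mathbf{Q})\le P\}$ with the closed LPD-constraint set, on which both the mutual-information objective and the KL constraint are continuous in $\mathbf{Q}$. The constructed sequence $\{f_n^\star\}$ is itself admissible in the outer optimization and attains $\varliminf_n\max_{f_n}$, yielding the reverse inequality.

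To promote the $\varliminf$ to a bona fide $\lim$, I would again invoke the Gaussian-optimality from Proposition \ref{prop:l_mimo} and examine the rescaled inner program. The per-channel-use KL divergence between two zero-mean complex Gaussians is second-order in the signal covariance, so the constraint $n\,\mathcal{D}(\mathbb{P}_0\parallel\mathbb{P}_1)\le 2\delta^2$ squeezes $\lVert\mathbf{Q}_n\rVert$ to scale as $\mathcal{O}(1/\sqrt{n})$. Introducing the rescaled variable $\widetilde{\mathbf{Q}}_n\triangleq\sqrt{n}\,\mathbf{Q}_n$, the scaled objective $\sqrt{\dfrac{n}{2\delta^2}}\log\lvert\mathbf{I}+\mathbf{H}_b\mathbf{Q}_n\mathbf{H}_b^{\dagger}/\sigma_b^2\rvert$ reduces to a dominant linear form in $\widetilde{\mathbf{Q}}_n$ plus an $\mathcal{O}(1/\sqrt{n})$ remainder, while the LPD constraint converges to a fixed quadratic constraint in $\widetilde{\mathbf{Q}}_n$. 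The limiting program is a continuous parametric convex program whose optimal value coincides with the limit of the finite-$n$ maxima, so the $\varliminf$ is in fact a $\lim$.

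The chief technical obstacle lies in this last step: one must verify that the $\mathcal{O}(1/\sqrt{n})$ Taylor remainders in both the objective and the KL constraint are \emph{uniformly} negligible over the rescaled feasible set, so that the finite-$n$ programs epi-converge to the limiting one. Once this uniformity is established, continuity of the optimal value of a parametric convex program in its constraint data supplies the limit and completes the proof.
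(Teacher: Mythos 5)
Your proposal is correct and follows the same two-step outline as the paper's proof (swap the maximum over sequences with the $\varliminf$, then argue that the $\varliminf$ is a genuine limit), but you carry both steps out with considerably more care than the paper does. For the swap, you give both directions explicitly: the easy pointwise inequality $\max_{\{f_n\}}\varliminf_n \le \varliminf_n\max_{f_n}$, and then the reverse by selecting the per-$n$ maximizer $f_n^{\star}$ (whose existence you correctly justify via the Gaussian reduction plus compactness/continuity of the covariance-matrix feasible set) and noting $\{f_n^{\star}\}$ is itself an admissible sequence. The paper simply asserts that "the order of limit and maximum can be interchanged," implicitly relying on exactly this maximizer-sequence construction (it defines $\mathbf{Q}_n^*$ as an argmax) without stating either inequality. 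For promoting $\varliminf$ to $\lim$, the paper's argument is a two-case sketch: when $\mathcal D = 0$ it asserts the limit is $\infty$ (the stated justification, "$\delta$ can be made $0$," is loose---the cleaner reason is that the constraint is inactive so $C_{pd}$ stays bounded away from $0$ and the $\sqrt{n}$ prefactor diverges); when $\mathcal D>0$ it merely notes $\mathbf{Q}_n^*$ must shrink and that the scaled value is bounded, which by itself does not establish existence of the limit. Your rescaling $\widetilde{\mathbf{Q}}_n=\sqrt{n}\,\mathbf{Q}_n$ and epi-convergence argument is a more principled route to the same conclusion, and the obstacle you honestly flag---uniform control of the Taylor remainders over the rescaled feasible set---is a genuine gap that the paper's case analysis also leaves open. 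In short: same skeleton, but your fleshing-out of the interchange is the argument the paper should have written, and your $\varliminf\to\lim$ strategy is tighter even though it inherits a remainder-uniformity step that neither you nor the paper fully close.
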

\begin{proof}
The proof is given in Appendix \ref{app:l_mimo}.
\end{proof}
Now, since we now know that zero mean circular symmetric complex Gaussian input distribution is optimal, the only remaining task is to characterize the covariance matrix, $\mathbf{Q} = \mathbb{E}\left[\mathbf{x}\mathbf{x}^{\dagger}\right]$, of the optimal input distribution. Accordingly, \eqref{eq:l_mimo} can be rewritten as:
\begin{align}
\label{eq:l_MIMO_proplem}
L =  \lim_{n \rightarrow \infty}  \sqrt{\dfrac{n}{2\delta^2}} &\max_{\substack{\mathbf{Q} \succeq \mathbf{0} \\ \mathbf{tr}(\mathbf{Q}) \leq P}} \log \abs{\mathbf{I}_{N_a} + \dfrac{\mathbf{W}_b\mathbf{Q}}{\sigma_b^2}}  \\
&\text{Subject to:  } \mathcal{D}(\mathbb{P}_0^n \parallel \mathbb{P}_1^n)- 2 \delta^2 \leq 0, \nn
\end{align}  
where $\mathbf{W}_b \triangleq \mathbf{H}_b^{\dagger}\mathbf{H}_b$. Further, we can evaluate the relative entropy at Willie as follows (see Appendix \ref{app:KL} for detailed derivation):
\begin{align}
\label{eq:D_gen_cont}
\mathcal{D} \left(\mathbb{P}_0 \parallel \mathbb{P}_1 \right) = &\log \abs{\dfrac{1}{ \sigma_w^2 } \mathbf{H}_w \mathbf{Q}\mathbf{H}_w^{\dagger} + \mathbf{I}_{N_w}}  \nonumber  \\
&+ \mathbf{tr}\left\{\left[\dfrac{1}{\sigma_w^2 } \mathbf{H}_w \mathbf{Q}\mathbf{H}_w^{\dagger} + \mathbf{I}_{N_w}\right]^{-1}\right\} - N_w.
\end{align}
In this paper, we are mainly concerned with characterizing $L$ when Alice knows only $\mathbf{H}_b$. To that end, let us define:
\begin{align}
\label{eq:eq_MIMO_proplem}
C_{pd}(\delta) \triangleq  &\max_{\substack{\mathbf{Q} \succeq \mathbf{0} \\ \mathbf{tr}(\mathbf{Q}) \leq P}} \log \abs{\mathbf{I}_{N_a} + \dfrac{\mathbf{W}_b\mathbf{Q}}{\sigma_b^2}}  \\
&\text{Subject to:    } \mathcal{D}(\mathbb{P}_0^n \parallel \mathbb{P}_1^n)- 2 \delta^2 \leq 0. \nn
\end{align}  
Clearly, $L = \lim_{n\rightarrow \infty} \sqrt{\dfrac{n}{2\delta^2}} C_{pd}(\delta)$. In what follows, we characterize $C_{pd}(\delta)$ and, hence, $L$ under different models of $\mathbf{H}_b$ and $\mathbf{H}_w$.

\begin{remark}
Observe that, since Bob and Willie channels are different, Willies does not observe the same channel output as Bob. Hence, there exist situations in which $\mathcal{D}(\mathbb{P}_0^n \parallel \mathbb{P}_1^n)$ does not increase without bound as $n$ tends to infinity. In the next Section, we provide some examples. 
\end{remark}

\section{Motivating Examples}
\label{sec:example}
Consider the scenario in which both of Willie's and Bob's channels are of unit rank. Accordingly, we can write $\mathbf{H}_{\circ} = \lambda_{\circ}v_{\circ}u_{\circ}^{\dagger}$, where $v_{\circ} \in \mathbb{C}^{N_{\circ}}$ and $u_{\circ} \in \mathbb{C}^{N_{a}}$ are the left and right singular vectors of $\mathbf{H}_{\circ}$ where the subscript $\circ \in \{e,b\}$ used to denote Bob and Willie channels respectively.

Under the above settings, consider the scenario in which Alice has a prior (non-causal) knowledge about both channels. Alice task is to find $\mathbf{Q}_*$ that solve (\ref{eq:eq_MIMO_proplem}). Note that, since both channels are of unit rank, so is $\mathbf{Q}_*$ and it can be written as $\mathbf{Q}_* = P_{th} q_*q_*^{\dagger}$ where $P_{th} \leq P$ is the power threshold above which she will be detected by Willie. Now suppose that Alice choose $q_*$ to be the solution of the following optimization problem:
\begin{align}
\max_{\substack{{q} \\ \norm{q}=1}} <q^{\dagger},\mathbf{u}_b> \nonumber \\
\text{Subject to} <q^{\dagger},\mathbf{u}_w> = 0,
\end{align} 
whose solution is given by
\begin{align}
q_{*}= \dfrac{\left[\mathbf{I}-\mathbf{u}_w\mathbf{u}_w^{\dagger}\right]\mathbf{u}_b}{\norm{\left[\mathbf{I}-\mathbf{u}_w\mathbf{u}_w^{\dagger}\right] \mathbf{u}_b}}.
\end{align}
The beamforming direction $q_*$ is known as null steering (NS) beamforming \cite{null_Steering}, that is, transmission in the direction orthogonal to Willie's direction while maintaining the maximum possible gain in the direction of Bob. Recall that Willies channel is of unit rank and is in the direction $\mathbf{u}_w$, thus, the choice of $\mathbf{Q} = \mathbf{Q}_*$ implies that $\mathbf{H}_w\mathbf{Q}_* \mathbf{H}_w^{\dagger} = \mathbf{0}$. Accordingly, $\Sigma_1 = \Sigma_0$, i.e, Willie is kept completely ignorant by observing absolutely no power from Alice's transmission. More precisely, $\mathcal{D} \left(\mathbb{P}_0^n \parallel \mathbb{P}_1^n \right) = 0$. However, this doesn't mean that Alice can communicate at the full rate to Bob as if Willie was not observing, rather, the LPD constraint forced Alice to sacrifice some of its power to keep Willie oblivious of their transmission. More precisely, the effective power seen by Bob scales down with cosine the angle between $\mathbf{u}_b$ and $\mathbf{u}_w$. In the special case when $<\mathbf{u}_b,\mathbf{u}_w>=0$, Alice communicate at the full rate to Bob without being detected by Willie. In addition, the codebook between Alice and Bob need not to be kept secret from Willie. That is because the power observed at Willie from Alice transmission is, in fact, zero.
 \begin{figure}[ht]
\centering
\includegraphics[width = 2.2 in, height = 2.4 in]{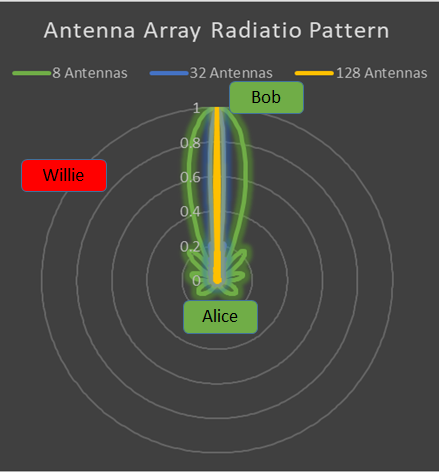}
\caption{Radiation pattern as a function of the number of transmitting antennas. When number of antennas gets large, $<\mathbf{u}_b,\mathbf{u}_w> \rightarrow 0$.  
\label{fig:example_fig}}
\end{figure}

\section{$C_{pd}(\delta)$ with Secret Codebook}
\label{sec:main_informed}
With uncertainty about Willie's channel, $\mathbf{H}_w \in \mathcal{S}_w$, it is intuitive to think that Alice should design her signaling strategy against the worst (stronger) possible Willie channel. We first derive the worst case Willie channel, then, we establish the saddle point property of the considered class of channels in the form of $\min \max = \max \min$, where the maximum is taken over all admissible input covariance matrices and the minimum is over all   $\mathbf{H}_w \in \mathcal{S}_w$. Thus, we show that $C_{pd}(\delta)$ equals to the $C_{pd}(\delta)$ evaluated at the worst possible $\mathbf{H}_w$.

\subsection{Worst Willie Channel and Saddle Point Property}
To characterize $C_{pd}(\delta)$ when $\mathbf{H}_w \in \mathcal{S}_w$, we need first to establish the worst case $C_{pd}(\delta)$ denoted by $C_{pd}^w(\delta)$. Suppose we have obtained $C_{pd}(\delta)$ for every possible state of $\mathbf{H}_w$, then, $C_{pd}^w(\delta)$ is the minimum $C_{pd}(\delta)$ over all possible state of $\mathbf{H}_w$. First, let us define 
\begin{align}
\label{eq:lpd_rate}
\mathcal{R}(\mathbf{W}_w, \mathbf{Q} ,\delta) = &\log \abs{\mathbf{I}_{N_a} + \dfrac{\mathbf{W}_b\mathbf{Q}}{\sigma_b^2}}.
\end{align}   
We give $C_{pd}^w(\delta)$ in the following proposition. 
\begin{prop}
\label{prop:W_ww} 
Consider the class of channels in (\ref{eq:willie_compound}), for any $\mathbf{Q} \succcurlyeq \mathbf{0}$ satisfies $\mathbf{tr}\{\mathbf{Q}\} \leq P$ and $\mathbf{W}_w \in \mathcal{S}_w$ we have:
\begin{align}
C_{pd}^w(\delta) &= \min_{\mathbf{W}_w \in \mathcal{S}_w}  \max_{\substack{\mathbf{Q} \succeq \mathbf{0} \\ \mathbf{tr}(\mathbf{Q}) \leq P}} \mathcal{R}(\mathbf{W}_w, \mathbf{Q} ,\delta) \nn \\
&\,\,\,\,\,\text{Subject to:  } \mathcal{D}(\mathbb{P}_0^n \parallel \mathbb{P}_1^n)- 2 \delta^2 \leq 0 \nn \\
& = \max_{\substack{\mathbf{Q} \succeq \mathbf{0} \\ \mathbf{tr}(\mathbf{Q}) \leq P}} \mathcal{R}( \gamma_w \hat{\mathbf{I}}, \mathbf{Q} ,\delta) \nn \\
&\,\,\,\,\,\text{Subject to:  } \mathcal{D}(\mathbb{P}_0^n \parallel \mathbb{P}_1^n)- 2 \delta^2 \leq 0
\end{align}
i.e., the worst Willie channel is isotropic.
\end{prop}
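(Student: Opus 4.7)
The objective $\mathcal{R}(\mathbf{W}_w,\mathbf{Q},\delta)=\log\lvert\mathbf{I}_{N_a}+\mathbf{W}_b\mathbf{Q}/\sigma_b^2\rvert$ has no explicit $\mathbf{W}_w$-dependence; Willie's channel affects the inner problem only through the LPD feasibility set $\mathcal{Q}(\mathbf{W}_w)\triangleq\{\mathbf{Q}\succeq\mathbf{0}:\mathbf{tr}(\mathbf{Q})\leq P,\;\mathcal{D}(\mathbf{W}_w,\mathbf{Q})\leq 2\delta^2/n\}$. Accordingly, the worst-case $\mathbf{W}_w$ is simply the one that shrinks $\mathcal{Q}(\mathbf{W}_w)$ the most. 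The plan is to show (i) that $\mathcal{D}$ is nondecreasing in $\mathbf{W}_w$ with respect to the Loewner order for every fixed $\mathbf{Q}$, and (ii) that $\gamma_w\hat{\mathbf{I}}$ is the Loewner-dominant element of $\mathcal{S}_w$ relevant to the problem. Monotonicity then yields $\mathcal{Q}(\gamma_w\hat{\mathbf{I}})\subseteq\mathcal{Q}(\mathbf{W}_w)$ for every $\mathbf{W}_w\in\mathcal{S}_w$, and since $\gamma_w\hat{\mathbf{I}}\in\mathcal{S}_w$ itself, the minimum in $C_{pd}^w(\delta)$ is attained there.

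For step (i), I would rewrite \eqref{eq:D_gen_cont} in eigenvalue form using the identity that the nonzero spectrum of $\sigma_w^{-2}\mathbf{H}_w\mathbf{Q}\mathbf{H}_w^{\dagger}$ coincides with that of $\sigma_w^{-2}\mathbf{Q}^{1/2}\mathbf{W}_w\mathbf{Q}^{1/2}$. This yields $\mathcal{D}(\mathbf{W}_w,\mathbf{Q})=\sum_i f(\sigma_w^{-2}\lambda_i(\mathbf{Q}^{1/2}\mathbf{W}_w\mathbf{Q}^{1/2}))$ with $f(t)\triangleq\log(1+t)+1/(1+t)-1$. A direct computation gives $f(0)=0$ and $f'(t)=t/(1+t)^2\geq 0$, so $f$ is nonnegative and nondecreasing on $[0,\infty)$. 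Weyl's monotonicity theorem then implies $\lambda_i(\mathbf{Q}^{1/2}\mathbf{W}_w^{(1)}\mathbf{Q}^{1/2})\leq\lambda_i(\mathbf{Q}^{1/2}\mathbf{W}_w^{(2)}\mathbf{Q}^{1/2})$ in decreasing order whenever $\mathbf{W}_w^{(1)}\preceq\mathbf{W}_w^{(2)}$, and summing $f$ across these ordered eigenvalues yields $\mathcal{D}(\mathbf{W}_w^{(1)},\mathbf{Q})\leq\mathcal{D}(\mathbf{W}_w^{(2)},\mathbf{Q})$ for every $\mathbf{Q}\succeq\mathbf{0}$.

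For step (ii), each $\mathbf{W}_w\in\mathcal{S}_w$ has rank at most $M$ with eigenvalues bounded by $\gamma_w$, hence admits $\mathbf{W}_w\preceq\gamma_w\mathbf{U}\hat{\mathbf{I}}\mathbf{U}^{\dagger}$ for some unitary $\mathbf{U}$, and $\gamma_w\mathbf{U}\hat{\mathbf{I}}\mathbf{U}^{\dagger}$ also belongs to $\mathcal{S}_w$; by (i), replacing $\mathbf{W}_w$ by this latter matrix can only shrink the feasibility set. The principal obstacle I expect lies in handling the residual optimization over $\mathbf{U}$: the raw spectral-norm constraint only yields $\mathbf{W}_w\preceq\gamma_w\mathbf{I}_{N_a}$, and pinning the minimum down to the specific matrix $\gamma_w\hat{\mathbf{I}}$ requires pairing Willie's worst rotation with Alice's optimal signaling basis. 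The cleanest route is to pass to the eigenbasis of $\mathbf{W}_b$ (i.e., Alice's SVD-precoding basis), in which the optimal $\mathbf{Q}^{\star}$ is diagonal; the worst $\mathbf{U}$ is then precisely the one whose $M$-dimensional range contains the $M$ dominant eigendirections of $\mathbf{Q}^{\star}$, which in this canonical basis is $\hat{\mathbf{I}}$. This rotational reduction is also what underlies the saddle-point equality $\min\max=\max\min$ alluded to in the preamble — once the min and max are interchanged, the inner $\min_{\mathbf{W}_w}\mathcal{D}(\mathbf{W}_w,\mathbf{Q})$ is solved pointwise in $\mathbf{Q}$ via the monotonicity from (i). Combining (i) and (ii) gives $\max_{\mathbf{Q}\in\mathcal{Q}(\gamma_w\hat{\mathbf{I}})}\mathcal{R}\leq\max_{\mathbf{Q}\in\mathcal{Q}(\mathbf{W}_w)}\mathcal{R}$ for every $\mathbf{W}_w\in\mathcal{S}_w$, with equality at $\mathbf{W}_w=\gamma_w\hat{\mathbf{I}}$, establishing the proposition.
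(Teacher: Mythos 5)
Your proposal attacks the same target as the paper --- establishing that $\mathcal{D}(\mathbf{W}_w,\mathbf{Q})$ is monotone in $\mathbf{W}_w$ with respect to the Loewner order, so that increasing $\mathbf{W}_w$ can only shrink the feasibility set --- but the route you take in step (i) is cleaner and, arguably, more rigorous than the paper's. The paper's Appendix~D argues via a Jensen-type manipulation: it writes $\mathcal{D}_1-\mathcal{D}_2 \ge -\mathbb{E}_{\mathbb{P}_0}\bigl[\log\frac{\mathbb{P}_1(\mathbf{W}_{w1})}{\mathbb{P}_1(\mathbf{W}_{w2})}\bigr]$ and then defers the claim that this expectation is non-positive to ``standard matrix algebra as in Appendix~A'' (there is also what appears to be a typo, $\le 1$ in place of $\le 0$, and the labeled Jensen step is actually an algebraic identity). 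Your eigenvalue route makes that deferred algebra explicit: passing to the spectrum of $\mathbf{Q}^{1/2}\mathbf{W}_w\mathbf{Q}^{1/2}$, identifying the scalar profile $f(t)=\log(1+t)+1/(1+t)-1$ with $f'(t)=t/(1+t)^2\ge 0$, and invoking Weyl's monotonicity is precisely the computation the paper is implicitly relying on, stated properly. Functionally, the two step-(i) arguments are the same proof with different packaging; yours closes the gap the paper leaves open.

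Where you go beyond the paper is in step (ii). You correctly observe that when $M=\min(N_a,N_w)<N_a$, the set $\mathcal{S}_w$ has no Loewner-maximal element --- the constraint only yields $\mathbf{W}_w\preceq\gamma_w\mathbf{U}\hat{\mathbf{I}}\mathbf{U}^{\dagger}$ for some unitary $\mathbf{U}$, and pinning down the specific worst rotation requires coupling Willie's subspace to Alice's optimal signaling basis. The paper elides this entirely: Appendix~D simply asserts $\mathbf{W}_w^*=\gamma_w\hat{\mathbf{I}}$ without specifying in which basis $\hat{\mathbf{I}}$ is expressed, and Appendix~E then sidesteps the issue by assuming $N_w\ge N_a$ (so $\hat{\mathbf{I}}=\mathbf{I}_{N_a}$ and the rotation is irrelevant). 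Your observation that the residual minimization over $\mathbf{U}$ is what the subsequent saddle-point argument (Proposition~3) is implicitly resolving is the right structural diagnosis, even if you stop short of writing that step out. In short: your proof is correct, same strategy as the paper, a cleaner monotonicity lemma, and it identifies a genuine loose end in the paper's rotation handling that the paper does not acknowledge.
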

\begin{proof} See Appendix \ref{app:W_ww}.
\end{proof}
Proposition \ref{prop:W_ww} establishes $C_{pd}^w(\delta)$. The following proposition proves that $C_{pd}(\delta) = C_{pd}^w(\delta)$ by establishing the saddle point property of the considered class of channels.

\begin{prop}
\label{prop:saddle}
(Saddle Point Property.) Consider the class of channels in (\ref{eq:willie_compound}), for any $\mathbf{Q} \succcurlyeq \mathbf{0}$ satisfies $\mathbf{tr}\{\mathbf{Q}\} \leq P$ and $\mathbf{W}_w \in \mathcal{S}_w$ we have:
\begin{align}
\label{eq:worst_lpd_cap}
C_{pd}(\delta) &= \min_{\mathbf{W}_w \in \mathcal{S}_w}  \max_{\substack{\mathbf{Q} \succeq \mathbf{0} \\ \mathbf{tr}(\mathbf{Q}) \leq P}} \mathcal{R}(\mathbf{W}_w, \mathbf{Q} ,\delta) \nn \\
&\,\,\,\,\,\text{Subject to:  } \mathcal{D}(\mathbb{P}_0^n \parallel \mathbb{P}_1^n)- 2 \delta^2 \leq 0 \nn \\
& = \max_{\substack{\mathbf{Q} \succeq \mathbf{0} \\ \mathbf{tr}(\mathbf{Q}) \leq P}}  \min_{\mathbf{W}_w \in \mathcal{S}_w}  \mathcal{R}( \mathbf{W}_w , \mathbf{Q} ,\delta) \nn \\
&\,\,\,\,\,\text{Subject to:  } \mathcal{D}(\mathbb{P}_0^n \parallel \mathbb{P}_1^n)- 2 \delta^2 \leq 0 \nn \\
&= C_{pd}^w(\delta). 
\end{align}
\end{prop}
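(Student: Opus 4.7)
My plan is to establish the chain of equalities in~\eqref{eq:worst_lpd_cap} by noting that the first and last equalities follow from Proposition~\ref{prop:W_ww} and the compound channel interpretation of $C_{pd}(\delta)$, so that the substantive step is the interchange of $\min$ and $\max$. Weak duality always yields
\[
\min_{\mathbf{W}_w \in \mathcal{S}_w} \max_{\mathbf{Q}} \mathcal{R}(\mathbf{W}_w,\mathbf{Q},\delta) \;\geq\; \max_{\mathbf{Q}} \min_{\mathbf{W}_w \in \mathcal{S}_w} \mathcal{R}(\mathbf{W}_w,\mathbf{Q},\delta),
\]
with both sides taken subject to $\mathbf{tr}(\mathbf{Q})\leq P$ and the LPD condition $\mathcal{D}(\mathbb{P}_0^n\parallel\mathbb{P}_1^n)\leq 2\delta^2$. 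Proposition~\ref{prop:W_ww} already identifies the left-hand side as $C_{pd}^w(\delta)$, attained at the isotropic worst channel $\mathbf{W}_w^{\star}=\gamma_w\hat{\mathbf{I}}$ with some inner maximizer $\mathbf{Q}^{\star}$.

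For the reverse inequality, I would exhibit the pair $(\mathbf{Q}^{\star},\gamma_w\hat{\mathbf{I}})$ as a saddle point. The key simplification is that the objective $\mathcal{R}(\mathbf{W}_w,\mathbf{Q},\delta)=\log|\mathbf{I}_{N_a}+\mathbf{W}_b\mathbf{Q}/\sigma_b^2|$ in~\eqref{eq:lpd_rate} does not actually depend on $\mathbf{W}_w$, so the entire coupling between Alice and Willie is carried by the LPD constraint. Hence the inner $\min_{\mathbf{W}_w}$ evaluated at $\mathbf{Q}^{\star}$ equals $\mathcal{R}(\mathbf{Q}^{\star})=C_{pd}^w(\delta)$, and it remains only to verify that $\mathbf{Q}^{\star}$ is admissible in the outer $\max$, i.e., that the LPD constraint at $\mathbf{Q}^{\star}$ is satisfied for every $\mathbf{W}_w\in\mathcal{S}_w$. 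This will give $\max_{\mathbf{Q}}\min_{\mathbf{W}_w}\mathcal{R}\geq C_{pd}^w(\delta)$ and close the loop.

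The admissibility of $\mathbf{Q}^{\star}$ reduces to a monotonicity statement for the divergence formula~\eqref{eq:D_gen_cont} in the spectrum of $\mathbf{W}_w$, which I expect to be the main technical obstacle. Define $g(t)=\log(1+t)+(1+t)^{-1}-1$; then $g(0)=0$ and $g'(t)=t/(1+t)^2\geq 0$ on $[0,\infty)$, so $g$ is nondecreasing. Writing the divergence as $\sum_i g(\lambda_i)$, where $\lambda_i$ denote the eigenvalues of the symmetric matrix $\mathbf{Q}^{\star 1/2}\mathbf{W}_w\mathbf{Q}^{\star 1/2}/\sigma_w^2$ (whose nonzero spectrum coincides with that of $\mathbf{H}_w\mathbf{Q}^{\star}\mathbf{H}_w^{\dagger}/\sigma_w^2$), the L\"owner ordering $\mathbf{W}_w\preceq\gamma_w\hat{\mathbf{I}}$ gives $\mathbf{Q}^{\star 1/2}\mathbf{W}_w\mathbf{Q}^{\star 1/2}\preceq\gamma_w\mathbf{Q}^{\star 1/2}\hat{\mathbf{I}}\mathbf{Q}^{\star 1/2}$, and Weyl's monotonicity inequalities then deliver eigenvalue-by-eigenvalue domination. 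Summing the monotone $g$ over the eigenvalues yields $\mathcal{D}(\mathbb{P}_0\parallel\mathbb{P}_1)\big|_{\mathbf{W}_w}\leq\mathcal{D}(\mathbb{P}_0\parallel\mathbb{P}_1)\big|_{\gamma_w\hat{\mathbf{I}}}\leq 2\delta^2/n$, establishing admissibility. Combined with weak duality, this proves the saddle-point identity, and the outer equality with $C_{pd}(\delta)$ follows because Alice's actual optimization in the compound channel must enforce the LPD constraint against every admissible $\mathbf{W}_w\in\mathcal{S}_w$, equivalently against the isotropic worst case singled out by Proposition~\ref{prop:W_ww}.
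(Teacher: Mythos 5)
Your argument is correct and ultimately rests on the same structural observation as the paper: the objective $\mathcal{R}(\mathbf{W}_w, \mathbf{Q}, \delta) = \log\abs{\mathbf{I}_{N_a}+\mathbf{W}_b\mathbf{Q}/\sigma_b^2}$ is independent of $\mathbf{W}_w$, so the only coupling between the two players sits in the LPD constraint, and the min--max interchange reduces to showing that $\mathcal{D}\bigl(\mathbb{P}_0\parallel\mathbb{P}_1(\mathbf{W}_w)\bigr)$ is monotone in the L\"owner order on $\mathbf{W}_w$. Where you genuinely diverge is in how you establish that monotonicity. The paper invokes it as a fact proved alongside Proposition~\ref{prop:W_ww} via a Jensen-type estimate on $\mathbb{E}_{\mathbb{P}_0}\bigl[\log\bigl(\mathbb{P}_1(\mathbf{W}_{w1})/\mathbb{P}_1(\mathbf{W}_{w2})\bigr)\bigr]$, an argument that is stated quite loosely (the step labelled as Jensen's inequality is actually an algebraic identity, and the final sign condition is only asserted). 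You instead write $\mathcal{D} = \sum_i g(\lambda_i)$ with $g(t) = \log(1+t) + (1+t)^{-1} - 1$, check $g'(t) = t/(1+t)^2 \geq 0$, push the order $\mathbf{W}_w \preceq \gamma_w\hat{\mathbf{I}}$ through conjugation by $\mathbf{Q}^{1/2}$, and apply Weyl's eigenvalue monotonicity. This is shorter, self-contained, and verifiably correct; it also makes explicit both the weak-duality inequality and the admissibility check for $\mathbf{Q}^{\star}$ in the $\max$--$\min$ direction, details that the paper's one-line proof leaves implicit. In short: same decomposition, but a cleaner and more rigorous spectral proof of the key monotonicity ingredient.
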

\begin{proof}
By realizing that, for any feasible $\mathbf{Q}$, the function $\mathcal{D} \left(\mathbb{P}_0 \parallel \mathbb{P}_1(\mathbf{W}_w)\right)$ is monotonically increasing in $\mathbf{W}_w$, we have that
\begin{align}
&\min_{\mathbf{W}_w \in \mathcal{S}_w}  \mathcal{R}( \mathbf{W}_w , \mathbf{Q} ,\delta) = \mathcal{R}( \gamma_w \hat{\mathbf{I}}, \mathbf{Q} ,\delta) \nn \\
&\,\,\,\,\,\text{Subject to:  } \mathcal{D}(\mathbb{P}_0^n \parallel \mathbb{P}_1^n)- 2 \delta^2.
\end{align} 
Hence, the required result follows by using proposition \ref{prop:W_ww}.
\end{proof}

\subsection{Evaluation of $C_{pd}(\delta)$}
In light of the saddle point property established in the previous Section, in this Section we characterize 
$C_{pd}(\delta)$ by solving (\ref{eq:worst_lpd_cap}) for the optimal signaling strategy, $\mathbf{Q}_*$. We give the main result of this Section in the following theorem.
\begin{theorem}
\label{thm:main_general} 
The eigenvalue decomposition of the capacity achieving input covariance matrix that solves \eqref{eq:eq_MIMO_proplem} is given by $\mathbf{Q}_* = \mathbf{U}_b\Lambda \mathbf{U}_b^{\dagger}$ where $\mathbf{U}_b \in \mathbb{C}^{N_a \times N_a}$ is the matrix whose columns are the right singular vectors of $\mathbf{H}_b$ and $\Lambda$ is a diagonal matrix whose diagonal entries, $\Lambda_{ii}$, are given by the solution of 
\begin{align}
\label{eq:diag_opt_main}
\lambda = &(\sigma_b^2 \lambda_i^{-1}(\mathbf{W}_b) + \Lambda_{ii})^{-1} \nn \\
&+ \eta \left( \left(\dfrac{\sigma_w^2}{\gamma_w} + \Lambda_{ii}\right)^{-2}- \left(\dfrac{\sigma_w^2}{\gamma_w} + \Lambda_{ii}\right)^{-1}\right)
\end{align}
where $\lambda$ and $\eta$ are constants determined from the constraints $\mathbf{tr}\left\{\mathbf{Q}\right\} \leq P$ and (\ref{eq:metric}), respectively.
 Moreover,
 \begin{align}
 \label{eq:c_pd_cap}
 C_{pd}(\delta) &= \sum_{i=1}^{N} \log \left(1+ \dfrac{\Lambda_{ii}\lambda_i(\mathbf{W}_b)}{\sigma_b^2}\right)
 \end{align}
 where $\lambda_i$ is the $i^{th}$ non zero eigenvalue of $\mathbf{W}_b$.
\end{theorem}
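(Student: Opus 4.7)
The plan is to parlay the saddle-point property from Proposition \ref{prop:saddle} into an explicit water-filling computation. First I would invoke Propositions \ref{prop:W_ww} and \ref{prop:saddle} to replace the min-max over Willie's channel by a single evaluation at $\mathbf{W}_w = \gamma_w \hat{\mathbf{I}}$. Applying Sylvester's determinant identity to the expression in \eqref{eq:D_gen_cont}, the log-det piece of the KL divergence becomes $\log|\mathbf{I}_{N_a} + (\gamma_w/\sigma_w^2)\hat{\mathbf{I}}\,\mathbf{Q}|$, which in turn equals $\log|\mathbf{I}_M + (\gamma_w/\sigma_w^2)\mathbf{Q}_M|$ where $\mathbf{Q}_M$ is the top-left $M \times M$ principal block of $\mathbf{Q}$; the trace term collapses analogously. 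Thus, after the saddle-point substitution, the LPD constraint is a symmetric (Schur-concave) functional of the eigenvalues of $\mathbf{Q}_M$ alone.

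Next I would argue that the maximizer diagonalizes in the right singular basis of $\mathbf{H}_b$. For any fixed spectrum $\{\Lambda_{ii}\}$, write $\mathbf{Q} = \mathbf{U}\Lambda\mathbf{U}^{\dagger}$ and note that the objective $\log|\mathbf{I} + \mathbf{W}_b\mathbf{Q}/\sigma_b^2|$ is maximized over unitary $\mathbf{U}$ when $\mathbf{U} = \mathbf{U}_b$, by the classical log-det majorization result (equivalently, by simultaneous diagonalization of $\mathbf{W}_b$ and $\mathbf{Q}$). Because the LPD constraint only sees the eigenvalues of $\mathbf{Q}_M$, any such rotation is admissible after a suitable relabeling of diagonal entries, with the largest powers assigned to the strongest modes of $\mathbf{W}_b$. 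Consequently $\mathbf{Q}_* = \mathbf{U}_b\Lambda\mathbf{U}_b^{\dagger}$ with $\Lambda$ diagonal, and the problem reduces to a scalar optimization over the $\Lambda_{ii}$ subject to $\sum_i \Lambda_{ii} \le P$ and the now-scalar LPD inequality.

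The final step is a standard Lagrangian computation. Form $\mathcal{L} = \sum_i \log(1+\lambda_i(\mathbf{W}_b)\Lambda_{ii}/\sigma_b^2) - \lambda(\sum_i \Lambda_{ii} - P) - \eta(\mathcal{D}(\Lambda) - 2\delta^2/n)$ and differentiate with respect to each $\Lambda_{ii}$. The rate term contributes $(\sigma_b^2/\lambda_i(\mathbf{W}_b) + \Lambda_{ii})^{-1}$, the power constraint contributes $-\lambda$, and differentiating the isotropic KL kernel in closed form produces precisely the bracketed expression in \eqref{eq:diag_opt_main}. Collecting terms yields the stationarity condition in the theorem, with $\lambda$ and $\eta$ the dual variables pinned down by complementary slackness for the power and LPD constraints respectively. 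Substituting $\mathbf{Q}_*$ back into the objective immediately gives \eqref{eq:c_pd_cap}.

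The main obstacle will be rigorously justifying the ``align-with-$\mathbf{H}_b$'' step when $M < N_a$, since in that regime the LPD constraint genuinely depends on where power is placed relative to the support of $\hat{\mathbf{I}}$; one must combine the unitary freedom exposed by the isotropic worst case with a majorization/rearrangement argument in the spirit of Hardy--Littlewood--P\'olya to ensure that the largest eigenvalues of $\mathbf{Q}$ can simultaneously be aligned with the top singular directions of $\mathbf{H}_b$ and absorbed into the ``visible'' block selected by $\hat{\mathbf{I}}$. A secondary technicality is confirming KKT sufficiency for a global optimum, which follows from concavity of $\log|\cdot|$ and convexity of $\mathcal{D}(\mathbb{P}_0 \parallel \mathbb{P}_1)$ in the covariance of $\mathbb{P}_1$.
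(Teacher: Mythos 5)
Your overall route matches the paper's: pass to the isotropic worst-case Willie channel $\gamma_w\hat{\mathbf{I}}$ via Propositions~\ref{prop:W_ww} and \ref{prop:saddle}, align $\mathbf{Q}$ with the right singular basis of $\mathbf{H}_b$ (justified by the Hadamard/majorization inequality and the fact that the isotropic Willie constraint is invariant to this rotation), then write the Lagrangian and read off the scalar stationarity condition. That core sequence of steps is the same as in the appendix, and your Sylvester-identity reduction to the ``visible'' $M\times M$ block is a perfectly sound way to handle $N_w < N_a$ explicitly (the paper simply sets the extra eigenvalues to zero after assuming $N_w\geq N_a$ WLOG, which is a shortcut to the same place).

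The one genuine error is your last sentence, where you claim KKT sufficiency from ``concavity of $\log|\cdot|$ and convexity of $\mathcal{D}(\mathbb{P}_0 \parallel \mathbb{P}_1)$ in the covariance of $\mathbb{P}_1$.'' This does not hold. After substituting the isotropic worst case, the LPD constraint reads $\sum_i f\!\left(\tfrac{\gamma_w\Lambda_{ii}}{\sigma_w^2}\right)\leq \tfrac{2\delta^2}{n}$ with $f(a)=\log(1+a)+\tfrac{1}{1+a}-1$, and
\[
f''(a)=\frac{1-a}{(1+a)^3},
\]
which changes sign at $a=1$. So $\mathcal{D}$ is \emph{not} convex in $\mathbf{Q}$ over the full feasible region, the sub-level set $\{\mathbf{Q}:\mathcal{D}\leq 2\delta^2/n\}$ need not be convex, and the optimization problem \eqref{eq:eq_MIMO_proplem} is not a concave program. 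The paper is careful here: it states explicitly that the problem is not concave in general, that KKT gives only necessary conditions (since the feasible set is compact and the objective continuous), and that one must still ``inspect the objective at the stationary points'' to pick out the global maximizer. Your proposal replaces that caveat with an unjustified sufficiency claim. The fix is simply to weaken the assertion: characterize the stationary points via \eqref{eq:diag_opt_main}, then argue separately (e.g., by boundedness of the feasible region and comparison of candidate stationary points, or by restricting attention to the operationally relevant small-$\Lambda_{ii}$ regime where $f$ \emph{is} locally convex) that the one selected is the global optimum. Everything else in your sketch is consistent with the paper's proof.
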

\begin{proof}  
See Appendix \ref{app:main_general}.
\end{proof} 
The result of Theorem \ref{thm:main_general} provides the full characterization of $C_{pd}(\delta)$ of the considered class of channels. It can be seen that, the singular value decomposition (SVD) precoding \cite{telatar1999capacity} is the optimal signaling strategy except for the water filling strategy in (\ref{eq:diag_opt_main}) which is chosen to satisfy both power and LPD constraints. Unlike both MIMO channel without security constraint and MIMO wiretap channel, transmission with full power is, indeed, not optimal. Let
\begin{align}
 P_{th}  \triangleq \sum_i \Lambda_{ii},
\end{align}
be the maximum total power that is transmitted by Alice. An equivalent visualization of our problem is that Alice need to choose a certain power threshold, $P_{th}$, to satisfy the LPD constraint. However, again, $P_{th}$ is distributed along the \textit{eigenmodes} using conventional water filling solution. Although it is not straightforward to obtain a closed form expression\footnote{Using Mathematica, $\Lambda_{ii}$ was found to be an expression of almost 30 lines which does not provide the required insights here.} for $C_{pd}(\delta)$ and, hence, $L$, we could obtain both upper and lower bounds on $C_{pd}(\delta)$ which leads to upper and lower bounds on $L$. Based on the obtained bounds, we give the square-root law for MIMO AWGN channel in the following Theorem.
\begin{theorem}[Square-root Law of MIMO AWGN channel]
\label{thm:srl_mimo}
 For the considered class of channels, the following bounds on $C_{pd}(\delta)$ holds
\begin{align}
\sum_{i=1}^{N} \log &\left(1+ \dfrac{\sqrt{2}\sigma_w^2\delta \lambda_i(\mathbf{W}_b)}{\sigma_b^2\gamma_w \sqrt{n M}} \right) \leq C_{pd}(\delta) \nn \\
& \leq \sum_{i=1}^{N} \log \left(1+ \dfrac{\sqrt{2}\sigma_w^2 \xi \delta \lambda_i(\mathbf{W}_b)}{\sigma_b^2\gamma_w \sqrt{n M}} \right)
\end{align}
where $\xi \geq 1$ is a function of $\delta$ that approaches $1$ as $\delta$ goes to $0$. Moreover, 
\begin{align}
\label{eq:l_bounds_sk}
\sum_{i=1}^{N} \dfrac{\sigma_w^2 \lambda_i(\mathbf{W}_b)}{\sigma_b^2\gamma_w \sqrt{ M}} \leq L \leq \sum_{i=1}^{N} \dfrac{\sigma_w^2 \xi \lambda_i(\mathbf{W}_b)}{\sigma_b^2\gamma_w \sqrt{M}}.
\end{align}
\end{theorem}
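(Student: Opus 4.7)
The plan is to leverage the explicit characterization from Theorem~\ref{thm:main_general}: the optimum uses SVD precoding and $C_{pd}(\delta)=\sum_{i=1}^{N}\log\bigl(1+\Lambda_{ii}\lambda_i(\mathbf{W}_b)/\sigma_b^2\bigr)$, with the water levels $\Lambda_{ii}$ determined by \eqref{eq:diag_opt_main} under the LPD budget $2\delta^2/n$. Bounding $C_{pd}(\delta)$ therefore reduces to two-sided bounds on each $\Lambda_{ii}$, and the whole argument is driven by the elementary identity $\phi(x)\triangleq\log(1+x)-x/(1+x)=x^2/2+O(x^3)$ combined with the closed form \eqref{eq:D_gen_cont} evaluated at the isotropic worst-case Willie $\mathbf{W}_w=\gamma_w\hat{\mathbf{I}}$ of Proposition~\ref{prop:W_ww}. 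Under that worst case, \eqref{eq:D_gen_cont} collapses to $\mathcal{D}=\sum_j\phi(\mu_j)$, where the $\mu_j$ are the nonzero eigenvalues of $\gamma_w\mathbf{Q}\hat{\mathbf{I}}/\sigma_w^2$.

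For the lower bound on $C_{pd}(\delta)$, I would exhibit a feasible suboptimal allocation: set $\Lambda_{ii}=\Lambda$ uniformly across the active transmit modes and use the global inequality $\phi(x)\le x^2/2$ for $x\ge 0$ (which follows from noting that $\psi(x)\triangleq x^2/2-\phi(x)$ has $\psi(0)=0$ and $\psi'(x)=x^2(2+x)/(1+x)^2\ge 0$). The worst-case LPD constraint then reduces to $M(\gamma_w\Lambda/\sigma_w^2)^2/2\le 2\delta^2/n$, which solves to $\Lambda=\sqrt{2}\sigma_w^2\delta/(\gamma_w\sqrt{nM})$. Monotonicity of $\log(1+\cdot)$ in \eqref{eq:c_pd_cap} then delivers the claimed lower bound on $C_{pd}(\delta)$, with the $\sqrt{M}$ in the denominator reflecting that it is Willie's isotropic observable dimension that absorbs the covertness budget.

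For the upper bound, I would use the converse Taylor estimate: on any bounded interval $[0,\bar x]$ there exists $\xi=\xi(\bar x)\ge 1$, with $\xi(\bar x)\downarrow 1$ as $\bar x\downarrow 0$, such that $\phi(x)\ge x^2/(2\xi)$. The LPD constraint must bind at the optimum---otherwise some $\Lambda_{ii}$ with $\lambda_i(\mathbf{W}_b)>0$ could be enlarged and $C_{pd}(\delta)$ strictly improved---so applying this lower Taylor bound to \eqref{eq:D_gen_cont} at the optimum forces $\Lambda_{ii}\le\sqrt{2}\sigma_w^2\xi\delta/(\gamma_w\sqrt{nM})$ for each $i=1,\ldots,N$. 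Substitution into \eqref{eq:c_pd_cap} closes the finite-$n$ upper bound.

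The bounds on $L$ in \eqref{eq:l_bounds_sk} then follow by multiplying $C_{pd}(\delta)$ by $\sqrt{n/(2\delta^2)}$ and letting $n\to\infty$: each logarithm's argument vanishes like $1/\sqrt{n}$, so $\log(1+x)\sim x$ collapses the sums to linear combinations of the $\lambda_i(\mathbf{W}_b)$, and $\xi\downarrow 1$ in the same limit (retained in the upper bound only to keep it informative at finite $n$). The step I expect to be the main obstacle is the careful eigenvalue bookkeeping for $\mathbf{H}_w\mathbf{Q}\mathbf{H}_w^\dagger$ under the isotropic worst case: one must justify cleanly that the effective number of active terms in the LPD sum is controlled by Willie's observable dimension $M$, and that the resulting uniform per-mode budget of order $\delta/\sqrt{nM}$ is simultaneously achievable for the lower bound and not improvable by nonuniform water-filling---beyond the $\xi$-factor---for the upper bound.
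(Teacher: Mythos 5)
Your argument mirrors the paper's Appendix~\ref{app:srl_mimo} in its essential structure: worst-case isotropic Willie from Proposition~\ref{prop:W_ww}, the relative entropy of \eqref{eq:D_gen_cont} written as a sum of scalar terms $\phi(\mu_j)$ with $\phi(x)=\log(1+x)-x/(1+x)$, quadratic control of $\phi$, and a $\xi$-factor to quantify the Taylor slack in the converse. The paper arrives at the same two-sided $P_{th}$ scaling by using $\log\abs{\mathbf{A}}\le\mathbf{tr}\{\mathbf{A}-\mathbf{I}\}$ for achievability and the postulate $\log\abs{\mathbf{I}_{N_a}+\gamma_w\mathbf{Q}/\sigma_w^2}-N_w\ge\mathbf{tr}\{\gamma_w\mathbf{Q}/\sigma_w^2\}-\xi N_w$ for the converse; your $\phi(x)\le x^2/2$ and $\phi(x)\ge x^2/(2\xi)$ are the scalar equivalents, and the passage from $C_{pd}(\delta)$ to $L$ is identical.

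The one step that does not close as written is your converse per-mode bound. From the binding LPD constraint and $\phi(x)\ge x^2/(2\xi)$ you obtain only $\sum_j\mu_j^2\le 4\xi\delta^2/n$, which forces each $\mu_j=\mathcal{O}(\delta/\sqrt{n})$ but does not produce the claimed $\mu_j=\mathcal{O}(\delta/\sqrt{nM})$: a single large $\mu_j$ with the rest negligible satisfies the budget, so when $\mathbf{W}_b$ is ill-conditioned the optimal water-filling can make $\Lambda_{11}$ exceed $\sqrt{2}\sigma_w^2\xi\delta/(\gamma_w\sqrt{nM})$ by up to a $\sqrt{M}$ factor, and your step ``applying this lower Taylor bound \ldots forces $\Lambda_{ii}\le\sqrt{2}\sigma_w^2\xi\delta/(\gamma_w\sqrt{nM})$'' is not justified. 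The paper's converse fills exactly this hole by first restricting attention to a well-conditioned $\mathbf{H}_b$ (argued to be the most favorable channel for Alice at fixed Frobenius norm) and imposing equal power $\Lambda_{ii}=P_{th}/N$, so that all active $\mu_j$ are equal and the $1/\sqrt{M}$ falls out of the number of terms in the covertness sum. You correctly flag this eigenvalue bookkeeping as the main obstacle, but the remark that nonuniform water-filling ``cannot improve beyond the $\xi$-factor'' is asserted rather than proved; the paper replaces it with the explicit equal-power reduction, and your proof needs that (or an equivalent convexity argument) to deliver the $\sqrt{M}$.
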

\begin{proof}
We give both achievability and converse results in Appendix \ref{app:srl_mimo}.
\end{proof}
Theorem \ref{thm:srl_mimo} extends the square-root law for scalar AWGN channel to the MIMO AWGN channel. In particular, it states that Alice can transmit a maximum of $\mathcal{O}(N\sqrt{n / M})$ bits reliably to Bob in $n$ independent channel uses while keeping Willie's sum of error probabilities lower bounded by $1-\delta$. The interesting result here is that, the gain in covert rate scales linearly with number of active eigenmodes of Bob channel. Meanwhile, it scales down with the square-root of the number of active eigenmodes over Willie channel. This fact will be of great importance when we study the case of massive MIMO limit. Further, the bounds on $L$ in \eqref{eq:l_bounds_sk} can, with small effort, generate the result of Theorem 5 in \cite{wang2016fundamental} by setting $N=M=1$, $\lambda(\mathbf{W}_b)=\gamma_w$ and $\sigma_b = \sigma_w$.

It worth mentioning that, in some practical situations, compound MIMO channel can be too conservative for resource allocation. In particular, the bounded spectral norm condition in (\ref{eq:willie_compound}) not only leads us to the worst case Willie channel, but it also does not restrict its eigenvectors leaving the beamforming strategy used by Alice (SVD precoding) to be of insignificant gain in protecting against Willie. Although we believe that the eigenvectors of Willie channel plays an important role in the determination of the achievable covert rate, the ignorance of Alice about Willie channel leaves the compound framework as our best option. 

\section{Unit Rank MIMO Channel}
As pointed out in the previous Section, the distinction between the eigenvectors of Bob and Willie channels would have a considerable effect on the achievable covert rate. However, unavailability of Willie's CSI left the compound framework as the best model for $\mathbf{H}_w$. In this Section, we consider the case when either $\mathbf{H}_w$ or Both $\mathbf{H}_w$ and $\mathbf{H}_b$ are of unit rank. This scenario not only models the case when both Bob and Willie have a single antenna, but it also covers the case when they have a strong line of sight with Alice. Moreover, this scenario allow us to evaluate the effect of the eigenvectors of  $\mathbf{H}_w$ and $\mathbf{H}_b$ on the achievable covert rate. 

\subsection{Unit Rank Willie Channel}
\label{sec:main_willie_rank1}
In this Section we analyze the scenario in which only Willie channel is of unit rank. In this case, we can write $\mathbf{H}_w = \lambda_w^{1/2} v_w \mathbf{u}_w^{\dagger}$, where $v_w \in \mathbb{C}^{N_w}$ and $\mathbf{u}_w \in \mathbb{C}^{N_{a}}$ are the left and right singular vectors of $\mathbf{H}_w$. Accordingly, $\mathbf{W}_w = \lambda_w \mathbf{u}_w \mathbf{u}_w^{\dagger}$ and the product $\mathbf{W}_w\mathbf{Q}$ has only one non zero eigenvalue. The nonzero eigenvalue $\lambda(\mathbf{W}_w\mathbf{Q})$ is \textit{loosely} upper bounded by $\lambda_w \lambda_{max}(\mathbf{Q})$. Accordingly, following the same steps of the proof of Theorem \ref{thm:srl_mimo} we can get (assuming well conditioned Bob channel, i.e. $\lambda_i(\mathbf{W}_b) = \lambda_b$ for all $i$)
\begin{align}
N \log &\left(1+ \dfrac{\sqrt{2}\sigma_w^2\delta \lambda_b}{\sigma_b^2 \lambda_w \sqrt{n}} \right) \leq C_{pd}(\delta) \nn \\
& \leq N \log \left(1+ \dfrac{\sqrt{2}\sigma_w^2 \xi \delta \lambda_b}{\sigma_b^2 \lambda_w \sqrt{n}} \right)
\end{align}
consequently,
\begin{align}
\label{eq:nr_main_rank1}
N \dfrac{\sigma_w^2 \lambda_b}{\sigma_b^2 \lambda_w }   \leq L \leq N  \dfrac{\sigma_w^2 \xi \lambda_b}{\sigma_b^2 \lambda_w }.
\end{align}
Again, Bob gets $\mathcal{O}(N \sqrt{n})$ bits in $n$ independent channel uses. We note also that, the achievable covert rate increases linearly with $N$. 

\subsection{Both Channels are of Unit Rank}
\label{sec:main_bob_rank1}
Consider the case when both $\mathbf{H}_b$ and $\mathbf{H}_w$ are of unit rank. In this case, we have $N=1$. However, setting $N=1$ in the results established so far will yield a loose bounds on the achievable LPD constrained rate. The reason is that, the bound $\lambda(\mathbf{W}_e\mathbf{Q}) \leq \lambda_w\lambda_{max}(\mathbf{Q})$ is, in fact, too loose especially for large values of $N_a$. Although it is hard to establish a tighter upper bound on $\lambda(\mathbf{W}_e\mathbf{Q})$ when $\mathbf{Q}$ is of high rank, it is straightforward to obtain the exact expression for $\lambda(\mathbf{W}_e\mathbf{Q})$ when $\mathbf{Q}$ is of unit rank (which is the case when $rank\{\mathbf{H}_b\}=1$). Given that $\mathbf{H}_b = \lambda_b^{1/2} v_b \mathbf{u}_b^{\dagger}$, Alice will set $\mathbf{Q}=P_{th}\mathbf{u}_b \mathbf{u}_b^{\dagger}$. Accordingly, we have
\begin{align}
\label{eq:eig_rank1}
\lambda(\mathbf{W}_e\mathbf{Q}) &= \lambda_w P_{th} \abs{<\mathbf{u}_b,\mathbf{u}_w>}^2 \nn \\
&= \lambda_w P_{th} \cos^2(\theta),
\end{align}
where $\theta$ is the angle between $\mathbf{u}_b$ and $\mathbf{u}_w$. We give the main result of this Section in the following theorem.
\begin{theorem}
\label{thm:rank1}
If $rank\{\mathbf{H}_b\} = rank\{\mathbf{H}_e\} = 1$, then,
\begin{align}
\label{eq:C_rank1}
\min\left\{ \log \left(1+ \dfrac{\sqrt{2}\sigma_w^2\delta \lambda_b}{\sigma_b^2 \lambda_w \cos^2(\theta) \sqrt{n}} \right) , C \right\} \leq C_{pd}(\delta) \nn \\
\,\,\,\,\leq   \min\left\{\log \left(1+ \dfrac{\sqrt{2}\sigma_w^2 \xi \delta \lambda_b}{\sigma_b^2 \lambda_w \cos^2(\theta) \sqrt{n}} \right), C \right\}
\end{align}
where $C$ is the non LPD constrained capacity of Alice to Bob channel. Accordingly,
\begin{align}
\label{eq:L_rank1}
 \begin{cases}
      L= \infty, & \text{if}\ \theta=\pi/2 \\
      \dfrac{\sigma_w^2 \lambda_b}{\sigma_b^2 \lambda_w \cos^2(\theta)}  \leq L \leq   \dfrac{\sigma_w^2 \xi \lambda_b}{\sigma_b^2 \lambda_w \cos^2(\theta)}, & \text{otherwise}
    \end{cases}.
\end{align}
\end{theorem}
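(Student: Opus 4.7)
The strategy is to use the rank-one geometry to reduce the MIMO covert-capacity problem to a scalar power-allocation, then invert the relative-entropy constraint directly.

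Since $\mathbf{H}_b$ is rank one along $\mathbf{u}_b$, Alice naturally aligns her transmit covariance with $\mathbf{u}_b$, so $\mathbf{Q} = P_{th}\mathbf{u}_b\mathbf{u}_b^{\dagger}$ for some $0\leq P_{th} \leq P$: any component of $\mathbf{Q}$ orthogonal to $\mathbf{u}_b$ contributes nothing to Bob's rate and, since Alice does not know $\mathbf{u}_w$, cannot be used for null steering, so it can only tighten the LPD budget. Under this reduction, the achievable rate is $\mathcal{R}(\mathbf{W}_w,\mathbf{Q},\delta) = \log(1+P_{th}\lambda_b/\sigma_b^2)$, and by \eqref{eq:eig_rank1}, $\mathbf{H}_w\mathbf{Q}\mathbf{H}_w^{\dagger}/\sigma_w^2$ is rank one with single nonzero eigenvalue $a \triangleq \lambda_w P_{th}\cos^2\theta/\sigma_w^2$.

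Substituting this rank-one matrix into \eqref{eq:D_gen_cont} collapses the per-use divergence to a scalar function $\mathcal{D}(\mathbb{P}_0\parallel\mathbb{P}_1) = f(a)\triangleq \log(1+a) + (1+a)^{-1} - 1$. The Taylor expansion $f(a) = a^2/2 + O(a^3)$ together with strict monotonicity of $f$ on $[0,\infty)$ implies that the $n$-fold constraint $n f(a)\leq 2\delta^2$ has a unique maximal feasible $a_*$, inverting at leading order as $a_* \sim c\,\delta/\sqrt{n}$ for a universal constant $c$. Two-sided Taylor control yields $a_* = \xi\,c\,\delta/\sqrt{n}$ with $\xi \geq 1$ and $\xi \downarrow 1$ as $\delta/\sqrt{n}\downarrow 0$, whence $P_{th}^{\ast} = \xi\,c\,\sigma_w^2\delta/(\lambda_w\cos^2\theta\sqrt{n})$.

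Substituting $P_{th}^{\ast}$ into $\mathcal{R}$ gives the upper bound in \eqref{eq:C_rank1}; using $\xi=1$ (admissible because $f(a)\leq a^2/2$ on a neighborhood of zero) gives the matching lower bound. The outer $\min\{\cdot,C\}$ enters because the LPD-dictated $P_{th}^{\ast}$ may exceed $P$, after which the total-power constraint binds and the unconstrained capacity $C$ caps $C_{pd}$. Normalizing by $\sqrt{2\delta^2/n}$ and letting $n\to\infty$ produces the bounds on $L$ in \eqref{eq:L_rank1}. The degenerate case $\theta = \pi/2$ follows at once: $\cos^2\theta = 0$ forces $\mathbf{H}_w\mathbf{Q}\mathbf{H}_w^{\dagger}=\mathbf{0}$ regardless of $P_{th}$, so the LPD constraint becomes vacuous, $P_{th}=P$ is feasible, $C_{pd}(\delta)=C$ is a strictly positive constant in $n$, and $L = \lim_{n\to\infty}\sqrt{n/(2\delta^2)}\,C = \infty$.

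The main obstacle is the sharpness of the $f$-inversion: controlling the $O(a^3)$ Taylor remainder uniformly over the shrinking feasible window $a\in[0,O(\delta/\sqrt{n})]$ so that the upper and lower bounds differ only by the multiplicative factor $\xi$ that vanishes with $\delta$. The remaining ingredients---rank reduction, substitution into \eqref{eq:D_gen_cont}, and handling the two binding regimes (power versus LPD) via the outer $\min$---are direct consequences of the rank-one structure.
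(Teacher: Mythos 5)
The proposal is correct and follows essentially the same route as the paper. The paper states that Theorem~\ref{thm:rank1} ``follows directly by substituting (\ref{eq:eig_rank1}) into (\ref{eq:D_gen_cont}) and following the same steps as in the proof of Theorem~\ref{thm:srl_mimo} while realizing that $P_{th} \leq P$,'' and that is precisely what you do: align $\mathbf{Q}$ with $\mathbf{u}_b$, identify the single nonzero eigenvalue $a = \lambda_w P_{th}\cos^2\theta/\sigma_w^2$, collapse the divergence to the scalar $f(a)=\log(1+a)+(1+a)^{-1}-1$, invert the constraint $nf(a)\leq 2\delta^2$, and cap with $C$ when the implied power exceeds $P$. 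Your scalar-function view with two-sided Taylor control of $f$ is a cleaner restatement of what the paper accomplishes with the matrix inequalities $\log\abs{A}\leq\mathbf{tr}\{A-\mathbf{I}\}$ (achievability) and the $\xi$-slackened lower determinant bound (\ref{eq:log_det_ineq}) (converse); the two are equivalent once reduced to one dimension. The only cosmetic discrepancy is the leading constant: your Taylor estimate $f(a)=a^2/2+O(a^3)$ would naturally give $a_* \sim 2\delta/\sqrt{n}$, whereas the theorem's $\sqrt{2}$ comes from the paper's slightly looser chain $f(a)\leq a^2/(1+a)\leq a^2$; since you leave $c$ abstract and absorb the remainder into $\xi\to 1$, this does not affect the validity of the scaling or the $L$-bounds, but if you wanted to reproduce the theorem's stated constants exactly you would need to use the paper's particular bounds rather than the sharp Taylor coefficient.
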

\begin{proof}
Follows directly by substituting (\ref{eq:eig_rank1}) into (\ref{eq:D_gen_cont}) and following the same steps as in the proof of Theorem \ref{thm:srl_mimo} while realizing that $P_{th} \leq P$.
\end{proof}
Theorem \ref{thm:rank1} proves that Alice can transmit a maximum of $\mathcal{O}(\sqrt{n}/\cos^2(\theta))$ bits reliably to Bob in $n$ independent channel uses while keeping Willie's sum of error probabilities lower bounded by $1-\delta$. In the statement of Theorem \ref{thm:rank1}, the minimum is taken since the first term diverges as $\theta \rightarrow \pi/2$, i.e., when $\mathbf{u}_b$ and $\mathbf{u}_w$ are orthogonal. In such case, we will have $L=\infty$. This fact proves that, over MIMO channel Alice can communicate at full rate to Bob without being detected by Willie. An interesting question is, how rare is the case of having $\mathbf{u}_b$ and $\mathbf{u}_w$ to be orthogonal? When the angle of the vectors (i.e., the antenna orientation) are chosen uniformly at random, as the number of antennas at Alice gets large, we will see in Section \ref{sec:massive} that  $\cos^2(\theta)$ approaches $0$ exponentially fast with the number of antennas at Alice.

\begin{remark}
It should not be inferred from the results in this Section that the unit rank Bob channel can offer covert rate better than that of higher rank. In fact, we used a loose upper bound on the eigenvalue of Willie's channel for higher rank case. That is due to the technical difficulty in setting tight bounds to the power received by Willie when Bob channel has higher rank. Also, we see that unit rank channel offers better covert rate than that shown under the compound settings for Willie channel. That is because, unlike the scenario of this Section, compound settings does not restrict the eigenvectors of Willie's channel.   
\end{remark} 
\section{$C_{pd}(\delta)$ Without Shared Secret}
\label{sec:PD_nosk}
So far, we have established fundamental limits of covert communication over MIMO AWGN channel under the assumption that the codebook generated by Alice is kept secret from Willie (or at least a secret of sufficiet length). In this Section, we study the LPD communication problem without this assumption. The assumption of keeping the codebook generated by Alice secret from Willie (or at least a secret of sufficient length \cite{bash2013limits,bloch2016covert}) is common in all studies of covert communication. Without this assumption, LPD condition cannot be met along with arbitrarily low probability of error at Bob, since, when Willie is informed about the codebook, he can decode the message using the same decoding strategy as that of Bob \cite{bash2013limits}. Also we note that, acheiving covertness does not require positive secrecy rate over the underlying wiretap channel. Indeed, recalling the expression for relative entropy at Willie
\begin{align}
\label{eq:D_gen_cont_nosk}
\mathcal{D} \left(\mathbb{P}_0 \parallel \mathbb{P}_1 \right) =&  \underbrace{\log \abs{\dfrac{1}{ \sigma_w^2 } \mathbf{H}_w \mathbf{Q}\mathbf{H}_w^{\dagger} + \mathbf{I}_{N_w}}}_{\text{Willie's Channel Capacity}}  \nonumber  \\
&+ \underbrace{\mathbf{tr}\left\{\left[\dfrac{1}{\sigma_w^2 } \mathbf{H}_w \mathbf{Q}\mathbf{H}_w^{\dagger} + \mathbf{I}_{N_w}\right]^{-1}\right\} - N_w}_{\leq 0 \text{, Willie's penalty due to codebook ignorance}},      
\end{align}
we observe that, the first term in (\ref{eq:D_gen_cont_nosk}) is the channel capacity of Willie channel with the implicit assumption of the knowledge of the codebook generated by Alice. In particular, the first term in (\ref{eq:D_gen_cont_nosk}) equals to $\mathcal{I}(\mathbf{x};(\mathbf{z},\mathbf{H}_w))$. Meanwhile, it can be easily verified that the remaining difference term in (\ref{eq:D_gen_cont_nosk}) is always non positive. This term represents Willie's penalty from his ignorance of the codebook. Analogous result for the scalar AWGN channel can be found in \cite{bash2013limits} for which the same arguments can be made. This in fact provides an interpretation to the scenario on which the secrecy capacity of the main channel may be zero, meanwhile, Alice still can \textit{covertly} communicate to Bob. Let us define
\begin{align}
\label{eq:L_hat}
\hat{L} \triangleq \lim_{\epsilon \downarrow 0} \varliminf_{n \rightarrow \infty} \dfrac{K_n(\delta,\epsilon)}{n\sqrt{2 \delta^2}}.
\end{align} 
Following Proposition \ref{prop:l_mimo} and theorem \ref{thm:l_mimo}, we can show that
\begin{align}
\hat{L} = \lim_{n  \rightarrow \infty} \dfrac{n}{\sqrt{2\delta^2}} C_{pd}(\delta).
\end{align} 
Observe that, in the definition of $\hat{L}$, we used the normalization over $n$ instead of $\sqrt{n}$. Now suppose that Alice chooses $\mathbf{Q}$ such that the first term in (\ref{eq:D_gen_cont_nosk}), which is the capacity of Willie' channel, is upper bounded by $2\delta^2/n$. Indeed this signaling strategy satisfies the LPD metric \eqref{eq:metric} and, thus, achieves covertness. Moreover, we have $\lim_{n \rightarrow \infty}\mathcal{I}(\mathbf{x};(\mathbf{z},\mathbf{H}_w)) = 0$, thus, strong secrecy condition is also met. In particular, if $\mathcal{I}(\mathbf{x};(\mathbf{z},\mathbf{H}_w)) \leq 2\delta^2/n$, Willie can reliably decode at most $2\delta^2$ nats of Alice's message in $n$ independent channel uses. It worth mentioning that, requiring $\lim_{n \rightarrow \infty}\mathcal{I}(\mathbf{x};(\mathbf{z},\mathbf{H}_w)) = 0$ is more restrictive than the strong secrecy condition. In principle, if Alice has a message $\mathbf{m}$ to transmit, strong secrecy condition requires $\lim_{n \rightarrow \infty}\mathcal{I}(\mathbf{m};(\mathbf{z},\mathbf{H}_w)) = 0$. Meanwhile, since $\mathbf{m}= f^{-1}(\mathbf{x})$ for some encoding function $f: \mathbf{m} \mapsto \mathbf{x}$, we have $\mathcal{I}(\mathbf{m};(\mathbf{z},\mathbf{H}_w)) \leq \mathcal{I}(\mathbf{x};(\mathbf{z},\mathbf{H}_w))$. Now, $C_{pd}(\delta)$ without any shared secret can be reformulated as follows:
\begin{align}
\label{eq:eq_MIMO_NOSK}
C_{pd}(\delta) = \max_{\substack{\mathbf{Q} \succeq \mathbf{0} \\ \mathbf{tr}(\mathbf{Q}) \leq P}}  &\log \abs{\mathbf{I}_{N_a} + \dfrac{\mathbf{W}_b\mathbf{Q}}{\sigma_b^2}}  \\
\label{eq:metric_nosk}
\text{Subject to:  } &\log \abs{\mathbf{I}_{N_a} + \dfrac{\mathbf{W}_w\mathbf{Q}}{\sigma_w^2}} - 2 \delta^2/n \leq 0.
\end{align} 
In light of the saddle point property established in Section \ref{sec:main_informed}, we characterize 
$C_{pd}(\delta)$ without shared secret by solving (\ref{eq:eq_MIMO_NOSK}) for the optimal signaling strategy, $\mathbf{Q}_*$. We give the main result of this Section in the following theorem.
\begin{theorem}
\label{thm:main_nosk}
The eigenvalue decomposition of the capacity achieving input covariance matrix that solves \eqref{eq:eq_MIMO_NOSK} is given by $\mathbf{Q}_* = \mathbf{u}_b\Lambda \mathbf{u}_b^{\dagger}$ where $\mathbf{u}_b \in \mathbb{C}^{N_a \times N_a}$ is the matrix whose columns are the right singular vectors of $\mathbf{H}_b$ and $\Lambda$ is a diagonal matrix whose diagonal entries, $\Lambda_{ii}$, are given by the solution of 
\begin{align}
\label{eq:diag_opt_main_nosk}
\lambda = &(\sigma_b^2 \lambda_i^{-1}(\mathbf{W}_b) + \Lambda_{ii})^{-1}- \eta  \left(\dfrac{\sigma_w^2}{\gamma_w} + \Lambda_{ii}\right)^{-1}
\end{align}
where $\lambda$ and $\eta$ are constants determined from the constraints $\mathbf{tr}\left\{\mathbf{Q}\right\} \leq P$ and (\ref{eq:metric}), respectively.
Moreover,
 \begin{align}
 \label{eq:c_pd_cap_nosk}
 C_{pd}(\delta) &= \sum_{i=1}^{N} \log \left(1+ \dfrac{\Lambda_{ii}\lambda_i(\mathbf{W}_b)}{\sigma_b^2}\right)
 \end{align}
 where $\lambda_i$ is the $i^{th}$ non zero eigenvalue of $\mathbf{W}_b$.
\end{theorem}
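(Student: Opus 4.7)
The proof will largely parallel that of Theorem \ref{thm:main_general}, so my plan is to reuse the machinery of Section \ref{sec:main_informed} wherever possible and only redo the parts that are different because the relative-entropy constraint has been replaced by the codebook-known capacity constraint \eqref{eq:metric_nosk}.

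First, I would verify that the saddle-point argument of Propositions \ref{prop:W_ww} and \ref{prop:saddle} still applies. The only property of the Willie term that was used there is that, for every fixed $\mathbf{Q}\succeq \mathbf{0}$, the map $\mathbf{W}_w\mapsto \log|\mathbf{I}_{N_a}+\mathbf{W}_w\mathbf{Q}/\sigma_w^2|$ is monotone in the positive semidefinite order (this is immediate from the matrix-monotonicity of $\log\det$). Since $\mathcal{S}_w$ is defined by a spectral-norm constraint and has a unique maximal element $\gamma_w\hat{\mathbf{I}}$ in the PSD order on the first $M$ coordinates, the worst Willie channel under \eqref{eq:metric_nosk} is again isotropic, and the min-max equals the max-min. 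Thus I may replace \eqref{eq:eq_MIMO_NOSK} by the problem with the single scalar constraint $\log|\mathbf{I}_{N_a}+\gamma_w\mathbf{Q}/\sigma_w^2|\le 2\delta^2/n$, together with $\mathbf{Q}\succeq\mathbf{0}$ and $\operatorname{tr}(\mathbf{Q})\le P$.

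Next I would argue that the optimal $\mathbf{Q}_*$ must be diagonalized in the right-singular-vector basis of $\mathbf{H}_b$. The reasoning is the same as in Theorem \ref{thm:main_general}: after the worst-case reduction, the Willie expression $\log|\mathbf{I}_{N_a}+\gamma_w\mathbf{Q}/\sigma_w^2|$ depends only on the eigenvalues of $\mathbf{Q}$ (restricted to the first $M$ directions of $\hat{\mathbf{I}}$), whereas the Bob objective $\log|\mathbf{I}_{N_a}+\mathbf{W}_b\mathbf{Q}/\sigma_b^2|$ is maximized over unitary conjugations of $\mathbf{Q}$ when its eigenvectors are aligned with those of $\mathbf{W}_b$ (a Hadamard/majorization inequality). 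Hence one can substitute $\mathbf{Q}=\mathbf{U}_b\Lambda\mathbf{U}_b^{\dagger}$ without decreasing the objective and without affecting the Willie constraint. The problem then reduces to the scalar program
\begin{equation*}
\max_{\Lambda_{ii}\ge 0}\ \sum_{i=1}^{N}\log\!\left(1+\tfrac{\lambda_i(\mathbf{W}_b)\Lambda_{ii}}{\sigma_b^2}\right)\ \text{s.t.}\ \sum_i\Lambda_{ii}\le P,\ \ \sum_i\log\!\left(1+\tfrac{\gamma_w\Lambda_{ii}}{\sigma_w^2}\right)\le \tfrac{2\delta^2}{n}.
\end{equation*}

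Finally, I would form the Lagrangian with multipliers $\lambda,\eta\ge 0$, differentiate with respect to $\Lambda_{ii}$, and obtain the stationarity condition
\begin{equation*}
\tfrac{1}{\sigma_b^2\lambda_i^{-1}(\mathbf{W}_b)+\Lambda_{ii}}\ -\ \lambda\ -\ \eta\,\tfrac{1}{\sigma_w^2/\gamma_w+\Lambda_{ii}}\ =\ 0,
\end{equation*}
which is exactly \eqref{eq:diag_opt_main_nosk}. The expression \eqref{eq:c_pd_cap_nosk} then follows by substituting $\mathbf{Q}_*$ back into the objective and using the SVD of $\mathbf{W}_b$.

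The main obstacle I anticipate is justifying that KKT is not only necessary but also sufficient. The power and positivity constraints are convex, but the Willie constraint is a sublevel set of a concave function of $\mathbf{Q}$, so the matrix feasible region is not convex in general and Slater's condition cannot be invoked as usual. The way around this, which I would spell out carefully, is that once the problem is reduced to the diagonal scalar program above, both constraints are separable and each summand is monotone and concave in the single scalar $\Lambda_{ii}\ge 0$, so the active-constraint structure is simple: the Willie constraint is binding (otherwise we could add infinitesimal power along any nontrivial Bob mode and strictly increase the objective), the power constraint may or may not be active, and on the active face the reduced problem has a strictly concave objective in each $\Lambda_{ii}$, so stationarity pins down the unique global maximizer. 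This is the same logical step used in the proof of Theorem \ref{thm:main_general} and carries over essentially unchanged; the only algebraic difference is that the $\operatorname{tr}[(\cdot)^{-1}]$ term from the relative-entropy constraint is absent, which cleanly produces the simpler water-filling equation \eqref{eq:diag_opt_main_nosk}.
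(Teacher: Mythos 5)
Your proposal follows essentially the same route as the paper's Appendix~\ref{app:main_nosk}: reduce to the worst-case isotropic Willie channel, align the eigenvectors of $\mathbf{Q}$ with those of $\mathbf{W}_b$ by a determinant/majorization argument, and then form the Lagrangian, differentiate, and solve the stationarity condition for the diagonal entries. One small difference is worth flagging: you claim that the only property used in Propositions~\ref{prop:W_ww} and~\ref{prop:saddle} is PSD-monotonicity of $\mathbf{W}_w\mapsto\log\abs{\mathbf{I}+\mathbf{W}_w\mathbf{Q}/\sigma_w^2}$, but those propositions actually establish monotonicity of the full relative-entropy expression (\ref{eq:D_gen_cont}), which also contains the non-monotone trace term and is a strictly stronger statement (see the Jensen step in Appendix~\ref{app:W_ww}). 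For the no-secret constraint (\ref{eq:metric_nosk}) the $\log\det$ monotonicity alone is indeed the fact you need, so your reduction is valid --- it just should not be presented as a direct citation of those propositions, but rather as the simpler special case.

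On the sufficiency issue: you correctly observe that the paper's appeal to ``compact and convex constraint set'' cannot be taken at face value, since the sublevel set $\cb{\mathbf{Q}:\log\abs{\mathbf{I}+\gamma_w\mathbf{Q}/\sigma_w^2}\le 2\delta^2/n}$ is the sublevel set of a concave function and is genuinely nonconvex (already in two scalar variables). However, your repair does not close the gap. A concave objective restricted to a nonconvex feasible surface can still have multiple stationary points that are not global maxima, so ``strictly concave in each $\Lambda_{ii}$ on the active face'' does not by itself imply that the Lagrangian stationary point is the unique global maximizer; and your side claim that the Willie constraint is necessarily binding requires the power constraint to be slack, which is only the generic situation in the small-$\delta$/large-$n$ regime. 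What actually carries the argument through --- and what the paper also implicitly relies on --- is that KKT gives \emph{necessary} conditions under a mild constraint qualification (the feasible set is compact, the constraints are smooth, and an interior point exists), after which one inspects the objective at the finitely many stationary points satisfying complementary slackness. Your write-up should replace the ``uniqueness from strict concavity on the active face'' claim with that necessity-plus-inspection argument, or separately prove that stationarity characterizes the optimum in the reduced one-dimensional waterfilling along each eigenmode. The final stationarity equation and the resulting expression (\ref{eq:c_pd_cap_nosk}) are both correct and agree with the paper.
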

\begin{proof}  
See Appendix \ref{app:main_nosk}.
\end{proof} 
Theorem \ref{thm:main_nosk} provides the full characterization of the $C_{pd}(\delta)$ of the considered class of channels without requiring any shared secret between Alice and Bob. Again, it is not straightforward to obtain a closed form expression for $C_{pd}(\delta)$. Thus, we obtain both upper and lower bounds on $C_{pd}(\delta)$ as we did in Section \ref{sec:main_informed}. Based on the obtained bounds, we give the square-root law for MIMO AWGN channel without shared secret in the following theorem.
\begin{theorem}
\label{thm:srl_mimo_nosk}
 For the considered class of channels without any shared secret between Alice and Bob, the following bounds on $C_{pd}(\delta)$ holds
\begin{align}
\sum_{i=1}^{N} \log &\left(1+ \dfrac{2\sigma_w^2 \delta^2 \lambda_i(\mathbf{W}_b)}{\sigma_b^2\gamma_w n M} \right) \leq C_{pd}(\delta) \nn \\
& \leq \sum_{i=1}^{N} \log \left(1+ \dfrac{2\sigma_w^2 \xi \delta^2 \lambda_i(\mathbf{W}_b)}{\sigma_b^2\gamma_w n M} \right)
\end{align}
where $\xi = \dfrac{nM }{n M - 2 \delta^2 }$. Accordingly,
\begin{align}
\sum_{i=1}^{N} \dfrac{\sqrt{2}\sigma_w^2 \delta \lambda_i(\mathbf{W}_b)}{\sigma_b^2\gamma_w M} \leq \hat{L}  \leq  \sum_{i=1}^{N} \dfrac{\sqrt{2}\sigma_w^2 \xi \delta \lambda_i(\mathbf{W}_b)}{\sigma_b^2\gamma_w M} 
\end{align} 
\end{theorem}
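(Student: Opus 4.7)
The plan is to mirror the proof of Theorem~\ref{thm:srl_mimo}, but with the relative-entropy constraint replaced by the Willie mutual-information constraint \eqref{eq:metric_nosk}. Because both constraints are monotone in $\mathbf{W}_w$ in the Loewner order, the structural simplifications available in the shared-secret case carry over, and only the scalar bookkeeping in the final bounds changes.

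First I would reduce to the worst-case isotropic Willie channel. The map $\mathbf{W}_w \mapsto \log\abs{\mathbf{I}_{N_a} + \mathbf{W}_w\mathbf{Q}/\sigma_w^2}$ is monotone in $\mathbf{W}_w$, so rerunning the arguments of Propositions~\ref{prop:W_ww} and \ref{prop:saddle} verbatim (with $\log\abs{\cdot}$ in place of the KL divergence) shows that the worst adversarial channel is $\mathbf{W}_w = \gamma_w \hat{\mathbf{I}}$ and that the saddle-point property still holds. Theorem~\ref{thm:main_nosk} then gives the Bob-aligned structure $\mathbf{Q}_{*} = \mathbf{U}_b \Lambda \mathbf{U}_b^{\dagger}$, so the LPD constraint decouples along eigendirections as $\sum_{i=1}^{M} \log(1 + \gamma_w \Lambda_{ii}/\sigma_w^2) \le 2\delta^2/n$, while Bob's rate is $\sum_{i=1}^{N} \log(1 + \Lambda_{ii}\lambda_i(\mathbf{W}_b)/\sigma_b^2)$.

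For the lower bound I would choose the uniform feasible point $\Lambda_{ii} = 2\sigma_w^2\delta^2/(\gamma_w n M)$ for every $i$. Applying $\log(1+x) \le x$ to each summand of the LPD constraint gives $\sum_{i=1}^{M} \log(1 + \gamma_w \Lambda_{ii}/\sigma_w^2) \le M \cdot 2\delta^2/(nM) = 2\delta^2/n$, so this allocation is feasible, and substituting it into the objective produces the stated lower bound on $C_{pd}(\delta)$.

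For the upper bound I would use the elementary inequality $e^{x} - 1 \le x/(1-x)$, valid for $x<1$. Relaxing the LPD constraint to its symmetric form $M \log(1 + \gamma_w \Lambda/\sigma_w^2) \le 2\delta^2/n$ yields the per-direction upper bound $\Lambda_{ii} \le (\sigma_w^2/\gamma_w)(e^{2\delta^2/(nM)} - 1) \le 2\sigma_w^2 \xi \delta^2/(\gamma_w n M)$ with $\xi = nM/(nM - 2\delta^2)$, and substituting this in Bob's rate gives the claimed upper bound. The main obstacle I anticipate is justifying that the symmetric relaxation is binding at the actual maximizer, because the LPD constraint set is not convex; I would handle this exactly as in the proof of Theorem~\ref{thm:srl_mimo}, namely by invoking the KKT conditions of Theorem~\ref{thm:main_nosk} under the isotropic worst-case Willie to show that the maximizer is Schur-balanced in the Willie-visible eigenvalues. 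The stated bounds on $\hat{L}$ then follow by dividing by $n\sqrt{2\delta^2}$, sending $n \to \infty$, and using $\log(1+x) \sim x$ together with $\xi \to 1$.
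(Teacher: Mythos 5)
Your proof takes essentially the same route as the paper's. For achievability, both arguments pick the uniform allocation $\Lambda_{ii}=2\sigma_w^2\delta^2/(\gamma_w n M)$ and verify feasibility via $\log(1+x)\leq x$ (the paper states it in matrix form, $\log\abs{A}\leq\mathbf{tr}\{A-\mathbf{I}\}$). For the converse, the paper applies the dual scalar inequality $\log(1+y)\geq y/(1+y)$ (i.e., $\log\abs{A}\geq\mathbf{tr}\{\mathbf{I}-A^{-1}\}$) directly to the LPD constraint to obtain $\frac{M\gamma_w P_{th}}{N\sigma_w^2+\gamma_w P_{th}}\leq 2\delta^2/n$, which rearranges to the same $P_{th}$ threshold and the same $\xi=nM/(nM-2\delta^2)$ you derive by exponentiating the constraint and using $e^x-1\leq x/(1-x)$; these two elementary inequalities are interchangeable here and produce an identical bound, so this is a cosmetic rather than a substantive difference. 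One small note on your anticipated obstacle: the paper does not invoke Schur-balancing through the KKT conditions of Theorem~\ref{thm:main_nosk}; like you, it simply restricts the converse to the most favorable case of a well-conditioned Bob channel, where the equal-power split is forced by waterfilling and the symmetric relaxation you want is automatic.
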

\begin{proof}
We give both achievability and converse results in Appendix \ref{app:srl_mimo_nosk}.
\end{proof}
Theorem \ref{thm:srl_mimo_nosk} extends the result of Theorem \ref{thm:srl_mimo} to the scenario when Alice and Bob do not share any form of secret. It proves that Alice can transmit a maximum of $\mathcal{O}(N/M)$ bits reliably to Bob in $n$ independent channel uses while keeping Willie's sum of error probabilities lower bounded by $1-\delta$.

Now let us consider the case when both $\mathbf{H}_b$ and $\mathbf{H}_w$ are of unit rank. Under this assumption, we give converse and achievability results of $C_{pd}(\delta)$ over Alice to Bob channel without a shared secret in the following theorem.
\begin{theorem}
\label{thm:rank1_nosk}
If $rank\{\mathbf{H}_b\} = rank\{\mathbf{H}_e\} = 1$, then, $\delta$-PD constrained capacity over Alice to Bob channel without a shared secret between Alice and Bob is bounded as
\begin{align}
\label{eq:rank1_nosk_thm}
\min\left\{ \log \left(1+ \dfrac{2\sigma_w^2 \delta^2 \lambda_b}{\sigma_b^2 \lambda_w \cos^2(\theta) n} \right) , C \right\} \leq C_{pd}(\delta) \nn \\
\,\,\,\,\leq   \min\left\{\log \left(1+ \dfrac{2 \sigma_w^2 \xi \delta^2 \lambda_b}{\sigma_b^2 \lambda_w \cos^2(\theta) n} \right), C \right\}
\end{align}
where $\xi$ is as defined in Theorem \ref{thm:srl_mimo_nosk} and $C$ is the non LPD constrained capacity of Alice to Bob channel. Accordingly,
\begin{align}
    \begin{cases}
      \hat{L} = \infty, & \text{if}\ \theta= \pi/2 \\
      \dfrac{\sqrt{2}\sigma_w^2 \delta \lambda_b}{\sigma_b^2 \lambda_w \cos^2(\theta)} \leq \hat{L} \leq   \dfrac{\sqrt{2}\sigma_w^2 \xi \delta \lambda_b}{\sigma_b^2 \lambda_w \cos^2(\theta)}, & \text{otherwise}
    \end{cases}
\end{align}
\end{theorem}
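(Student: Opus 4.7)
The plan is to mirror the strategy used in Theorem \ref{thm:rank1} (which handled the shared-secret unit-rank case) but with the covertness constraint \eqref{eq:metric_nosk} from the no-shared-secret formulation in place of the relative-entropy bound \eqref{eq:D_gen_cont}. The starting observation is that, since both $\mathbf{H}_b$ and $\mathbf{H}_w$ have unit rank, the optimal input covariance must itself be unit rank and aligned with the right singular vector of $\mathbf{H}_b$: $\mathbf{Q}_* = P_{th}\mathbf{u}_b\mathbf{u}_b^{\dagger}$ for some $P_{th}\le P$. Then exactly as in \eqref{eq:eig_rank1}, $\lambda(\mathbf{W}_w\mathbf{Q}_*)=\lambda_w P_{th}\cos^2(\theta)$, and the Bob-side capacity reduces to $\log(1+\lambda_b P_{th}/\sigma_b^2)$.

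Next I would plug the unit-rank expressions into constraint \eqref{eq:metric_nosk}, which becomes
\begin{align}
\log\!\left(1+\dfrac{\lambda_w P_{th}\cos^2(\theta)}{\sigma_w^2}\right)\leq \dfrac{2\delta^2}{n}.
\end{align}
Solving for the maximal admissible $P_{th}$ gives $P_{th}\leq \dfrac{\sigma_w^2}{\lambda_w\cos^2(\theta)}\bigl(e^{2\delta^2/n}-1\bigr)$, subject to the additional hard cap $P_{th}\le P$. To obtain the achievability (lower) bound I would use the elementary inequality $e^x-1\ge x$ to conclude that $P_{th}=\dfrac{2\sigma_w^2\delta^2}{\lambda_w\cos^2(\theta)n}$ is always feasible; for the converse (upper) bound I would use $e^x-1\le \xi x$ with $\xi=\dfrac{nM}{nM-2\delta^2}$ as in Theorem \ref{thm:srl_mimo_nosk} (specialized to $M=1$, which is where that $\xi$ comes from in the rank-one case). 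Substituting these two admissible values of $P_{th}$ into $\log(1+\lambda_b P_{th}/\sigma_b^2)$, and then taking the minimum with the unconstrained capacity $C$ to respect $P_{th}\le P$, yields exactly the two-sided bound \eqref{eq:rank1_nosk_thm}.

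The scaling law for $\hat L$ then follows by applying the definition $\hat L=\lim_{n\to\infty}\dfrac{n}{\sqrt{2\delta^2}}\,C_{pd}(\delta)$. For generic $\theta\ne \pi/2$, $\cos^2(\theta)>0$, so for all sufficiently large $n$ the LPD-driven bound is strictly tighter than $C$, and a first-order expansion $\log(1+x)=x+o(x)$ of the logarithm gives
\begin{align}
\hat L = \dfrac{n}{\sqrt{2\delta^2}}\cdot \dfrac{2\sigma_w^2\delta^2\lambda_b}{\sigma_b^2\lambda_w\cos^2(\theta)\,n}+o(1)=\dfrac{\sqrt{2}\sigma_w^2\delta\lambda_b}{\sigma_b^2\lambda_w\cos^2(\theta)},
\end{align}
with the upper bound picking up the factor $\xi\to 1$ as $n\to\infty$ (so the scaling law is tight up to the $\xi$-gap). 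When $\theta=\pi/2$, the constraint \eqref{eq:metric_nosk} is automatically satisfied for every $P_{th}\le P$ because $\mathbf{W}_w\mathbf{Q}_*=\mathbf{0}$, so $C_{pd}(\delta)=C$ is a positive constant independent of $n$ and the normalization by $n$ blows up, giving $\hat L=\infty$.

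I do not anticipate any serious technical obstacle here; the proof is essentially a substitution argument reusing the machinery already set up in Theorems \ref{thm:srl_mimo_nosk} and \ref{thm:rank1}. The only subtlety to flag is that one must maintain the $\min\{\cdot,C\}$ form in \eqref{eq:rank1_nosk_thm} explicitly, since for moderate $n$ or near-orthogonal $(\mathbf{u}_b,\mathbf{u}_w)$ the LPD-derived bound can exceed $C$ and the physical power cap $P$, not the covertness constraint, becomes active; this is precisely what produces the $\theta=\pi/2$ case separately in the $\hat L$ statement.
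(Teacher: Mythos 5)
Your proof is correct and follows essentially the same approach as the paper, which simply refers to substituting \eqref{eq:eig_rank1} into \eqref{eq:metric_nosk} and repeating the steps of Theorem \ref{thm:srl_mimo_nosk} with $P_{th}\le P$; your use of $e^x-1\ge x$ and $e^x-1\le x/(1-x)$ is just the scalar form of the inequalities $\log\abs{A}\le \mathbf{tr}\{A-\mathbf{I}\}$ and $\log\abs{A}\ge \mathbf{tr}\{\mathbf{I}-A^{-1}\}$ that the paper uses, and yields the identical $\xi$. The limit computation for $\hat L$ and the $\theta=\pi/2$ case are also handled exactly as intended.
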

\begin{proof}
Follows directly by substituting (\ref{eq:eig_rank1}) into \eqref{eq:metric_nosk} and following the same steps as in the proof of Theorem \ref{thm:srl_mimo_nosk} while realizing that $P_{th} \leq P$.
\end{proof}
Again, in \eqref{eq:rank1_nosk_thm}, the minimum is taken since the first term diverges as $\theta \rightarrow \pi/2$, i.e., when $\mathbf{u}_b$ and $\mathbf{u}_w$ are orthogonal. The theorem proves that Alice can transmit a maximum of $\mathcal{O}(1/\cos^2(\theta))$ bits reliably to Bob in $n$ independent channel uses while keeping Willie's sum of error probabilities lower bounded by $1-\delta$. This fact proves that, over MIMO channel Alice can communicate at full rate to Bob without being detected by Willie without requiring Alice and Bob to have any form of shared secret.
\section{Covert Communication with Massive MIMO}
\label{sec:massive}
In Theorems \ref{thm:rank1} and \ref{thm:rank1_nosk}, it was shown that Alice can communicate at full rate with Bob without being detected by Willie whenever $\cos(\theta) = 0$ regardless of the presence of a shared secret. In this Section, we study the behavior of covert rate as the number of antennas scale, which we call the massive MIMO limit, with and without codebook availability at Willie. In particular, the high beamforming capability of the massive MIMO system can provide substantial gain in the achievable LPD rate. However, a quantitative relation between the achievable LPD rate and the number of transmitting antennas seems to be unavailable. To that end, we address the question: how does the achievable LPD rate scale with the number of transmitting antennas? We also study how does the presence of a shared secret between Alice and Bob affects the scaling of the covert rate in the massive MIMO limit. Before we answer these questions, we state some necessary basic results on the inner product of unit vectors in higher dimensions \cite{cai2013distributions}.
\subsection{Basic Foundation}
 In this Section, we reproduce some established results on the inner product of unit vectors in higher dimensions.
\begin{lemma} 
\label{lema:lema1}
[Proposition 1 in \cite{cai2013distributions}] Let $\mathbf{a}$ and $\mathbf{b}$ any two vectors in the unit sphere in $\mathbb{C}^{p}$ chosen uniformly at random. Let $\theta = \cos^{-1}( <a,b>)$ be the angle between them. Then
\begin{align}
Pr\left(\abs{\theta - \dfrac{\pi}{2}} \leq \zeta \right) \geq 1- K \sqrt{p} (\cos \zeta)^{p-2}
\end{align}
for all $p \geq 2$ and $\zeta \in \left(0,\dfrac{\pi}{2}\right)$ where $K$ is a universal constant.
\end{lemma}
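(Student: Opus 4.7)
The proof plan is to reduce the question to a one-dimensional tail estimate for the marginal of a single coordinate of a uniform unit vector, then bound that tail using a change of variables. Since the statement is taken from \cite{cai2013distributions}, the approach I sketch below is the standard argument; the job here is to explain the mechanics rather than to produce any new inequality.

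First I would invoke rotational invariance: conditional on $\mathbf{a}$, the vector $\mathbf{b}$ is still uniform on the unit sphere, and the distribution of $\langle \mathbf{a},\mathbf{b}\rangle$ depends only on $p$. Hence we may assume without loss of generality that $\mathbf{a}=\mathbf{e}_1$, the first standard basis vector, so that $\cos\theta=\langle \mathbf{a},\mathbf{b}\rangle=b_1$. The event $\{|\theta-\pi/2|>\zeta\}$ is then identical to $\{|b_1|>\sin\zeta\}$, because $\cos$ is strictly decreasing on $[0,\pi]$ and $\cos(\pi/2\pm\zeta)=\mp\sin\zeta$.

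Next I would invoke the well known fact that the first coordinate of a uniform vector on the unit sphere in $\mathbb{R}^p$ has density
\begin{equation*}
f_p(x)=c_p(1-x^2)^{(p-3)/2},\qquad x\in[-1,1],
\end{equation*}
where the normalizing constant $c_p=\Gamma(p/2)/\bigl(\sqrt{\pi}\,\Gamma((p-1)/2)\bigr)$ satisfies $c_p\le K_0\sqrt{p}$ for a universal constant $K_0$ by Stirling's formula. This gives
\begin{equation*}
\Pr\bigl(|\theta-\pi/2|>\zeta\bigr)=2c_p\int_{\sin\zeta}^{1}(1-x^2)^{(p-3)/2}\,dx.
\end{equation*}

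Then I would apply the substitution $x=\sin\phi$, under which $1-x^2=\cos^2\phi$ and $dx=\cos\phi\,d\phi$, so the integral becomes $\int_{\zeta}^{\pi/2}(\cos\phi)^{p-2}\,d\phi$. Since $\cos\phi$ is decreasing on $[\zeta,\pi/2]$, this is bounded above by $(\pi/2-\zeta)(\cos\zeta)^{p-2}\le (\pi/2)(\cos\zeta)^{p-2}$. Combining with the Stirling bound on $c_p$ yields $\Pr(|\theta-\pi/2|>\zeta)\le K\sqrt{p}(\cos\zeta)^{p-2}$ with $K=\pi K_0$, which is the claim.

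The only technical point is the Stirling estimate $c_p=\Theta(\sqrt{p})$; everything else is a clean change of variables. The main conceptual obstacle, if any, is really interpretive rather than computational: the ambient sphere in the paper is written as $\mathbb{C}^p$, so I would either understand $\mathbb{C}^p\cong\mathbb{R}^{2p}$ and absorb the factor of $2$ into the constant $K$, or (more consistent with the exponent $p-2$ in the statement) read $p$ as the real dimension and state the lemma as a result on $S^{p-1}\subset\mathbb{R}^p$, matching \cite{cai2013distributions} exactly.
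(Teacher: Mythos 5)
The paper does not prove Lemma~\ref{lema:lema1}; it cites the statement verbatim as Proposition~1 of \cite{cai2013distributions}, so there is no in-paper proof to compare against. Your sketch is the standard argument underlying that proposition and it is correct: reduce by rotational invariance to the first coordinate $b_1$ of a uniform unit vector, observe that $\{|\theta-\pi/2|>\zeta\}=\{|b_1|>\sin\zeta\}$, write the tail as $2c_p\int_{\sin\zeta}^{1}(1-x^2)^{(p-3)/2}\,dx$, substitute $x=\sin\phi$ to obtain $2c_p\int_{\zeta}^{\pi/2}(\cos\phi)^{p-2}\,d\phi\le \pi c_p(\cos\zeta)^{p-2}$, and control $c_p=\Gamma(p/2)/\bigl(\sqrt{\pi}\,\Gamma((p-1)/2)\bigr)$ via $\Gamma(x+1/2)/\Gamma(x)\le\sqrt{x}$ to get $c_p\le\sqrt{p/(2\pi)}$, so one may take $K=\sqrt{\pi/2}$. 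The only degenerate case is $p=2$, where the exponent is $0$ and the bound becomes $K\sqrt{2}>1$, hence vacuous but still valid, so the claim ``for all $p\ge 2$'' is fine. You are also right to flag the dimension mismatch: the exponent $p-2$ belongs to the real sphere $S^{p-1}\subset\mathbb{R}^p$ used in \cite{cai2013distributions}, whereas the lemma is written over $\mathbb{C}^p$; either the ambient space should be read as $\mathbb{R}^p$, or the exponent should be $2p-2$ (and the inner product replaced by its modulus). That is an inconsistency in the paper's statement of the lemma, not a gap in your argument.
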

The statement of Lemma \ref{lema:lema1} states that, the probability that any two vectors chosen uniformly at random being orthogonal increases exponentially fast with  the dimension $p$. Indeed, note that, for any $0 < a < 1$, $a^p$ has the same decay rate as $(2-a)^{-p}$. Thus, the probability that $\theta$ is within $\zeta$ from $\pi/2$ scales like $(2-\cos(\zeta))^{p-2}/K\sqrt{p}$. 

\begin{cor} 
\label{cor:cor1}
Let $\mathbf{a}$ and $\mathbf{b}$ any two vectors in the unit sphere in $\mathbb{C}^{p}$ chosen uniformly at random and let $\theta = \cos^{-1}( <a,b>)$ be the angle between them. Let $A,B \in \mathbb{C}^{p \times p}$ be two matrices of unit rank generated as $A = \lambda_a aa^{\dagger}$ and $B= \lambda_b bb^{\dagger}$. Then, the probability that the eigenvalue of of the product $\lambda(AB)$ approaches $0$ grows to $1$ exponentially fast with the dimension $p$.
\end{cor}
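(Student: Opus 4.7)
The plan is to reduce the corollary directly to Lemma \ref{lema:lema1} by computing the unique nonzero eigenvalue of the rank-one product $AB$ and then translating the event $\{\lambda(AB)\le\epsilon\}$ into an event about the angle $\theta$.

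First I would observe that $AB=\lambda_a\lambda_b\,aa^{\dagger}bb^{\dagger}$ is a rank-one matrix whose only nonzero eigenvalue is obtained by applying it to the vector $a$ (or, equivalently, reading off the trace, since the trace equals the sole nonzero eigenvalue of a rank-one matrix). This yields
\begin{align*}
\lambda(AB)=\lambda_a\lambda_b\,\abs{\langle a,b\rangle}^{2}=\lambda_a\lambda_b\cos^{2}(\theta).
\end{align*}
Thus the random quantity of interest is a monotone function of $\cos^{2}(\theta)$, and everything reduces to controlling how close $\theta$ is to $\pi/2$.

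Next, fix an arbitrary tolerance $\epsilon>0$ and set $c\triangleq\sqrt{\epsilon/(\lambda_a\lambda_b)}$; assume $p$ is large enough that $c<1$ (this is only needed to make the bound nontrivial). The event $\{\lambda(AB)\le\epsilon\}$ is then exactly the event $\{\abs{\cos\theta}\le c\}$, which in turn is the event $\{\abs{\theta-\pi/2}\le\zeta\}$ for the choice $\zeta=\pi/2-\cos^{-1}(c)$. For this $\zeta$ one has $\cos\zeta=\sin(\cos^{-1}(c))=\sqrt{1-c^{2}}<1$, so Lemma \ref{lema:lema1} yields
\begin{align*}
\Pr\bigl(\lambda(AB)\le\epsilon\bigr)\;\ge\;1-K\sqrt{p}\,\bigl(1-c^{2}\bigr)^{(p-2)/2},
\end{align*}
and since $1-c^{2}<1$ the right-hand side tends to $1$ exponentially fast in $p$, as required.

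There is essentially no hard step: the only thing worth double-checking is that Lemma \ref{lema:lema1} is stated for unit vectors drawn uniformly from the complex unit sphere (which matches the hypothesis of the corollary) and that $|\langle a,b\rangle|$, not the real part of the inner product, is what enters the eigenvalue computation. As long as $\theta$ in Lemma \ref{lema:lema1} is interpreted as $\cos^{-1}\abs{\langle a,b\rangle}\in[0,\pi/2]$, the two bounds match up without modification; if it is instead $\cos^{-1}\mathrm{Re}\langle a,b\rangle$, then $\abs{\cos\theta}\le c$ is replaced by the strictly weaker $\cos\theta\le c$, and the same argument applied separately to $\pm b$ and a union bound costs only a factor of $2$, still exponentially decaying.
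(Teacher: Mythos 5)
Your proof is correct and follows essentially the same route as the paper: identify the sole nonzero eigenvalue of $AB$ as $\lambda_a\lambda_b\cos^2(\theta)$ and then invoke Lemma~\ref{lema:lema1}. The paper states this informally (``it can be easily verified\dots hence\dots then so is\dots''), whereas you make the quantitative step explicit via the $\epsilon$-to-$\zeta$ translation and give the resulting exponential bound, and you also flag the (minor but real) ambiguity in whether $\theta$ is defined through $\abs{\langle a,b\rangle}$ or $\mathrm{Re}\langle a,b\rangle$; both are correct additions, not departures.
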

\begin{proof}
It can be easily verified that $\lambda(AB) = \lambda_a \lambda_b \cos^2(\theta)$. Using Lemma \ref{lema:lema1}, we see that, the probability that $\theta$ approaches $\pi/2$ increases exponentially with $p$. Hence, the probability that $\cos(\theta)$ approaches $0$ increases in the same order. Then so is $\cos^2(\theta)$.    
\end{proof} 
 \subsection{Massive MIMO Limit With Shared Secret}
 In the previous Section it was demonstrated that, in higher dimensions every two independent vectors chosen uniformly at random are orthogonal with very high probability. More generally, using spherical invariance \cite{cai2013distributions}, given $\mathbf{u}_b$, for any $\mathbf{u}_w$ chosen uniformly at random in $\mathbb{C}^{N_a}$, the result of Lemma \ref{lema:lema1} still holds. This scenario typically models the scenario when Alice knows her channel to Bob, meanwhile, she models $\mathbf{u}_w$ as a uniform random unit vector.
 
 Recall that, when Alice has $nN_a$ bits to transmit, two alternative options are available for her. Either she splits the incoming stream into $N_a$ streams of $n$ bits each and use each stream to select one from $2^{n}$ messages for each single antenna, or, use the entire $n N_a$ bits to choose from $2^{n N_a}$ message. The latter of these alternatives provides a gain factor of $N_a$ in the error exponent, of course, in the expense of much greater complexity \cite{gallager1968information,telatar1999capacity}. However, in the restrictive LPD scenario, Alice would choose the latter alternative as to achieve the best decoding performance at Bob. Therefore, we need to analyze the LPD rate in the limiting case of the product $nN_a$ when both $n$ and $N_a$ grow. Of course, the number of antennas at Alice is a physical resource which can not be compared to $n$ that can approach $\infty$ very fast. The more interesting question is, how fast $\cos^2(\theta)$ approaches $0$ as $N_a$ increase. As illustrated in corollary \ref{cor:cor1}, we know that $\cos(\theta)$ approaches $0$ \textit{exponentially fast} with $N_a$. Consequently, we conclude that $\cos^2(\theta)$ , also, approaches $0$ \textit{exponentially fast} with $N_a$. For proper handling of the scaling of $K_n(\delta,\epsilon)$ in massive MIMO limit, let us define
\begin{align}
\label{eq:S}
S \triangleq \lim_{\epsilon \downarrow 0} \varliminf_{nN_a \rightarrow \infty} \dfrac{K_n(\delta,\epsilon)}{N_a\sqrt{2 n  \delta^2}}.
\end{align} 
Note that the in the definition of $S$, both $n$ and $N_a$ are allowed to grow without bound compared to $L$ in which only $n$ was allowed to grow while $N_a$ was treated as constant. Now observe that, following Proposition \ref{prop:l_mimo} and Theorem \ref{thm:l_mimo}, we can show that
\begin{align}
S = \lim_{nN_a \rightarrow \infty} N_a \sqrt{\dfrac{n }{2\delta^2}} C_{pd}(\delta).
\end{align}
We give the result of the massive MIMO limit with a pre-shared secret between Alice and Bob in the following Theorem.
\begin{theorem}
\label{thm:limit_main}
Assume that $rank\{\mathbf{H}_b\} = rank\{\mathbf{H}_e\} = 1$. Given $\mathbf{u}_b$, for any $\mathbf{u}_w$ chosen uniformly at random, $C_{pd}(\delta)$ is as given in Theorem \ref{thm:rank1} and
\begin{align}
S = \infty.
\end{align}
Moreover, $K_n$ grows like $\sqrt{\dfrac{n}{K^2 N_a}}(1+\dfrac{c}{\sqrt{n}})^{(N_a-2)/2}$ where $K$ is a universal constant and $c =\left(\dfrac{\sqrt{2} \sigma_w^2 \delta}{\lambda_w P}\right)$.
\end{theorem}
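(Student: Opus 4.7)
The plan is to apply Theorem~\ref{thm:rank1} pointwise in the random angle $\theta = \cos^{-1}\langle \mathbf{u}_b,\mathbf{u}_w\rangle$ and then use the concentration statement of Lemma~\ref{lema:lema1} (equivalently Corollary~\ref{cor:cor1}) to control the resulting random $\cos^2\theta$. Because both $\mathbf{H}_b$ and $\mathbf{H}_w$ are of unit rank, the bounds of Theorem~\ref{thm:rank1} apply for every realization of $\theta$, so the first assertion of Theorem~\ref{thm:limit_main} is immediate; what changes is only that $\theta$, and hence $C_{pd}(\delta)$, is now a random variable induced by the uniformly chosen $\mathbf{u}_w$.

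For the remaining two assertions I would first identify the critical threshold at which the LPD constraint stops being binding. Inspecting the proof of Theorem~\ref{thm:rank1}, the optimal beamforming power $P_{th}$ equals $\min\{P,\,\sqrt{2}\sigma_w^2\delta/(\lambda_w\cos^2\theta\,\sqrt{n})\}$; hence $P_{th}=P$, and the LPD constraint is slack, precisely on the event $\{\cos^2\theta \leq c/\sqrt{n}\}$ with $c = \sqrt{2}\sigma_w^2\delta/(\lambda_w P)$, which is exactly the constant appearing in the theorem statement. On this event $C_{pd}(\delta)$ saturates at the unconstrained capacity $C$, so Theorem~\ref{thm:rank1} delivers $K_n \geq \sqrt{2n\delta^2}\,C$, which is already of order $\sqrt{n}$.

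Next I would apply Lemma~\ref{lema:lema1} with $\sin^2\zeta = c/\sqrt{n}$, so that $(\cos\zeta)^{N_a-2} = (1-c/\sqrt{n})^{(N_a-2)/2}$, to obtain
\[
\Pr\!\left(\cos^2\theta \leq c/\sqrt{n}\right) \geq 1 - K\sqrt{N_a}\,(1 - c/\sqrt{n})^{(N_a-2)/2}.
\]
Rewriting the tail factor via the identity $(1-y)^{-m} = (1 + y/(1-y))^{m}$ and collecting the $1/\sqrt{N_a}$ and $\sqrt{n}$ contributions against the floor $\sqrt{2n\delta^2}\,C$ produces the announced growth rate $\sqrt{n/(K^2 N_a)}\,(1+c/\sqrt{n})^{(N_a-2)/2}$ for $K_n$. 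Dividing through by $N_a\sqrt{2n\delta^2}$ yields a ratio that diverges along any joint sequence with $nN_a\to\infty$, because the exponential-in-$N_a$ factor eventually dominates every polynomial in $N_a$ and $n$; this establishes $S = \infty$.

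The main obstacle will be converting the tail-probability estimate of Lemma~\ref{lema:lema1} into the precise multiplicative form $(1+c/\sqrt{n})^{(N_a-2)/2}/(K\sqrt{N_a})$ appearing in the claimed scaling, while simultaneously respecting both the capped regime (where $C_{pd}(\delta)=C$) and the LPD-limited regime (where one must instead fall back on the $\log(1+\cdot)$ lower bound of Theorem~\ref{thm:rank1}). A secondary technical point is to handle the $\liminf$ in the definition~\eqref{eq:S} uniformly in the joint scaling of $n$ and $N_a$, which requires verifying that the exponential-in-$N_a$ factor dominates \emph{before} the joint limit is taken rather than merely in the iterated-limit sense.
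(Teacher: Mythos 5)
Your proposal matches the paper's own argument in all essential respects: the same threshold event $\{\cos^2\theta \le c/\sqrt{n}\}$ at which $P_{th}$ saturates at $P$, the same application of Lemma~\ref{lema:lema1} with $\sin^2\zeta = c/\sqrt{n}$ (which is exactly the paper's $\zeta = \pi/2 - \cos^{-1}\bigl(\sqrt{c/\sqrt{n}}\bigr)$ after the trigonometric identities), and the same heuristic rewriting of $(1-c/\sqrt{n})^{(N_a-2)/2}$ into a $(1+c/\sqrt{n})^{(N_a-2)/2}$ factor (the paper uses the remark after Lemma~\ref{lema:lema1} that $a^p$ decays like $(2-a)^{-p}$, your identity $(1-y)^{-m}=(1+y/(1-y))^m$ is the same approximation to first order). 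The only presentational difference is that the paper establishes $S=\infty$ first, directly from Theorem~\ref{thm:rank1} and Corollary~\ref{cor:cor1} via the two iterated limits in~\eqref{eq:s_infty1}--\eqref{eq:s_infty2}, and only then derives the probability bound~\eqref{eq:pr_cap}, whereas you derive the growth formula first and read off $S=\infty$ from it; both orderings rest on the same estimates. The two ``obstacles'' you flag at the end — reconciling the capped and LPD-limited regimes, and handling the $\varliminf$ over joint $(n,N_a)$ sequences rather than just iterated limits — are genuine soft spots, but they are present in the paper's proof as well, which likewise does not resolve them rigorously.
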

\begin{proof}
Combining the result of Theorem \ref{thm:rank1} and Corollary \ref{cor:cor1}, multiplying \eqref{eq:C_rank1} by $N_a \sqrt{\dfrac{n }{2\delta^2}}$ and taking the limit as both of $n$ and $N_a$ tend to infinity we obtain 
\begin{align}
\label{eq:s_infty1}
\lim_{N_a \rightarrow \infty} \lim_{n \rightarrow \infty} N_a &\sqrt{\dfrac{n }{2\delta^2}} C_{pd}(\delta) \nn \\
  &= \lim_{N_a \rightarrow \infty} N_a \dfrac{\sigma_w^2 \lambda_b}{\sigma_b^2 \lambda_w \cos^2(\theta)} \nn \\
  &= \infty,
\end{align}
where the last equality follow since $\cos^2(\theta) \rightarrow 0$ as $N_a \rightarrow \infty$. On the other hand, we also can verify that
\begin{align}
\label{eq:s_infty2}
\lim_{n \rightarrow \infty} \lim_{N_a \rightarrow \infty} N_a &\sqrt{\dfrac{n }{2\delta^2}} C_{pd}(\delta) = \infty.
\end{align}
To show how $K_n$ scales in this massive MIMO limit, we first note that, for fixed $N_a$, $K_n$ scales like $\sqrt{n}$. Also note that, $S=\infty$ implies that LPD constraint becomes inactive and full non-LPD capacity is achieved. This happens when the quantity $C_{pd}(\delta) = C$. The question we adress now is, how does $C_{pd}(\delta)$ behave in between these two extreme regimes. Following the same steps of the proof of Theorem \ref{thm:srl_mimo}, we can obtain the following bound on $P_{th}$:
\begin{align}
P_{th} \leq \min\left\{\dfrac{\sqrt{2} \sigma_w^2 \delta}{\sqrt{n}\lambda_w \cos^2(\theta)},P \right\}.
\end{align}
Thus, we have $P_{th} = P$, and hence $C_{pd}(\delta) = C$, when
\begin{align}
P \leq \dfrac{\sqrt{2} \sigma_w^2 \delta}{\sqrt{n} \lambda_w \cos^2(\theta)}
\end{align}
equivalently,
\begin{align}
\label{eq:thm_limit}
&\cos^2(\theta) \leq \dfrac{\sqrt{2} \sigma_w^2 \delta}{\sqrt{n} \lambda_w P} \nn \\
\Rightarrow &\abs{\theta - \pi/2} \leq \pi/2 - \cos^{-1} \left(\sqrt{\dfrac{\sqrt{2} \sigma_w^2 \delta}{\sqrt{n} \lambda_w P}}\right).
\end{align}
This happens with probability no less than:
\begin{align}
\label{eq:pr_cap}
Pr( C_{pd}(\delta) = C )  \geq 1-K\sqrt{N_a} \left(1-\dfrac{\sqrt{2} \sigma_w^2 \delta}{\sqrt{n}\lambda_w P}\right)^{(N_a -2)/2}
\end{align}
where \eqref{eq:pr_cap} follows by setting $\zeta$ in Lemma \ref{lema:lema1} equal to the RHS of (\ref{eq:thm_limit}) and using the following basic trigonometry facts: $\cos(\pi/2 - x) = \sin(x)$ and $\sin(\cos^{-1}(x)) = \sqrt{1-x^2}$. It can be seen that, the probability that $C_{pd}(\delta) = C$ scales as $(1+g)^{({N_a - 2})/2}/ K \sqrt{N_a}$ up to $1$, where $g=\left(\dfrac{\sqrt{2} \sigma_w^2 \delta}{\sqrt{n}\lambda_w P}\right)$.  
\end{proof}

Theorem \ref{thm:limit_main} states that Alice can communicate at full rate to Bob while satisfying the LPD constraint \eqref{eq:metric}. Note that, the limit in both orders yields $S=\infty$.

 As $N_a \rightarrow \infty$, the radiation pattern of a wireless MIMO transmitter becomes so extremely directive (pencil beam). We call this limit the \textit{wired limit of wireless MIMO communication}. In the wired limit, Willie cannot detect Alice's transmission unless he wiretaps this \textit{virtual wire}. Theorem \ref{thm:limit_main} provides a rigorous characterization of the wired limit of wireless MIMO communication. In principle, it answers the fundamental question: How fast does the LPD constrained rate increase with the number of antennas at Alice? It can be seen that the probability that Alice fully utilizes the channel scales like $2^{(N_a-2)/2} / K \sqrt{N_an}$ up to $1$ using the same justification given after Lemma \ref{lema:lema1}.
\subsection{Massive MIMO Limit Without Shared Secret}
\label{sec:massive_nosk}
In Section \ref{sec:PD_nosk} it was shown that, only diminishing covert rate, $\mathcal{O}(N/M)$, can be achieved without requiring a shared secret between Alice and Bob. Again, we note that this diminishing rate was shown to be achievable when Willie's channel is isotropic. Also, we have shown shown that, in the massive MIMO limit, the achievable covert rate grows exponentially with the number of transmitting antennas when there is a shared secret between Alice and Bob. Thus, it is also instructive to consider LPD communication problem without a shared secret in the massive MIMO limit. As illustrated in Section \ref{sec:example}, if Alice has CSI of both channels, not only can she communicate covertly and reliably at full rate whenever the eigen directions of both channels are orthogonal, but also she does not need a shared secret to achieve this rate. Building on our analysis in Section \ref{sec:massive}, we give the massive MIMO limit of the $\delta$-PD capacity when there is no shared secret between Alice and Bob.

Now, let us consider the scenario in which $\mathbf{u}_w$ is chosen uniformly at random and fixed once chosen. For proper handling of the scaling of $K_n(\delta,\epsilon)$ in massive MIMO limit without a shred secret, let us define
\begin{align}
\label{eq:S_nosk}
\hat{S} \triangleq \lim_{\epsilon \downarrow 0} \varliminf_{nN_a \rightarrow \infty} \dfrac{K_n(\delta,\epsilon)}{n N_a\sqrt{2 \delta^2}}.
\end{align} 
Observe that, unlike $S$, $K_n(\delta,\epsilon)$ is normalized to $n$ instead of $\sqrt{n}$ in the expression of $\hat{S}$. Now, following Proposition \ref{prop:l_mimo} and Theorem \ref{thm:l_mimo}, we can show that
\begin{align}
\hat{S} = \lim_{nN_a \rightarrow \infty}  \dfrac{n N_a }{\sqrt{2\delta^2}} C_{pd}(\delta).
\end{align}
We give the result of this scenario in the following Theorem.
\begin{theorem}
 \label{thm:limit_main_nosk}
Assume that $rank\{\mathbf{H}_b\} = rank\{\mathbf{H}_e\} = 1$ and suppose that there is no shared secret between Alice and Bob. Given $\mathbf{u}_b$, for any $\mathbf{u}_w$ chosen uniformly at random, $C_{pd}(\delta)$ is as given in Theorem \ref{thm:rank1} and
\begin{align}
\label{eq:lim_cap_nosk}
\hat{S} = \infty.
\end{align}
Moreover, $K_n$ grows like $\sqrt{\dfrac{1}{K^2 N_a}}(1+\dfrac{c}{n})^{(N_a-2)/2}$ where $K$ is a universal constant and $c =\left(\dfrac{\sqrt{2} \sigma_w^2 \delta}{\lambda_w P}\right)$.
\end{theorem}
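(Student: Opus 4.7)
The plan is to mirror the structure of the proof of Theorem \ref{thm:limit_main}, but with the no-shared-secret LPD constraint \eqref{eq:metric_nosk} replacing the shared-secret constraint \eqref{eq:metric}, and with the bounds of Theorem \ref{thm:rank1_nosk} replacing those of Theorem \ref{thm:rank1}. First I would invoke Theorem \ref{thm:rank1_nosk} directly to obtain $C_{pd}(\delta)$ for the unit rank case as an explicit function of $\cos^2(\theta)$, where $\theta$ is the angle between the right singular vectors $\mathbf{u}_b$ and $\mathbf{u}_w$. By Corollary \ref{cor:cor1}, when $\mathbf{u}_w$ is drawn uniformly at random on the unit sphere in $\mathbb{C}^{N_a}$, the quantity $\cos^2(\theta)$ tends to $0$ exponentially fast in $N_a$. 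Substituting into the lower bound in \eqref{eq:rank1_nosk_thm}, multiplying by $n N_a/\sqrt{2\delta^2}$, and taking the iterated limits in either order yields $\hat{S}=\infty$, exactly analogous to equations \eqref{eq:s_infty1} and \eqref{eq:s_infty2}.

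Next I would determine the rate at which $C_{pd}(\delta)$ saturates at the non-LPD capacity $C$ as $N_a$ grows. Using Theorem \ref{thm:main_nosk} together with the unit rank structure $\mathbf{Q}_*=P_{th}\mathbf{u}_b\mathbf{u}_b^{\dagger}$, the LPD constraint \eqref{eq:metric_nosk} specializes to
\begin{equation*}
\log\!\left(1+\frac{\lambda_w P_{th}\cos^2(\theta)}{\sigma_w^2}\right)\le \frac{2\delta^2}{n}.
\end{equation*}
In the relevant asymptotic regime the left-hand side is small, so the inequality is equivalent up to lower-order terms to an upper bound of the form $P_{th}\le c_1\sigma_w^2\delta^2/(n\lambda_w\cos^2(\theta))$. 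The LPD constraint thus becomes inactive (so $P_{th}=P$ and $C_{pd}(\delta)=C$) precisely when $\cos^2(\theta)\le c/n$, where $c=\sqrt{2}\sigma_w^2\delta/(\lambda_w P)$ after reconciling constants with the stated form.

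Finally I would apply Lemma \ref{lema:lema1} with $\zeta=\pi/2-\cos^{-1}\!\sqrt{c/n}$, using the identities $\cos(\pi/2-x)=\sin(x)$ and $\sin(\cos^{-1}(x))=\sqrt{1-x^2}$ to obtain
\begin{equation*}
\Pr\!\left(C_{pd}(\delta)=C\right)\ge 1-K\sqrt{N_a}\,(1-c/n)^{(N_a-2)/2}.
\end{equation*}
Equivalently, the probability of \emph{not} reaching full capacity is of order $(1+c/n)^{(N_a-2)/2}/(K\sqrt{N_a})$ using the observation (given just after Lemma \ref{lema:lema1}) that $a^{N_a}$ decays at the same rate as $(2-a)^{-N_a}$. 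The claimed growth of $K_n$ then follows by combining the $\sqrt{1/(K^2 N_a)}$ factor from this probability with the two operating regimes: below the threshold $K_n$ tracks the covert-bits bound in \eqref{eq:rank1_nosk_thm}, while above it $K_n$ coincides with the non-LPD capacity codebook size.

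The main obstacle is the careful bookkeeping around the $1/n$ scaling of \eqref{eq:metric_nosk} as contrasted with the $1/\sqrt{n}$ scaling of \eqref{eq:metric} in the shared-secret case: this is precisely what changes the final exponential factor from $(1+c/\sqrt{n})^{(N_a-2)/2}$ in Theorem \ref{thm:limit_main} to $(1+c/n)^{(N_a-2)/2}$ here, and it is also why the correct normalization in the definition of $\hat{S}$ in \eqref{eq:S_nosk} is $n N_a\sqrt{2\delta^2}$ rather than $N_a\sqrt{2n\delta^2}$. A minor additional check is that the $\log(1+x)\approx x$ approximation used in solving the LPD constraint is tight enough to preserve the stated constant $c=\sqrt{2}\sigma_w^2\delta/(\lambda_w P)$ in the large-$n$ regime.
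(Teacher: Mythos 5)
Your proposal is correct and takes essentially the same approach as the paper, which in fact disposes of this theorem in a single sentence ("The proof follows exactly the same steps as in the proof of Theorem \ref{thm:limit_main}"); you have faithfully reconstructed that argument with the no-shared-secret substitutions. One small point worth flagging: carrying the constant through the no-shared-secret threshold (Appendix \ref{app:srl_mimo_nosk}, equation \eqref{eq:p_th_nosk} with $N=M=1$ and $\gamma_w=\lambda_w\cos^2(\theta)$) actually gives $\cos^2(\theta)\le 2\sigma_w^2\delta^2/(n\lambda_w P)$, i.e.\ $c=2\sigma_w^2\delta^2/(\lambda_w P)$, so the paper's stated $c=\sqrt{2}\sigma_w^2\delta/(\lambda_w P)$ appears to be copied verbatim from Theorem \ref{thm:limit_main} and your instinct to question it (rather than your deference to the stated form) was right.
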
 
\begin{proof} 
The proof follows exactly the same steps as in the proof of Theorem \ref{thm:limit_main}.    
\end{proof}
Again, it can be seen that Alice achieve the maximum achievable non-LPD rate even under the LPD constraint.However, the rate at which $C_{pd}(\delta)$ converges to $C$ is much slower, compared to the case with shared secret codebook. Hence, it can be deduced from Theorems \ref{thm:rank1_nosk} and \ref{thm:limit_main_nosk} that, in the limit of large $N_a$, Alice can transmit $\mathcal{O}(n)$ bits in $n$ independent channel uses while satisfying the LPD constraint without the need for any form of shared secret. Even though, it has to be considered that the number of antennas required at Alice under this scenario is much larger than that when she shares a secret of sufficient length with Bob. The following numerical example demonstrates the covert rates in massive MIMO limit with and without a shared secret between Alice and Bob.

 \begin{example}
 Assume that Alice intend to use the channel for $n=10^{9}$ times over a channel of bandwidth of $10MHz$, hence, $\sqrt{n} = 3.1623\times 10^{4}$. Suppose that Alice is targeting $\delta = 10^{-2}$. Let $\sigma_w^2= \sigma_b^2  = 10^{-2}$ and $\lambda_w = \lambda_b = 10^{-3}$. Assume that Alice is targeting $SNR=15dB$ at Bob, hence, $P=316.228$. Then, for $N_a = 100$ it can be verified that, Alice can transmit $\mathcal{O}(n)$ covert bits instead of $\mathcal{O}(\sqrt{n})$. Observe that, Alice needed only $N_a = 100$ to communicate \textit{covertly} at near full rate to Bob. Also note that, at $6GHz$, two dimensional array of $100$ elements can fit within an area of a single sheet of paper. See Fig. (\ref{fig:LPD_CAP_NOSK}) for the relation between the $\delta$-PD capacity and number of transmitting antenna for different values of number of antennas at Willie with $\delta = 10^{-2}$. 
 \end{example}
 \begin{figure}[ht]
\centering
\includegraphics[width = 3.4 in, height = 2.3 in]{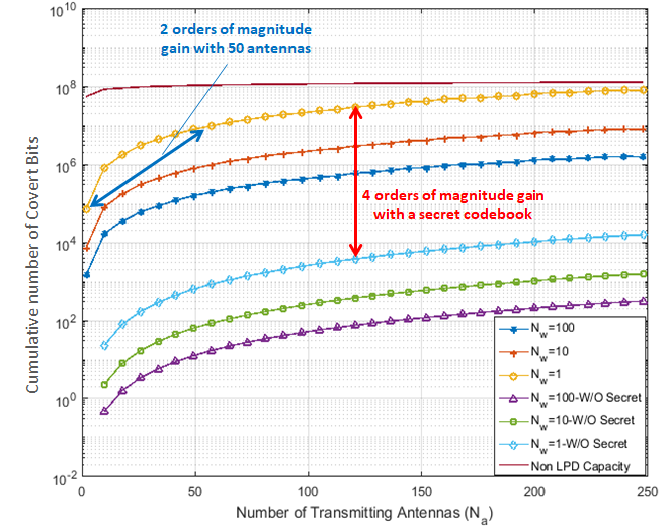}
\caption{The relation between achievable covert rate with and without a shared secret (plotted in log scale), in bits per second, and $N_a$ for different values of $N_w$ with target $\delta=10^{-2}$. It shows that Alice can communicate near full rate with $N_a $ around $100$ when she share a secret with Bob. A large gap can be observed when there is no shared secret.  
\label{fig:LPD_CAP_NOSK}}
\end{figure}

As can be seen from Fig. \eqref{fig:LPD_CAP_NOSK}, Alice could achieve a covert rate very close to the non-LPD constrained capacity of her channel to Bob with $N_a \geq 100$ with a block of length $n=10^9$. Also we see that there is a significant gap (nearly 4 orders of magnitude) between the achievable covert rate with and without a preshared secret. Although both rates converges to $C$ as $N_a \rightarrow \infty$, we see that without a shared secret, the number of antennas required to achieve near full rate is significantly greater than that required when Alice and Bob are sharing a secret of sufficient length. For practical consideration, this result leaves the massive MIMO limit of the $\delta$-PD capacity without a shared secret of theoretical interest only.


\section{Discussion}
\textbf{Impact of CSI level.} Throughout this paper we have considered perfect CSI at Alice about the her channel to Bob. In practical scenarios, this assumption might not hold true as CSI always suffer from imperfection due to e.g. channel estimation error or non error free CSI feedback link. Despite these potential impairments, we argue that the case of imperfect CSI at Alice does not affect the obtained results. That is because we have made the assumption that Willie has perfect CSI as well about his channel. Further, in case when Alice has absolutely no CSI is of special interest as communication under these critical conditions may not allow Bob to share his CSI to Alice, specially when dealing with passive Bob. In this scenario, we can verify that the Alice can transmit $\mathcal{O}(N/M\sqrt{n})$ bits in $n$ independent channel uses (details are omitted here). We see that, the covert rate scales with $N/M$ compared to $N/\sqrt{M}$ when Alice has CSI. More interestingly, in massive MIMO limit with absolutely no CSI at Alice, we can verify that the covert rate $\rightarrow 0$ even with $N_a \rightarrow \infty$. While it was recognized as the most favorable scenario for Alice when she has CSI, massive MIMO may make matters worse when she has absolutely no CSI. Contrary, when Alice has CSI about both channels, covert rates up to the non LPD constrained rate of the channel may be achieved (under certain conditions, see Section \ref{sec:example}) without the need for a shared secret. As it is the case for MIMO channel with no secrecy constraint, CSI availability play an important role in achieving higher covert rates.

\textbf{Impact of Willie's ignorance.} All results obtained in this paper assumes that Willie has perfect CSI about his channel and he is aware of his channel noise statistics. Ignorance of Willie about one of these parameters is expected to have positive impact on the achievable covert rate. For example, in \cite{reliable_deniable}, it was shown that $\mathbf{O}(n)$ bits can be transmitted reliably with low probability of detection over BSC whose error probability is unknown to Willie except that it is drawn from a known interval. This scenario is subject for future research. 

\textbf{Length of the Shared Secret.} In our analysis, we have considered scenarios in which the entire codebook is either available or unavailable at Willie. Meanwhile, secrets of shorter length were reported to be enough for fulfilling LPD requirements. For scalar AWGN channel, it was shown that the required secret secret length is in order of $\mathcal{O}(\log n \sqrt{n})$ \cite{bash2013limits}. Similar result was established for DMC in \cite{wang2016fundamental}. In this work, while we showed that there is a significant gap between the achievable covert rate with and without a shared secret, the minimum length of the required shared secret has not been addressed. 

\section{Summary and Conclusions}
We have established the limits of LPD communication over the MIMO AWGN channel. In particular, using relative entropy as our LPD metric, we studied the maximum codebook size, $K_n(\delta,\epsilon)$, for which Alice can guarantee reliability and LPD conditions are met. We first showed that, the optimal codebook generating input distribution under $\delta$-PD constraint is the zero-mean Gaussian distribution. We based our arguments on the the principle of minimum relative entropy. For an isotropic Willie channel, we showed that Alice can transmit $\mathcal{O}(N \sqrt{n/M})$ bits reliably in $n$ independent channel uses, where $N$ and $M$ are the number of active eigenmodes of Bob and Willie channels, respectively. Further, we evaluated the scaling rates of $K_n(\delta,\epsilon)$ in the limiting regimes for the number of channel uses (asymptotic block length) and the number of antennas (massive MIMO). We showed that, while the square-root law still holds for the MIMO-AWGN channel, the number of bits that can be transmitted covertly scales exponentially with the number of transmitting antennas. More precisely, for a unit rank MIMO channel, we show that $K_n(\delta,\epsilon)$ scales as $\sqrt{\dfrac{n}{K^2 N_a}}(1+\dfrac{c}{\sqrt{n}})^{(N_a-2)/2}$ where $N_a$ is the number of transmitting antennas, $K$ is a universal constant and $c$ is constant independent on $n$ and $N_a$. Also, we derived the scaling of $K_n(\delta,\epsilon)$ \textit{with no shared secret} between Alice and Bob. In particular, we showed that achieving better covert rate is a resource arm race between Alice, Bob and Willie: Alice can transmit $\mathcal{O}(N/M)$ bits reliably in $n$ independent channel uses, i.e., the covert rate is in the order of the ratio between active eigenmodes of both channels. Despite this diminishing rate, in the massive MIMO limit, Alice can still achieve higher covert rate up to the non LPD constrained capacity of her channel to Bob, yet, with a significantly greater number of antennas. Although the covert rates both with and without a shared secret are shown to converge the non LPD constrained capacity as $N_a \rightarrow \infty$, numerical evaluations showed that without a shared secret, the number of antennas required to achieve near full rate can be orders of magnitude greater. The practical implication of our result is that, MIMO has the potential to provide a substantial increase in the file sizes that can be covertly communicated subject to a reasonably low delay.

\bibliographystyle{IEEEtran}
\bibliography{IEEEabrv,References}

\begin{appendices}
\section{Kullback–Leibler Divergence at Willie}
\label{app:KL}
Assuming that Willie is informed about its own channel to Alice, Willie's observation when Alice is silent is distributed as $\mathbb{P}_0$, meanwhile, it takes the distribution $\mathbb{P}_1$ whenever Alice is active where
\begin{align}
\mathbb{P}_0 = \abs{\pi \Sigma_0}^{-1} \exp\left(-\mathbf{z}^{\dagger} \Sigma_0^{-1} \mathbf{z}\right), \nonumber \\
\mathbb{P}_1 = \abs{\pi \Sigma_1}^{-1} \exp\left(-\mathbf{z}^{\dagger} \Sigma_1^{-1} \mathbf{z}\right),
\end{align} 
where $\Sigma_0 = \sigma_w^2 \mathbf{I}_{N_w}$ and $\Sigma_1 =  \mathbf{H}_w\mathbf{Q}\mathbf{H}_w^{\dagger} + \sigma_w^2 \mathbf{I}_{N_w}$. Here, $\mathbf{Q}= \mathbb{E}\left[\mathbf{x}\mathbf{x}^{\dagger}\right]$ is the covariance matrix of signal transmitted by Alice. Note that, the choice of $\mathbf{Q}$ is highly dependent on the amount of CSI available at Alice. Thus, we evaluate the KL divergence at Willie in general, then, in the following Sections we will discuss the effect of CSI availability at Alice on the performance of Willie's optimal detector. Assuming that Willie channel is known and fixed, the KL divergence between $\mathbb{P}_0$ and $\mathbb{P}_1$ is given as follows:
\allowdisplaybreaks[2]
\begin{eqnarray}
\label{eq:D_gen}
\mathcal{D}  &=& \mathbb{E}_{\mathbb{P}_0} \left[\log \mathbb{P}_0 - \log \mathbb{P}_1\right] \nonumber \\
&=&  \mathbb{E}_{\mathbb{P}_0} \left[-\log \abs{\pi \Sigma_0} - \mathbf{z}^{\dagger} \Sigma_0^{-1} \mathbf{z}\right. \nn \\
 &&\;\;\;\;\;\;\;+ \left. \log \abs{\pi \Sigma_1} + \mathbf{z}^{\dagger} \Sigma_1^{-1} \mathbf{z} \right] \nonumber \\
&=&  \log \dfrac{\abs{\Sigma_1}}{\abs{\Sigma_0}} +\mathbb{E}_{\mathbb{P}_0} \left[ \mathbf{z}^{\dagger} \Sigma_1^{-1} \mathbf{z}  - \mathbf{z}^{\dagger} \Sigma_0^{-1} \mathbf{z} \right] \nonumber \\
&=&  \log \abs{\Sigma_1\Sigma_0^{-1}} + \mathbb{E}_{\mathbb{P}_0} \left[ \mathbf{z}^{\dagger} \left(\Sigma_1^{-1} -\Sigma_0^{-1} \right)  \mathbf{z} \right] \nonumber \\
&=&  \log \abs{\Sigma_1\Sigma_0^{-1}} + \mathbb{E}_{\mathbb{P}_0} \left[  \mathbf{tr}\left\{\left(\Sigma_1^{-1} -\Sigma_0^{-1} \right) \mathbf{z}\mathbf{z}^{\dagger}\right\} \right] \nonumber \\
&=&  \log \abs{\Sigma_1\Sigma_0^{-1}} + \mathbf{tr}\left\{\Sigma_1^{-1} \Sigma_0\right\} - \mathbf{tr}\left\{\Sigma_0^{-1} \Sigma_0\right\} \nonumber \\
&=&  \log \abs{\Sigma_1\Sigma_0^{-1}} + \mathbf{tr}\left\{\Sigma_1^{-1} \Sigma_0\right\} - N_w
\end{eqnarray}
Now observe that, $\Sigma_1 \Sigma_0^{-1} = \dfrac{1}{\sigma_w^2}\mathbf{H}_w\mathbf{Q}\mathbf{H}_w^{\dagger} + \mathbf{I}_{N_w}$ and that $\abs{\dfrac{1}{\sigma_w^2}\mathbf{H}_w\mathbf{Q}\mathbf{H}_w^{\dagger} + \mathbf{I}_{N_w}} = \abs{\dfrac{1}{\sigma_w^2}\mathbf{W}_w\mathbf{Q} + \mathbf{I}_{N_a}}$. Define $\mathbf{W}_w \triangleq \mathbf{H}_w^{\dagger}\mathbf{H}_w$ we note that, the non-zero eigenvalues of $\mathbf{H}_w\mathbf{Q}\mathbf{H}_w^{\dagger}$ and $\mathbf{W}_w\mathbf{Q}$ are identical, hence, we ca write:
\allowdisplaybreaks[2]
\begin{eqnarray}
\label{eq:D_gen_cont_proof}
\mathcal{D} \left(\mathbb{P}_0 \parallel \mathbb{P}_1 \right) &=&  \underbrace{\log \abs{\dfrac{1}{ \sigma_w^2 } \mathbf{H}_w \mathbf{Q}\mathbf{H}_w^{\dagger} + \mathbf{I}_{N_w}}}_{\text{Willie's Channel Capacity}}  \nonumber  \\
&&+ \underbrace{\mathbf{tr}\left\{\left[\dfrac{1}{\sigma_w^2 } \mathbf{H}_w \mathbf{Q}\mathbf{H}_w^{\dagger} + \mathbf{I}_{N_w}\right]^{-1}\right\} - N_w}_{\leq 0 \text{, Willie's penalty due to codebook ignorance}}        \nonumber  \\
&=&  \log \prod_{i=1}^{N_w} \left(1+ \dfrac{ \lambda_i({\mathbf{W}}_w \mathbf{Q})}{\sigma_w^2}\right) \nonumber \\
&&+ \sum_{i=1}^{N_w} \left(1+ \dfrac{ \lambda_i({\mathbf{W}}_w \mathbf{Q})}{\sigma_w^2}\right)^{-1} - N_w \nonumber \\
&=& \sum_{i=1}^{N_w} \big[ \log \left(1+ \dfrac{ \lambda_i({\mathbf{W}}_w \mathbf{Q})}{\sigma_w^2}\right) \nonumber \\
&&+ \left(1+ \dfrac{ \lambda_i({\mathbf{W}}_w \mathbf{Q})}{\sigma_w^2}\right)^{-1} - 1 \big]\nonumber \\
&=& \sum_{i=1}^{N_w} \big[ \log \left(1+\dfrac{ \lambda_i({\mathbf{W}}_w \mathbf{Q})}{\sigma_w^2}\right) \nn \\
&& - \left(1+ \left(\dfrac{ \lambda_i({\mathbf{W}}_w \mathbf{Q})}{\sigma_w^2}\right)^{-1}\right)^{-1}\big],
\end{eqnarray} 
where $\lambda_{i}({\mathbf{W}}_w \mathbf{Q})$ is the $i^{th}$ eigenvalue of ${\mathbf{W}}_w \mathbf{Q}$. 
\section{Achievability Proof of Proposition \ref{prop:l_mimo}}
\label{app:l_mimo_prop}
We show the achievability for Gaussian input. As discussed in \cite{wang2016fundamental}, the sequence $\left\{K_n \right\}$ is achievable provided that:
\begin{align}
\label{eq:varlimsup}
\varlimsup_{n \rightarrow \infty} \dfrac{K_n}{\sqrt{n}} \leq P-\liminf_{n \rightarrow \infty} \dfrac{1}{\sqrt{n}} \log \dfrac{f^{\times n}(\mathbf{z}^n | \mathbf{x}^n)}{f^{\times n}(\mathbf{z}^n)},
\end{align}
where $P-\liminf$ denotes the limit inferior in probability, namely, the largest number such that the probability that the random variable in consideration is greater than this number tends to one as $n$ tends to infinity. Meanwhile, $f^{\times n} (\mathbf{x}^n)$ denotes the $n^{th}$ extension of the probability density function of a random vector $\mathbf{x}$. Thus, we need to show
\begin{align}
\dfrac{1}{\sqrt{n}} \log \dfrac{f^{\times n}(\mathbf{z}^n | \mathbf{x}^n)}{f^{\times n}(\mathbf{z}^n)} - \sqrt{n} \mathcal{I}(f_n(\mathbf{x}),f_n(\mathbf{z}|\mathbf{x})) \rightarrow 0 
\end{align}
in probability as $n$ tends to infinity. First observe that,
 \begin{align}
f^{\times n}(\mathbf{z}^n | \mathbf{x}^n) &= \prod_{i=1}^n \abs{\pi \Sigma_0}^{-1} \exp\left(-(\mathbf{z}_i -\mathbf{x}_i )^{\dagger} \Sigma_0^{-1} (\mathbf{z}_i - \mathbf{x}_i)\right), \nonumber \\
f^{\times n}(\mathbf{z}^n ) &=  \prod_{i=1}^n \abs{\pi \Sigma_1}^{-1} \exp\left(-\mathbf{z}_i^{\dagger} \Sigma_1^{-1} \mathbf{z}_i\right),
\end{align} 
where $\Sigma_0 = \sigma_w^2 \mathbf{I}_{N_w}$, $\Sigma_1 =  \mathbf{H}_w\mathbf{Q}\mathbf{H}_w^{\dagger} + \sigma_w^2 \mathbf{I}_{N_w}$ and let $\mathbf{Q}= \mathbb{E}\left[\mathbf{x}\mathbf{x}^{\dagger}\right]$ be chosen such that \eqref{eq:metric} is satisfied. Note that, $\mathbf{Q}$ has to be a decreasing function of $n$. Then,
\begin{align}
\dfrac{f^{\times n}(\mathbf{z}^n | \mathbf{x}^n)}{f^{\times n}(\mathbf{z}^n )} &= \abs{\Sigma_1 \Sigma_0^-1}^n \times \nn \\ 
&\exp\left(\sum_{i=1}^n \mathbf{tr}(\Sigma_1^{-1} \mathbf{z}_i \mathbf{z}_i^{\dagger}) - \sum_{i=1}^n \mathbf{tr}(\Sigma_0^{-1} \mathbf{e}_i \mathbf{e}_i^{\dagger})\right).
\end{align}
Accordingly,
\begin{align}
\dfrac{1}{\sqrt{n}} \log \dfrac{f^{\times n}(\mathbf{z}^n | \mathbf{x}^n)}{f^{\times n}(\mathbf{z}^n )} &= \sqrt{n} \log \abs{\Sigma_1 \Sigma_0^-1} + \nn \\ 
&\dfrac{1}{\sqrt{n}} \left(\sum_{i=1}^n \mathbf{tr}(\Sigma_1^{-1} \mathbf{z}_i \mathbf{z}_i^{\dagger}) - \sum_{i=1}^n \mathbf{tr}(\Sigma_0^{-1} \mathbf{e}_i \mathbf{e}_i^{\dagger})\right),
\end{align}
whose expectation can be found as
\begin{align}
\mathbb{E}\left[\dfrac{1}{\sqrt{n}} \log \dfrac{f^{\times n}(\mathbf{z}^n | \mathbf{x}^n)}{f^{\times n}(\mathbf{z}^n )}\right] &= \sqrt{n} \log \abs{\Sigma_1 \Sigma_0^-1} + \nn \\ 
&\dfrac{1}{\sqrt{n}} \left(\sum_{i=1}^n \mathbf{tr}(\Sigma_1^{-1} \Sigma_1) - \sum_{i=1}^n \mathbf{tr}(\Sigma_0^{-1} \Sigma_0)\right) \nn \\
&= \sqrt{n} \log \abs{\Sigma_1 \Sigma_0^-1} \nn \\
&= \sqrt{n} \log \abs{\mathbf{I}_{N_w} + \dfrac{1}{\sigma_w^2}\mathbf{H}_w \mathbf{Q} \mathbf{H}_w^{\dagger} } \nn \\
&=  \sqrt{n} \mathcal{I}(f_n(\mathbf{x}),f_n(\mathbf{z}|\mathbf{x})).
\end{align}
It then follows by Chebyshev’s inequality that, for any constant $a >0$,
\begin{align}
Pr\left(\abs{\dfrac{1}{\sqrt{n}} \log \dfrac{f^{\times n}(\mathbf{z}^n | \mathbf{x}^n)}{f^{\times n}(\mathbf{z}^n)} - \sqrt{n} \mathcal{I}(f_n(\mathbf{x}),f_n(\mathbf{z}|\mathbf{x}))} \geq a\right) \nn \\
\leq \dfrac{1}{a^2} \mathbf{var}\left(\dfrac{1}{\sqrt{n}} \log \dfrac{f^{\times n}(\mathbf{z}^n | \mathbf{x}^n)}{f^{\times n}(\mathbf{z}^n)}\right)
\end{align}
and, it remains to show that
\begin{align}
\lim_{n \rightarrow \infty} \mathbf{var} \left(\dfrac{1}{\sqrt{n}} \log \dfrac{f^{\times n}(\mathbf{z}^n | \mathbf{x}^n)}{f^{\times n}(\mathbf{z}^n )}\right) = \mathbf{0}.
\end{align} 
Note that,
\begin{align}
\label{eq:covar}
\mathbf{var} &\left(\dfrac{1}{\sqrt{n}} \log \dfrac{f^{\times n}(\mathbf{z}^n | \mathbf{x}^n)}{f^{\times n}(\mathbf{z}^n )}\right) \nn \\
&= \mathbf{var}\left( \dfrac{1}{\sqrt{n}} \left(\sum_{i=1}^n \mathbf{z}_i^{\dagger} \Sigma_1^{-1} \mathbf{z}_i  - \sum_{i=1}^n \mathbf{e}_i^{\dagger} \Sigma_0^{-1} \mathbf{e}_i \right)\right) \nn \\
&= \dfrac{1}{n} \sum_{i=1}^n \mathbf{var}\left(   \mathbf{z}_i^{\dagger} \Sigma_1^{-1} \mathbf{z}_i  - \mathbf{e}_i^{\dagger} \Sigma_0^{-1} \mathbf{e}_i \right) \nn \\
&= \mathbf{var}\left(   \mathbf{z}_i^{\dagger} \Sigma_1^{-1} \mathbf{z}_i  - \mathbf{e}_i^{\dagger} \Sigma_0^{-1} \mathbf{e}_i \right) \nn \\
&= \mathbb{E} \left[ \left(   \mathbf{z}_i^{\dagger} \Sigma_1^{-1} \mathbf{z}_i  - \mathbf{e}_i^{\dagger} \Sigma_0^{-1} \mathbf{e}_i \right) \right. \nn \\
&\;\;\;\;\;\;\;\;\;\left. \left(   \mathbf{z}_i^{\dagger} \Sigma_1^{-1} \mathbf{z}_i  - \mathbf{e}_i^{\dagger} \Sigma_0^{-1} \mathbf{e}_i \right)^{\dagger} \right] \nn \\
&= \mathbb{E} \left[ \mathbf{z}_i^{\dagger} \Sigma_1^{-1} \mathbf{z}_i  \mathbf{z}_i^{\dagger} \Sigma_1^{-1} \mathbf{z}_i \right. \nn \\
&\;\;\;\;\;\;\;\;\; - \mathbf{z}_i^{\dagger} \Sigma_1^{-1} \mathbf{z}_i \mathbf{e}_i^{\dagger} \Sigma_0^{-1} \mathbf{e}_i  \nn \\
&\;\;\;\;\;\;\;\;\; - \mathbf{e}_i^{\dagger} \Sigma_0^{-1} \mathbf{e}_i \mathbf{z}_i^{\dagger} \Sigma_1^{-1} \mathbf{z}_i  \nn \\
&\;\;\;\;\;\;\;\;\; \left. - \mathbf{e}_i^{\dagger} \Sigma_0^{-1} \mathbf{e}_i \mathbf{e}_i^{\dagger} \Sigma_0^{-1} \mathbf{e}_i \right] \nn \\
\end{align}
Now observe that, since $\mathbf{Q} \rightarrow 0$ as $n$ tends to infinity, we can verify that each term in \eqref{eq:covar} tends to zero as $n$ tends to infinity. $\blacksquare$
\section{Proof of Theorem \ref{thm:l_mimo}}
\label{app:l_mimo}
We show that the limit in \eqref{eq:l_mimo} always exists. Note that, for every $n$,  $f_n(\mathbf{x})$ is zero mean Gaussian. Let $\mathbf{Q}_n^* = \mathbb{E}_n\left[\mathbf{x}\mathbf{x}^{\dagger}\right]$ be
\begin{align}
\mathbf{Q}_n^* =  \argmax_{\substack{\mathbf{Q} \succeq\mathbf{0} \\ \mathbf{tr}\left(\mathbf{Q}\right) \leq P}} \mathcal{I}(f_n(\mathbf{x}),f_n(\mathbf{y}))
\end{align}
where the maximum is subject to \eqref{eq:metric}. Hence, we have 
\begin{align}
\max_{\substack{\mathbf{Q} \succeq\mathbf{0} \\ \mathbf{tr}\left(\mathbf{Q}\right) \leq P}} \mathcal{I}(f_n(\mathbf{x}),f_n(\mathbf{y})) = \log \abs{\mathbf{I}+\dfrac{\mathbf{H}_b \mathbf{Q}_n^* \mathbf{H}_b^{\dagger}}{\sigma_b^2}}.
\end{align}
Now, we have two cases to consider:
\begin{enumerate}
\item $\mathcal{D}(\mathbb{P}_0^n \parallel \mathbb{P}_1^n) = 0$. In this case, $\delta$ can be made $0$ causing the limit to be infinity.
\item$\mathcal{D}(\mathbb{P}_0^n \parallel \mathbb{P}_1^n) > 0$. In this case, $\mathbf{Q}_n^*$ has to be a decreasing function of $n$, otherwise, the constraint \eqref{eq:metric} can not be met. In this case, the limit is $\geq 0$ and $< \infty$.
\end{enumerate}
In either cases, the limit exist and, hence, limit can be used in place of limit inferior and, also, the order of limit and maximum can be interchanged.

\section{Proof of Proposition \ref{prop:W_ww}}
\label{app:W_ww}
It is enough to show that
\begin{align}
\mathbf{W}_w^*  &= \argmax_{\mathbf{W}_w \in \mathcal{S}_w}\mathcal{D} \left(\mathbb{P}_0 \parallel \mathbb{P}_1(\mathbf{W}_w)\right) \nn\\
&= \gamma_w \hat{\mathbf{I}},
\end{align}
hence, we need to show that the function $\mathcal{D} \left(\mathbb{P}_0 \parallel \mathbb{P}_1(\mathbf{W}_w)\right)$ is monotonically increasing in $\mathbf{W}_w$, i.e., $\mathcal{D}_1 \triangleq \mathcal{D} \left(\mathbb{P}_0 \parallel \mathbb{P}_1(\mathbf{W}_{w1})\right) \geq \mathcal{D} \left(\mathbb{P}_0 \parallel \mathbb{P}_1(\mathbf{W}_{w2})\right) \triangleq \mathcal{D}_2 $ whenever $\mathbf{W}_{w1} \succcurlyeq \mathbf{W}_{w2}$. Recalling the expression of  $\mathcal{D} \left(\mathbb{P}_0 \parallel \mathbb{P}_1(\mathbf{W}_w)\right)$ in (\ref{eq:D_gen_cont}), we note that the function $\log \abs{\mathbf{I}+\mathbf{W}\mathbf{Q}}$ is monotonically increasing in $\mathbf{W}$ for any $\mathbf{Q}$. Meanwhile, the second term in (\ref{eq:D_gen_cont}) is negative and decreases monotonically in $\mathbf{W}$. Even though, we have that
\allowdisplaybreaks[2]
\begin{eqnarray}
\label{eq:D_worst}
\mathcal{D}_1  - \mathcal{D}_2 &=& \mathbb{E}_{\mathbb{P}_0} \left[\log \mathbb{P}_0 - \log \mathbb{P}_1(\mathbf{W}_{w1}) \right. \nn \\ 
&\quad& \left. \,\,\,\,\,\,\,\,\,\,\,\, - \log \mathbb{P}_0 + \log \mathbb{P}_1(\mathbf{W}_{w2})\right] \nonumber \\
&=& \mathbb{E}_{\mathbb{P}_0} \left[\log \mathbb{P}_1(\mathbf{W}_{w2}) - \log \mathbb{P}_1(\mathbf{W}_{w1})\right] \nonumber \\
&\stackrel{(a)}{\geq}& - \mathbb{E}_{\mathbb{P}_0}  \left[ \log \dfrac{\mathbb{P}_1(\mathbf{W}_{w1})}{ \mathbb{P}_1(\mathbf{W}_{w2})} \right] \nonumber \\
&\stackrel{(b)}{\geq}& 0,
\end{eqnarray}
where $(a)$ follows from Jensen's inequality using convexity of $\log$ and $(b)$ follows because, with some standard matrix algebra as in Appendix \ref{app:KL}, we can show that, for $\mathbf{W}_{w1} \succcurlyeq \mathbf{W}_{w2}$, we have $\mathbb{E}_{\mathbb{P}_0} \left[\log \dfrac{\mathbb{P}_1(\mathbf{W}_{w1})}{ \mathbb{P}_1(\mathbf{W}_{w2})} \right] \leq 1$. $\blacksquare$


\section{Proof of Theorem \ref{thm:main_general}}
\label{app:main_general}
We solve (\ref{eq:eq_MIMO_proplem}) for $\mathbf{W}_w=\mathbf{W}_w^*$. Without loss of generality, assume that $N_w \geq N_a$, hence, $\mathbf{W}_w^* = \gamma_w\mathbf{I}_{N_w}$. For $N_w < N_a$, we can set the $\gamma_w = 0$ for the $N_a -N_w$ minimum eigenvalues of $\mathbf{Q}_*$. Plugging $\mathbf{W}_w^*$ into (\ref{eq:D_gen_cont}), we get \footnote{For simplicity, we write $\mathcal{D}^n$ instead of $\mathcal{D} \left(\mathbb{P}_0^n \parallel \mathbb{P}_1^n \right)$}
\allowdisplaybreaks[2]
\begin{align}
\label{eq:D_main_informed}
\mathcal{D}^n &=& n\left(\log \abs{\mathbf{I}_{N_a}+ \dfrac{\gamma_w}{\sigma_w^2} \mathbf{Q}} + \mathbf{tr}\left\{\left[\mathbf{I}_{N_a}+ \dfrac{\gamma_w}{\sigma_w^2} \mathbf{Q}\right]^{-1}\right\} - N_w\right)
\end{align}

Now observe that,
\begin{align}
\abs{\mathbf{I}+\mathbf{W}_b \mathbf{Q}} \leq \prod_{i=1}^{N_a} (1+\lambda_i(\mathbf{W}_b)\lambda_i(\mathbf{W}_b))
\end{align} 
with equality if and only if $\mathbf{W}_b$ and $\mathbf{Q}$ have the same eigenvectors and note that this choice does not affect (\ref{eq:D_main_informed}) since $\mathbf{W}_w^*$ is isotropic. Hence, the eigenvectors of $\mathbf{Q}_*$ is the same as the eigenvectors of $\mathbf{W}_b$ which is the same as the right singular vectors of $\mathbf{H}_b$.

Now, we form the following Lagrange dual problem
\begin{align}
\label{eq:lagrange}
\mathcal{L}= & \log  \abs{\mathbf{I}_{N_a} +  \dfrac{1}{\sigma_b^2} \mathbf{W}_b \mathbf{Q}} + \lambda (\mathbf{tr}(\mathbf{Q}) -P) - \mathbf{tr}(\mathbf{M}\mathbf{Q})\nonumber \\
&+ \eta \left[\mathcal{D} - \dfrac{2\delta^2}{n}\right],
\end{align}
where $\lambda,\; \eta \geq 0$ are the Lagrange multipliers that penalize violating the power and LPD constraints, respectively, and $\mathbf{M} \succeq \mathbf{0}$ penalizes the violation of the constraint $\mathbf{Q} \succeq \mathbf{0}$. Where the associated KKT conditions can be expressed as:
\begin{align}
\label{eq:KKT}
&\lambda,\;\eta \geq 0 , \;\;\; \lambda (\mathbf{tr}(\mathbf{Q}) -P) = 0 ,\;\;\; \mathbf{M}\mathbf{Q} = \mathbf{0}, \nonumber \\
& \mathbf{Q} \succeq \mathbf{0}, \;\;\; \mathbf{M} \succeq \mathbf{0},\;\;\; \mathbf{tr}(\mathbf{Q}) \leq P,
\end{align}
where the equality constraints in (\ref{eq:KKT}) ar the complementary slackness conditions. Note that, (\ref{eq:eq_MIMO_proplem}) is not a concave problem in general. Thus, KKT conditions are not sufficient for optimality. Yet, since the constraint set is compact and convex and the objective function is continuous, KKT conditions are necessary for optimality. Hence, we proceed by finding the stationary points of the gradient of the dual Lagrange problem in the direction of $\mathbf{Q}$ and obtain the stationary points that solve the KKT conditions. By inspecting the objective function at these points, the global optimum can be identified. To identify the stationary points of the Lagrangian (\ref{eq:lagrange}), we get its gradient with respect to $\mathbf{Q}$ as follows:

\begin{align}
\bigtriangledown_{\mathbf{Q}}\mathcal{L} = &-\left[\mathbf{I}_{N_a} + \dfrac{1}{\sigma_b^2} \mathbf{W}_b \mathbf{Q}  \right]^{-1}\dfrac{\mathbf{W}_b}{\sigma_b^2}  + \lambda \mathbf{I}_{N_a} -\mathbf{M} \nn \\
&+ \dfrac{ \eta \gamma_w}{\sigma_b^2} \left(\left[\mathbf{I}_{N_a} + \dfrac{\gamma_w}{\sigma_w^2} \mathbf{Q}  \right]^{-1} -  \left[\mathbf{I}_{N_a} + \dfrac{\gamma_w}{\sigma_w^2} \mathbf{Q}  \right]^{-2}\right)\nn \\
= &-\left[\mathbf{I}_{N_a} + \dfrac{1}{\sigma_b^2} \mathbf{W}_b \mathbf{Q}  \right]^{-1}\dfrac{\mathbf{W}_b}{\sigma_b^2}  + \lambda \mathbf{I}_{N_a} -\mathbf{M}  \nn \\
&+ \eta  \left( \left[\dfrac{\sigma_w^2}{\gamma_w}\mathbf{I}_{N_a} +  \mathbf{Q}  \right]^{-1}  - \left[\dfrac{\sigma_w^2}{\gamma_w}\mathbf{I}_{N_a} +  \mathbf{Q}  \right]^{-2}\right)
\end{align}
Assume, without loss of generality, that $\mathbf{Q} \succ \mathbf{0}$, then, from $\mathbf{M}\mathbf{Q} = \mathbf{0}$ it follows that $\mathbf{M} = \mathbf{0}$. Now, from $\bigtriangledown_{\mathbf{Q}}\mathcal{L} = 0$ we obtain
\begin{align}
\lambda \mathbf{I}_{N_a} = &\left[\sigma_b^2 \mathbf{W}_b^{-1}  + \mathbf{Q}  \right]^{-1} \nn \\
&+ \eta  \left( \left[\dfrac{\sigma_w^2}{\gamma_w}\mathbf{I}_{N_a} +  \mathbf{Q}  \right]^{-2}  - \left[\dfrac{\sigma_w^2}{\gamma_w}\mathbf{I}_{N_a} +  \mathbf{Q}  \right]^{-1}\right)
\end{align}
Since we know that $\mathbf{Q}_*$ and $\mathbf{W}_b$ have the same eigenvectors, hence, the eigenvalues of $\mathbf{Q}_*$, $\Lambda_{ii}$, can be found from
\begin{align}
\lambda = &(\sigma_b^2 \lambda_i^{-1}(\mathbf{W}_b) + \Lambda_{ii})^{-1} \nn \\
&+ \eta \left( \left(\dfrac{\sigma_w^2}{\gamma_w} + \Lambda_{ii}\right)^{-2}- \left(\dfrac{\sigma_w^2}{\gamma_w} + \Lambda_{ii}\right)^{-1}\right),
\end{align}
as required. $\blacksquare$
\section{Proof of Theorem \ref{thm:srl_mimo}}
\label{app:srl_mimo}
\textbf{Achievability.} Starting from \eqref{eq:D_main_informed}, we obtain
\begin{eqnarray}
\label{eq:D_acheiv}
\mathcal{D} &\stackrel{(a)}{\leq}& \mathbf{tr}\left\{ \dfrac{\gamma_w}{\sigma_w^2} \mathbf{Q}\right\} + \mathbf{tr}\left\{\left[\mathbf{I}_{N_a}+ \dfrac{\gamma_w}{\sigma_w^2} \mathbf{Q}\right]^{-1}\right\} - N_w \nn \\
&\stackrel{(b)}{=}& \sum_{i=1}^{N_w} \left(\dfrac{\gamma_w \Lambda_{ii}}{\sigma_w^2} + \dfrac{1}{1+  \dfrac{\gamma_w \Lambda_{ii}}{\sigma_w^2}} -1\right) \nn \\
&=&  \sum_{i=1}^{N_w} \left(\dfrac{\gamma_w \Lambda_{ii}}{\sigma_w^2} - \dfrac{\dfrac{\gamma_w \Lambda_{ii}}{\sigma_w^2}}{1+  \dfrac{\gamma_w \Lambda_{ii}}{\sigma_w^2}}\right)  \nn \\
&\stackrel{(c)}{\leq}&  N_w\left( \dfrac{\gamma_w P_{th}}{N\sigma_w^2} - \dfrac{\dfrac{\gamma_w P_{th}}{N\sigma_w^2}}{1+  \dfrac{\gamma_w P_{th}}{N\sigma_w^2}} \right) \nn \\
&=&  N_w\left(\dfrac{\dfrac{\gamma_w^2 P_{th}^2}{N^2\sigma_w^4}}{1+  \dfrac{\gamma_w P_{th}}{N\sigma_w^2}} \right), \nn \\
\end{eqnarray} 
where $(a)$ follows from the inequality $\log \abs{A} \leq \mathbf{tr}\left\{A - \mathbf{I}\right\}$, $(b)$ is straightforward matrix algebra, $(c)$ follows since the RHS is maximized for $\Lambda_{ii} = P_{th}/N$ for all $i$. For Alice to ensure \eqref{eq:metric}, we need the RHS of (\ref{eq:D_acheiv}$(e)$) to be less than or equal $2\delta^2$. After some manipulation, if Alice sets
\begin{align}
\label{eq:p_th_acheiv}
P_{th} \leq \dfrac{\sqrt{2}N\sigma_w^2\delta}{\gamma_w\sqrt{n N_w}},
\end{align} 
we can verify that \eqref{eq:metric} is satisfied. Now let Alice set $P_{th}$ for (\ref{eq:p_th_acheiv}) to be met with equality. Given that choice of $P_{th}$, the LPD constraint is met and the rest of the problem is that of choosing the input distribution to maximize the achievable rate. The solution of the problem is then the SVD precoding with conventional water filling \cite{telatar1999capacity,tse2005fundamentals} as follows:
\begin{align}
\label{eq:diag_q_general}
\Lambda_{ii} = \left\{
	\begin{array}{ll}
		   (\mu - \sigma_b^2 \lambda_i^{-1}(\mathbf{W}_b))^+& \mbox{for } 1 \leq i \leq N \\
		0& \mbox{for } N < i \leq N_a, \
	\end{array}
 \right.        
\end{align} 
where $N = \min\{N_a,N_b\}$, $\lambda_i$ is the $i^{th}$ non zero eigenvalue of $\mathbf{W}_b$ and $x^+=\max\{0,x\}$. Further,  $\mu$ is a constant chosen to satisfy the power constraint $\mathbf{tr}\{\Lambda\} = P_{th}$. Accordingly, the following rate is achievable over Alice to Bob channel:
\begin{align}
\label{eq:cap_thm1}
R_{pd}(\delta) &= \sum_{i=1}^{N} \log \left(1+ \dfrac{(\mu - \sigma_b^2 \lambda_i^{-1}(\mathbf{W}_b))^+\lambda_i(\mathbf{W}_b)}{\sigma_b^2}\right) \nn \\
&= \sum_{i=1}^{N} \left(\log \left(\dfrac{\mu \lambda_i(\mathbf{W}_b)}{\sigma_b^2}\right)\right)^+.
\end{align}

However, it is technically difficult to expand \eqref{eq:cap_thm1} to check the applicability of the square-root law. Therefore, we obtain an achievable rate assuming that Alice splits $P_{th}$ equally across active eigenmodes of her channel to Bob. Note that, this rate is indeed achievable since it is less than or equal to \eqref{eq:cap_thm1}. Also note that, when Alice to Bob channel is well conditioned, the power allocation in \eqref{eq:diag_q_general} turns into equal power allocation. Then, the following rate is achievable:
\begin{align}
\label{eq:R_acheiv}
R(\delta) = \sum_{i=1}^{N} \log \left(1+ \dfrac{\sqrt{2}\sigma_w^2\delta \lambda_i(\mathbf{W}_b)}{\sigma_b^2\gamma_w\sqrt{n N_w}} \right).
\end{align}
 Now suppose that Alice uses a code of rate $\hat{R} \leq R(\delta)$, then, Bob can obtain
 \begin{align}
 n \hat{R} &\leq n  \sum_{i=1}^{N} \log \left(1+ \dfrac{\sqrt{2}\sigma_w^2\delta \lambda_i(\mathbf{W}_b)}{\sigma_b^2\gamma_w \sqrt{n N_w}} \right) \nn \\
 &\stackrel{(a)}{\leq}   \sum_{i=1}^{N} \sqrt{n/N_w} \dfrac{\sqrt{2}\sigma_w^2\delta \lambda_i(\mathbf{W}_b)}{\sigma_b^2\gamma_w \ln 2}
 \end{align}
bits in $n$ independent channel uses, where $(a)$ follows from $\ln(1+x) \leq x$, and note that the inequality is met with equality for sufficiently large $n$. Now, assume that $\lambda_i(\mathbf{W}_b) = \lambda_b$ for all $1 \leq i \leq N$, i.e., Bob's channel is well conditioned. Then,
Bob can obtain
 \begin{align}
 n R(\delta) \leq N \sqrt{n/N_w} \dfrac{\sqrt{2}\sigma_w^2\delta \lambda_b}{\sigma_b^2\gamma_w \ln 2}
 \end{align}
bits in $n$ independent channel uses since the inequality is met with equality for sufficiently large $n$. $\blacksquare$

\textbf{Converse.} To show the converse, we assume the most favorable scenario for Alice to Bob channel when $\mathbf{H}_b$ is well conditioned. That is because, the rate Alice can achieve over a well conditioned channel to Bob sets an upper bound to that can be achieved over any other channel of the same Frobenius norm \cite{tse2005fundamentals}. Then, Alice will split her power equally across active eigenmodes of her channel to Bob. Now, Let us choose $\xi \geq 1$ such that,
\begin{align}
\label{eq:log_det_ineq}
\log \abs{\mathbf{I}_{N_a}+ \dfrac{\gamma_w}{\sigma_w^2} \mathbf{Q}} - N_w \geq  \mathbf{tr}\left\{ \dfrac{\gamma_w}{\sigma_w^2} \mathbf{Q}\right\} - \xi N_w
\end{align}
and note that for small $\mathcal{D}$, $\xi$ is a function of $\delta$ that approaches $1$ as $\delta \rightarrow 0$. Hence, combining \eqref{eq:D_main_informed} with \eqref{eq:log_det_ineq} we obtain:
\begin{eqnarray}
\label{eq:D_converse}
\mathcal{D} &\stackrel{(a)}{\geq}& \mathbf{tr}\left\{ \dfrac{\gamma_w}{\sigma_w^2} \mathbf{Q}\right\} + \mathbf{tr}\left\{\left[\mathbf{I}_{N_a}+ \dfrac{\gamma_w}{\sigma_w^2} \mathbf{Q}\right]^{-1}\right\} - \xi N_w \nn \\
&\stackrel{(b)}{=}& N_w \left(\dfrac{\gamma_w P_{th}}{N\sigma_w^2} + \dfrac{1}{1+  \dfrac{\gamma_w P_{th}}{N\sigma_w^2}} - \xi\right) \nn \\
\end{eqnarray} 
following the same steps as in the achievability proof, we can insure that
\begin{align}
\label{eq:p_th_converse}
P_{th} \leq \dfrac{\sqrt{2}N \xi \sigma_w^2\delta}{\gamma_w\sqrt{n N_w}},
\end{align} 
otherwise, Alice can not ensure that \eqref{eq:metric} is satisfied. Now let Alice set $P_{th}$ equals to the RHS of (\ref{eq:p_th_converse}). Then, we can verify that:
\begin{align}
\label{eq:R_aconverse}
C_{pd}(\delta) \leq \sum_{i=1}^{N} \log \left(1+ \dfrac{\sqrt{2} \xi \sigma_w^2\delta \lambda_b}{\sigma_b^2\gamma_w\sqrt{n  N_w}} \right).
\end{align}
 Now suppose that Alice uses a code of rate $ R(\delta) \leq C_{pd}(\delta)$, then, Bob can obtain
 \begin{align}
 n R(\delta) &\leq n  N \log \left(1+ \dfrac{\sqrt{2}\xi\sigma_w^2\delta \lambda_b}{\sigma_b^2\gamma_w \sqrt{n N_w}} \right) \nn \\
 &\leq   N \sqrt{n/N_w} \dfrac{\sqrt{2}\xi\sigma_w^2\delta \lambda_b}{\sigma_b^2\gamma_w \ln 2}
 \end{align}
bits in $n$ independent channel uses since the inequality is met with equality for sufficiently large $n$. $\blacksquare$
\section{Proof of Theorem \ref{thm:main_nosk}}
\label{app:main_nosk}
We solve (\ref{eq:eq_MIMO_NOSK}) for $\mathbf{W}_w=\mathbf{W}_w^*$. Without loss of generality, assume that $N_w \geq N_a$, hence, $\mathbf{W}_w^* = \gamma_w\mathbf{I}_{N_w}$. For $N_w < N_a$, we can set the $\gamma_w = 0$ for the $N_a -N_w$ minimum eigenvalues of $\mathbf{Q}_*$. 
Now observe that,
\begin{align}
\abs{\mathbf{I}+\mathbf{W}_b \mathbf{Q}} \leq \prod_{i=1}^{N_a} (1+\lambda_i(\mathbf{W}_b)\lambda_i(\mathbf{W}_b))
\end{align} 
with equality if and only if $\mathbf{W}_b$ and $\mathbf{Q}$ have the same eigenvectors and note that this choice does not affect (\ref{eq:D_main_informed}) since $\mathbf{W}_w^*$ is isotropic. Hence, the eigenvectors of $\mathbf{Q}_*$ is the same as the eigenvectors of $\mathbf{W}_b$ which is the same as the right singular vectors of $\mathbf{H}_b$.

Now, we form the following Lagrange dual problem
\begin{align}
\label{eq:lagrange_nosk}
\mathcal{L}= & \log  \abs{\mathbf{I}_{N_a} +  \dfrac{1}{\sigma_b^2} \mathbf{W}_b \mathbf{Q}} + \lambda (\mathbf{tr}(\mathbf{Q}) -P) - \mathbf{tr}(\mathbf{M}\mathbf{Q})\nonumber \\
&+ \eta \left[\log  \abs{\mathbf{I}_{N_a} +  \dfrac{1}{\sigma_w^2} \mathbf{W}_w \mathbf{Q}} - 2\delta^2/n\right],
\end{align}
where $\lambda,\; \eta \geq 0$ are the Lagrange multipliers that penalize violating the power and LPD constraints, respectively, and $\mathbf{M} \succeq \mathbf{0}$ penalizes the violation of the constraint $\mathbf{Q} \succeq \mathbf{0}$. Where the associated KKT conditions can be expressed as:
\begin{align}
\label{eq:KKT_nosk}
&\lambda,\;\eta \geq 0 , \;\;\; \lambda (\mathbf{tr}(\mathbf{Q}) -P) = 0 ,\;\;\; \mathbf{M}\mathbf{Q} = \mathbf{0}, \nonumber \\
& \mathbf{Q} \succeq \mathbf{0}, \;\;\; \mathbf{M} \succeq \mathbf{0},\;\;\; \mathbf{tr}(\mathbf{Q}) \leq P,
\end{align}
where the equality constraints in (\ref{eq:KKT_nosk}) are the complementary slackness conditions. Note that, (\ref{eq:eq_MIMO_NOSK}) is not a concave problem in general. Thus, KKT conditions are not sufficient for optimality. Yet, since the constraint set is compact and convex and the objective function is continuous, KKT conditions are necessary for optimality. Hence, we proceed by finding the stationary points of the gradient of the dual Lagrange problem in the direction of $\mathbf{Q}$ and obtain the stationary points that solve the KKT conditions. By inspecting the objective function at these points, the global optimum can be identified. To identify the stationary points of the Lagrangian (\ref{eq:lagrange_nosk}), we get its gradient with respect to $\mathbf{Q}$ as follows:

\begin{align}
\bigtriangledown_{\mathbf{Q}}\mathcal{L} = &-\left[\mathbf{I}_{N_a} + \dfrac{1}{\sigma_b^2} \mathbf{W}_b \mathbf{Q}  \right]^{-1}\dfrac{\mathbf{W}_b}{\sigma_b^2}  + \lambda \mathbf{I}_{N_a} -\mathbf{M} \nn \\
&+ \dfrac{ \eta \gamma_w}{\sigma_w^2} \left(\left[\mathbf{I}_{N_a} + \dfrac{\gamma_w}{\sigma_w^2} \mathbf{Q}  \right]^{-1}\right)\nn \\
= &-\left[\mathbf{I}_{N_a} + \dfrac{1}{\sigma_b^2} \mathbf{W}_b \mathbf{Q}  \right]^{-1}\dfrac{\mathbf{W}_b}{\sigma_b^2}  + \lambda \mathbf{I}_{N_a} -\mathbf{M}  \nn \\
&+ \eta  \left[\dfrac{\sigma_w^2}{\gamma_w}\mathbf{I}_{N_a} +  \mathbf{Q}  \right]^{-1}
\end{align}
Assume, without loss of generality, that $\mathbf{Q} \succ \mathbf{0}$, then, from $\mathbf{M}\mathbf{Q} = \mathbf{0}$ it follows that $\mathbf{M} = \mathbf{0}$. Now, from $\bigtriangledown_{\mathbf{Q}}\mathcal{L} = 0$ we obtain
\begin{align}
\lambda \mathbf{I}_{N_a} = &\left[\sigma_b^2 \mathbf{W}_b^{-1}  + \mathbf{Q}  \right]^{-1}- \eta  \left[\dfrac{\sigma_w^2}{\gamma_w}\mathbf{I}_{N_a} +  \mathbf{Q}  \right]^{-1}
\end{align}
Since we know that $\mathbf{Q}_*$ and $\mathbf{W}_b$ have the same eigenvectors, hence, the eigenvalues of $\mathbf{Q}_*$, $\Lambda_{ii}$, can be found from
\begin{align}
\lambda = &(\sigma_b^2 \lambda_i^{-1}(\mathbf{W}_b) + \Lambda_{ii})^{-1}- \eta  \left(\dfrac{\sigma_w^2}{\gamma_w} + \Lambda_{ii}\right)^{-1},
\end{align}
as required. $\blacksquare$
\section{Proof of Theorem \ref{thm:srl_mimo_nosk}}
\label{app:srl_mimo_nosk}
\textbf{Achievability.} We observe that
\begin{eqnarray}
\label{eq:log_acheiv}
\log \abs{\mathbf{I}_{N_a}+ \dfrac{1}{\sigma_w^2}\mathbf{W}_w \mathbf{Q}} &\stackrel{(a)}{\leq}& \mathbf{tr} \left\{\dfrac{1}{\sigma_w^2}\mathbf{W}_w \mathbf{Q}\right\}\nn \\
 &=& \sum_{i=1}^{M} (\dfrac{\gamma_w \Lambda_{ii}}{\sigma_w^2}) \nn \\
 &\stackrel{(b)}{\leq}&  \dfrac{ P_{th} M \gamma_w \Lambda_{ii}}{N \sigma_w^2}
\end{eqnarray} 
where $(a)$ follows from the inequality $\log \abs{A} \leq \mathbf{tr}\left\{A - \mathbf{I}\right\}$, $(b)$ follows since the RHS is maximized for $\Lambda_{ii} = P_{th}/N$ for all $i$. For Alice to ensure that the constraint in \eqref{eq:eq_MIMO_NOSK} is satisfied, she needs the RHS of (\ref{eq:log_acheiv}$(b)$) to be less than or equal $2\delta^2/n$. Thus, Alice needs
\begin{align}
\label{eq:p_th_nosk}
P_{th} \leq \dfrac{2 N \sigma_w^2 \delta^2 }{\gamma_w n M }.
\end{align} 
Now let Alice set $P_{th}$ for (\ref{eq:p_th_nosk}) to be met with equality. Given that choice of $P_{th}$, the LPD constraint is met and the rest of the problem is that of choosing the input distribution to maximize the achievable rate. The solution of the problem is then the SVD precoding with conventional water filling \cite{telatar1999capacity,tse2005fundamentals} as follows:
\begin{align}
\label{eq:diag_q_general_nosk}
\Lambda_{ii} = \left\{
	\begin{array}{ll}
		   (\mu - \sigma_b^2 \lambda_i^{-1}(\mathbf{W}_b))^+& \mbox{for } 1 \leq i \leq N \\
		0& \mbox{for } N < i \leq N_a, \
	\end{array}
 \right.        
\end{align} 
where $\lambda_i$ is the $i^{th}$ non zero eigenvalue of $\mathbf{W}_b$. Further,  $\mu$ is a constant chosen to satisfy the power constraint $\mathbf{tr}\{\Lambda\} = P_{th}$. Accordingly, the following rate is achievable over Alice to Bob channel:
\begin{align}
\label{eq:cap_nosk}
R_{pd}(\delta) &= \sum_{i=1}^{N} \log \left(1+ \dfrac{(\mu - \sigma_b^2 \lambda_i^{-1}(\mathbf{W}_b))^+\lambda_i(\mathbf{W}_b)}{\sigma_b^2}\right) \nn \\
&= \sum_{i=1}^{N} \left(\log \left(\dfrac{\mu \lambda_i(\mathbf{W}_b)}{\sigma_b^2}\right)\right)^+.
\end{align}

However, it is technically difficult to expand \eqref{eq:cap_nosk} to check the applicability of the square-root law. Therefore, we obtain an achievable rate assuming that Alice splits $P_{th}$ equally across active eigenmodes of her channel to Bob. Note that, this rate is indeed achievable since it is less than or equal to \eqref{eq:cap_nosk}. Also note that, when Alice to Bob channel is well conditioned, the power allocation in \eqref{eq:diag_q_general_nosk} turns into equal power allocation. Then, the following rate is achievable:
\begin{align}
\label{eq:R_acheiv_nosk}
R(\delta) = \sum_{i=1}^{N} \log \left(1+ \dfrac{2 \sigma_w^2 \delta \lambda_i(\mathbf{W}_b)}{\sigma_b^2\gamma_w n M} \right).
\end{align}
 Now suppose that Alice uses a code of rate $\hat{R} \leq R(\delta)$, then, Bob can obtain
 \begin{align}
 n \hat{R} &\leq n  \sum_{i=1}^{N} \log \left(1+ \dfrac{2\sigma_w^2 \delta^2 \lambda_i(\mathbf{W}_b)}{\sigma_b^2\gamma_w n M} \right) \nn \\
 &\stackrel{(a)}{\leq}   \sum_{i=1}^{N} \dfrac{\sqrt{2}\sigma_w^2 \delta^2 \lambda_i(\mathbf{W}_b)}{M \sigma_b^2\gamma_w \ln 2}
 \end{align}
bits in $n$ independent channel uses, where $(a)$ follows from $\ln(1+x) \leq x$, and note that the inequality is met with equality for sufficiently large $n$. Now, assume that $\lambda_i(\mathbf{W}_b) = \lambda_b$ for all $1 \leq i \leq N$, i.e., Bob's channel is well conditioned. Then,
Bob can obtain
 \begin{align}
 n R(\delta) \leq \dfrac{N\sqrt{2}\sigma_w^2 \delta^2 \lambda_b}{M \sigma_b^2\gamma_w \ln 2}
 \end{align}
bits in $n$ independent channel uses since the inequality is met with equality for sufficiently large $n$.$\blacksquare$

\textbf{Converse.} To show converse, we assume the most favorable scenario for Alice to Bob channel when $\mathbf{H}_b$ is well conditioned. Then, Alice will split her power equally across active eigenmodes of her channel to Bob. Now, 
\begin{eqnarray}
\label{eq:log_nosk}
\log \abs{\mathbf{I}_{N_a}+ \dfrac{1}{\sigma_w^2}\mathbf{W}_w \mathbf{Q}} &\stackrel{(a)}{\geq}& \mathbf{tr} \left\{\mathbf{I}_{N_a} - \left(\mathbf{I}_{N_a}+\dfrac{1}{\sigma_w^2}\mathbf{W}_w \mathbf{Q} \right)^{-1}\right\}\nn \\
 &=& \sum_{i=1}^{N_a} \left( 1 - \dfrac{1}{1+ \dfrac{\gamma_w P_{th}}{N \sigma_w^2}}\right) \nn \\
 &=& \left(\dfrac{ M \gamma_w P_{th}}{N \sigma_w^2 + \gamma_w P_{th} }\right)
\end{eqnarray} 
where $(a)$ follows from the inequality $\log \abs{A} \geq \mathbf{tr}\left\{\mathbf{I} - A^{-1} \right\}$. Following the same steps as in the achievability proof, we can insure that
\begin{align}
\label{eq:p_th_converse_nosk}
P_{th} \leq  \dfrac{2 \xi N \xi \sigma_w^2 \delta^2 }{\gamma_w n M },
\end{align} 
where $\xi = \dfrac{nM }{n M - 2 \delta^2 } > 1$, otherwise, Alice can not meet the LPD constraint. Now let Alice set $P_{th}$ equals to the RHS of (\ref{eq:p_th_converse_nosk}). Then, we can verify that:
\begin{align}
\label{eq:R_aconverse_nosk}
C_{pd}(\delta) \leq \sum_{i=1}^{N} \log \left(1+ \dfrac{2 \xi \sigma_w^2 \delta^2 \lambda_b}{\sigma_b^2\gamma_w n  M } \right).
\end{align}
 Now suppose that Alice uses a code of rate $ R(\delta) \leq C_{pd}(\delta)$, then, Bob can obtain
 \begin{align}
 n R(\delta) &\leq n  N \log \left(1+ \dfrac{2\xi\sigma_w^2 \delta^2 \lambda_b}{\sigma_b^2\gamma_w n M } \right) \nn \\
 &\leq    \dfrac{2 N  \xi\sigma_w^2\delta^2 \lambda_b}{ M \sigma_b^2\gamma_w \ln 2}
 \end{align}
bits in $n$ independent channel uses since the inequality is met with equality for sufficiently large $n$. $\blacksquare$


\end{appendices}

\end{document}